\documentclass[12pt,draftcls,onecolumn]{IEEEtran}

\IEEEoverridecommandlockouts
\usepackage[utf8]{inputenc} 
\usepackage[T1]{fontenc}
\usepackage{url}
\usepackage{ifthen}
\usepackage[cmex10]{amsmath} 

\usepackage{cite}
\usepackage{amsmath,amssymb,amsfonts}
\usepackage{graphicx}
\usepackage{textcomp}
\usepackage{xcolor}
\usepackage{booktabs}
\usepackage{array}
\usepackage{amsthm}
\usepackage[noend, ruled, linesnumbered]{algorithm2e} 
\usepackage{bm}
\usepackage{booktabs}
\usepackage{inconsolata}
\usepackage{setspace}
\usepackage[colorlinks,bookmarksopen,bookmarksnumbered,citecolor=blue, linkcolor=blue, urlcolor=blue]{hyperref}
\usepackage[caption=false,font=footnotesize]{subfig}
\usepackage{multirow}           
\usepackage{threeparttable}     
\usepackage{enumerate}
\usepackage{makecell}
\usepackage{setspace}
\usepackage{xcolor}%
\usepackage{pagecolor}%
\usepackage{amsmath}  
\usepackage{enumitem}

\usepackage{dsfont}

\definecolor{eye-relief}{RGB}{204,232,207}

\SetKwFor{For}{For}{do}{endfor}
\SetKwIF{If}{ElseIf}{Else}{If}{then}{else if}{Else}{endif}
\SetKwFor{While}{While}{do}{endwhile}
\SetKw{Return}{Return}

\def\BibTeX{{\rm B\kern-.05em{\sc i\kern-.025em b}\kern-.08em
		T\kern-.1667em\lower.7ex\hbox{E}\kern-.125emX}}

\newtheoremstyle{noparens} 
{2pt} 
{2pt} 
{\normalfont} 
{1em} 
{\bfseries\itshape} 
{.} 
{ } 
{\thmname{#1} \thmnumber{#2}\mdseries\thmnote{ (\hspace{-0.25pt}#3)}} 
\theoremstyle{noparens}

\newtheorem{theorem}{{\textbf{Theorem}}}
\newtheorem{lemma}[theorem]{{\textbf{Lemma}}}

\newtheorem{corollary}[theorem]{\textbf{Corollary}}

\newtheorem{property}{\textbf{Property}}

\makeatletter

\newcommand{\Rmnum}[1]{\expandafter\@slowromancap\romannumeral #1@}
\makeatother

\newtheoremstyle{roman} 
{2pt} 
{2pt} 
{\normalfont} 
{1em} 
{\bfseries\itshape} 
{.} 
{ } 
{\thmname{#1} \Rmnum{#2}\mdseries\thmnote{ (\hspace{-0.25pt}#3)}} 
\theoremstyle{roman}

\makeatletter
\renewenvironment{proof}[1][\proofname]{\par
	\pushQED{\qed}%
	\normalfont \topsep2\p@\@plus2\p@\relax
	\trivlist
	\item[\hskip\labelsep
	\hskip\dimexpr1.5em\relax 
	\itshape
	#1\@addpunct{:}]\ignorespaces
}{%
	\popQED\endtrivlist\@endpefalse
}
\makeatother

\date{}

\begin{document}


\title{ 
	Efficient Perturbation-Based SC Decoding for Long Polar Codes
	
	\thanks{
		Zhongjun Yang is with the School of Electronics and Information Technology, Sun Yat-sen University, Guangzhou 510006, China (e-mail: yangzhj59@mail2.sysu.edu.cn).
		
		Zhichen Liu is with the School of Mathematics, Sichuan University, Chengdu 610065, China (e-mail: lzc2025@scu.edu.cn).
		
		Li Chen is with the School of Electronics and Information Technology, Sun Yat-sen University, Guangzhou 510006, China, and also with Guangdong Province Key Laboratory of Information Security Technology, Guangzhou, 510006, China (e-mail: chenli55@mail.sysu.edu.cn).
		
		Xianbin Wang and Huazi Zhang are with Hangzhou Research Center, Huawei Technologies Co., Ltd., Hangzhou 310052, China (e-mail: wangxianbin1@huawei.com; zhanghuazi@huawei.com).

		Earlier version of this paper was presented in part at the 2025 IEEE International Symposium on Information Theory (ISIT) [DOI: 10.1109/ISIT63088.2025.11195489].\\
		\textit{(Corresponding author: Li Chen.)}
	}
}

\author{
	\IEEEauthorblockN{
		Zhongjun Yang, \textit{Graduate Student Member, IEEE}, Zhichen Liu, \textit{Member, IEEE}, Li Chen, \textit{Senior Member, IEEE},
		Xianbin Wang, \textit{Member, IEEE}, \\ and Huazi Zhang, \textit{Senior Member, IEEE}
}}

\maketitle

\markboth{IEEE TRANSACTIONS ON INFORMATION THEORY, SUBMITTED}%
{Shell \MakeLowercase{\textit{et al.}}: A Sample Article Using IEEEtran.cls for IEEE Journals}

\maketitle

\begin{abstract}
   Soft information perturbations can significantly improve the successive cancellation (SC) decoding performance of polar codes.
   This is an effective method for realizing low-latency yet high performance for long polar codes, since each perturbation-based SC (PSC) decoding can be performed in parallel.
   However, little theoretical insight has been revealed so far regarding its work rationale.
   This paper provides the theoretical analysis of the soft information perturbation.
   It reveals that PSC decoding improves the SC decoding performance through correcting the erroneous bit estimations, especially the first erroneous bit (FEB).
   It further validates that perturbing the soft information does not degrade the error-correction capability of SC decoding.
   Such revelations lead to the design of optimal perturbations that maximizes the FEB correction probability, resulting in an enhanced PSC decoding performance.   
   Based on this, a novel hybrid PSC (HPSC) decoding is proposed.
   By introducing the random and biased perturbations, the HPSC decoding can better divert from the initial erroneous estimation and converge to the intended one. 
   Our simulation results demonstrate that with the optimal perturbation, the HPSC decoding can achieve a similar performance as the cyclic redundancy check (CRC) aided SC list (CA-SCL) decoding, while avoiding path sorting and expansion.

\end{abstract}

\begin{IEEEkeywords}
	
	Long polar codes, optimal perturbations, perturbation-based decoding, soft information perturbations.
	
\end{IEEEkeywords}

\section{Introduction}

\IEEEPARstart{P}{olar}
codes, invented by Ar{\i}kan \cite{Arikan2009-polar_codes}, are the first theoretically proven capacity-achieving codes, where the proof was conducted over symmetric binary-input discrete memoryless channels (B-DMCs) by the use of successive cancellation (SC) decoding.
However, in the short-to-moderate codeword length regime, SC decoding suffers from poor performance, 
since there exist a substantial portion of subchannels without a polarized capacity. 
Subsequently, its rate profiling becomes less effective.
The SC list (SCL) decoding \cite{SCL-Vardy2015, SCL-Niu2012_SCL} was later proposed to improve the performance. 
It preserves multiple decoding paths over the decoding tree that also represents the codebook.
By utilizing an outer cyclic redundancy check (CRC) code to identify the intended decoding path, the CRC aided SCL (CA-SCL) decoding \cite{SCL-Niu2012-CA_SCL} can further improve the SCL decoding performance. 
However, the above mentioned SCL decoding algorithms \cite{SCL-Vardy2015, SCL-Niu2012_SCL, SCL-Niu2012-CA_SCL, SCL-Stimming2015-hardware_LLR_based, SCL-Yuan2025-Soft_Decoding} require path expansion and sorting operations when estimating each information bit, leading to substantial memory and computation costs.
More importantly, the successive estimation of the above mentioned decoding inevitably leads to the latency challenge, especially for long polar codes.

Addressing the above mentioned challenges, the SC-flip (SCF) decoding was proposed in \cite{SCF-Afisiadis2014_orig}. 
It freezes the possibly erroneous bit to the opposite of its SC decoding estimation.
Afterwards, an additional SC decoding attempt is performed to re-estimate the information bits. 
The SCF decoding yields an enhanced SC decoding performance with a reduced memory cost.
This method works well if the real first erroneous bit (FEB) is flipped.
However, constructing a flipping set (FS) that includes the real FEB remains challenging. 
In \cite{SCF-Afisiadis2014_orig, SCF-Chandesris2016, BPF-Shen2020-high_order, Yang2024-EBPF, SCF-Ercan2018-Partitioned_SCF}, numerous metrics were proposed for the formation of the FS, including the ones based on the magnitude of decoding \textit{a posteriori} log-likelihood ratios (LLRs) \cite{SCF-Afisiadis2014_orig}, and the error probability of bit estimation \cite{SCF-Chandesris2016}.
In \cite{SCF-Ercan2019}, the thresholded SCF (TSCF) decoding was proposed. 
By utilizing the LLR-threshold, it can further reduce the searching space for possible erroneous positions. 
Based on the constructed FS \cite{SCF-Chandesris2016}, the dynamic SCF (DSCF) decoding was proposed \cite{SCF-Chandesris2018-Dynamic_SCF}.
By utilizing the multi-level flipping strategy, it can correct multiple erroneous bits within a single decoding attempt.
However, despite the above mentioned SCF-based decoding algorithms performing well, efficiency of the bit-flipping vanishes rapidly as the codeword length increases, especially when the codeword length exceeds 4096. 
Moreover, performance gain of the SCF decoding is still exchanged by additional decoding attempts that bring substantial complexity and latency costs. 
They are becoming prohibitive for long polar codes.

Alternatively, perturbing soft received information that triggers several decoding attempts is another effective method for achieving stable performance improvements \cite{SCL-Wang2023-Perturbation, SCL-Wang2024-Adaptive_Perturbation, Yang2025-HPSC}. 
Our latest work \cite{Yang2025-HPSC} showed that, by introducing artificial noise to the received symbols, the perturbation-based decoding enables the polar decoder to deviate from erroneous bit estimations and converge to the intended one.
This is an effective approach to realize a low decoding latency, since all perturbing attempts can be performed in parallel. 
Recently, the theoretical insight of soft information perturbations was provided in \cite{Yang2025-IPSC}. 
It reveals that perturbing soft received information can be equivalently interpreted as perturbing the decoding \textit{a posteriori} information.
Therefore, perturbations are not restricted to being done on the received information, which provides greater flexibility in both the design and implementation of perturbation strategies.
In \cite{SC-Liu2025-Performance_Analysis_PSC}, it was proven that by the use of soft information perturbations, the erroneous bit estimation within the SC decoding failure can be corrected with a probability of $0.5$.
Inspired by this, based on \cite{Yang2025-IPSC}, an improved perturbation-based SC (PSC) decoding \cite{Yang2026-IPSC&HPSC} was proposed.
During each new SC decoding attempt, it takes turns to flip the possibly FEB, while the subsequent erroneous estimations are corrected through random perturbations. 
However, only the random perturbation has been investigated so far.
Little theoretical understanding has been developed for the perturbation-based decoding schemes, resulting in both its designs and operations being empirical.

Building upon our work on the perturbation-based decoding schemes in \cite{Yang2025-HPSC}, this paper revisits the soft information perturbation theoretically.
It reveals that introducing perturbations does not degrade the error-correction capability of polar decoding.
Major contributions of this work are as the follows:

\begin{enumerate}[label=\arabic*)]
	
	\item Characteristics of the soft information perturbation are provided. It reveals that the PSC decoding improves SC decoding performance by correcting the erroneous bit estimations, especially the FEB. 
	We also show that, with appropriate perturbations, the correct SC decoding estimations can be maintained during the PSC decoding.
	This indicates that perturbing the soft information does not degrade the SC decoding performance.

	\item A probabilistic model is further proposed to evaluate the perturbation efficiency and characterize the feasible perturbation power range. 
	Moreover, it validates that the probability of successfully correcting the FEB within a failed SC decoding event exhibits a non-monotonic dependence on the perturbation power.	
	Subsequently, by the use of Taylor approximation and simulated annealing, two novel selection methods are proposed to efficiently determine the optimal perturbation.

	\item 
	To further improve the PSC decoding performance, a novel hybrid PSC (HPSC) decoding is proposed.
	It is a two-phase decoding that integrates both the random and biased perturbations. 
	The latter can further recover the message if the former PSC decoding with only random perturbations fails. 
		
	\item 
	Decoding performance of the proposed algorithm is further analyzed numerically. 
	By utilizing the optimal perturbation, the HPSC decoding outperforms the state-of-the-art TSCF decoding.
	For polar codes with various lengths and rates, the proposed decoding consistently delivers robust performance improvements, while maintaining a comparable error-correction performance to the CA-SCL decoding but offering a simpler hardware operation.
	
\end{enumerate}

The rest of this paper is organized as follows. 
Section \ref{sec2} revisits some preliminaries for polar codes. 
Section \ref{sec4} characterizes the soft information perturbation and further proposes a probabilistic model for evaluating the perturbation efficiency.
In Section \ref{sec5}, two novel selection methods are proposed to determine the optimal perturbation.
Section \ref{sec3} proposes the HPSC decoding and analyzes its complexity.
Simulation results on decoding performance and complexity are presented in Section \ref{sec7}. 
Finally, Section \ref{sec8} concludes the paper.

\section{Preliminaries} \label{sec2}

This section provides the fundamental concepts of polar code, along with its SC and PSC decoding.
In this paper, we use $\mathcal{S}[i]$ to denote the $i$-th element of an ordered set $\mathcal{S}$, where $0 \le i < |\mathcal{S}|$.
Let $s_i^j$ denote the vector $( {{s_i}, {s_{i+1}}, \ldots ,{s_{j}}} )$ and $\mathbf{0}_l$ denote the all-zero vector with length-$l$, where $0 \le i \le j $ and $l \ge 0$.
Let $Q(\iota ) = \frac{1}{{\sqrt {2\pi } }}\int_\iota ^{ + \infty } {\exp \left( { - \frac{{{v^2}}}{2}} \right)} {\mkern 1mu} {\text{d}}v$ denote the cumulative distribution function (CDF) of the standard normal distribution.
Generally, it is named as $Q$-function.
Further let $ \oplus $ denote the bitwise XOR operation.

\subsection{Polar Codes}\label{sec2-1-polar}

Polar codes originate from the channel combining and splitting, i.e., the channel capacity polarization. 
Let $\mathcal{P}(N,K)$ denote the polar code with length $N$ and dimension $K$, with rate $R=K/N$.
Based on channel polarization theorem \cite{Arikan2009-polar_codes}, the $N$ subchannels used for transmitting the coded bits exhibit polarized capabilities as $N \to \infty $.
By rate profiling that categorizes the subchannels based on their capacities, the $K$ most reliable subchannels are used to transmit information bits.
The remaining $N-K$ least reliable subchannels are used to transmit redundancy, which are often frozen as zeros and called frozen bits.
Let $\mathcal{A}$ and $\mathcal{A}^{c}$ denote the index sets of information bits and frozen bits, respectively.
Note that $\mathcal{A} \cup {\mathcal{A}^c} = \left\{ {0,1, \ldots ,N - 1} \right\}$ and $\mathcal{A} \cap {\mathcal{A}^c} = \emptyset$.

For encoding a $\mathcal{P}(N,K)$ codeword, the $K$ information bits and $N-K$ frozen bits form an $N$-bit message vector $u_0^{N - 1}$ for the encoder.
Codeword $c_0^{N - 1}$ is generated by
\begin{equation}	\label{eq2-encode-1}
	c_0^{N - 1} = u_0^{N - 1}{{\bf{G}}_{N}},
\end{equation}
where ${{\bf{G}}_{N}} = {\bf{F}}^{ \otimes n}$, $\bf{F}={[\begin{smallmatrix} 1 & 0 \\ 1 & 1 \end{smallmatrix}]}$ is the Ar{\i}kan kernel, ${ \otimes n}$ denotes the $n$-th Kronecker product, and $N=2^n$. 
It is assumed that codeword is transmitted over the additive white Gaussian noise (AWGN) channel using binary phase shift keying (BPSK).
Let $y_0^{N-1} \in \mathbb{R}^N$ denote the received symbol vector, whose entries are defined as
\begin{equation}	\label{eq-awgn}
	{y_i} = (1 - 2{c_i}) + {z_i},
\end{equation}
where $z_{i} \sim \mathcal{N} (0, \sigma^{2}) $ is the channel noise with power $\sigma^{2}$.
In this work, the channel observations are represented and utilized as the received symbol LLRs.
Let us further define $\mathcal{L}\left(y_{i}\right)$ as the converted LLR of received symbol $y_i$, which is
\begin{equation}	\label{eq-cho}
	\mathcal{L}\left(y_{i}\right) = \ln \frac{{\Pr \left( {{y_i} \mid {c_i} = 0} \right)}}{{\Pr \left( {{y_i} \mid {c_i} = 1} \right)}},
\end{equation}
where $\Pr(y_i \mid c_i)$ denotes the channel observation of $c_i$.
Note that with BPSK, it can be simplified into $\mathcal{L}\left(y_{i}\right) = 2 y_{i} / \sigma^{2}$ \cite{SCL-Stimming2015-hardware_LLR_based}.
In this paper, channel signal-to-noise ratio (SNR) is defined as $E_\text{b}/N_\text{0}$, where $E_\text{b}$ denotes the transmitted energy per information bit and $N_\text{0}$ is the single-side noise power spectral density.

\subsection{SC Decoding}\label{sec2-2-SC}

SC decoding can be viewed as a greedy search process estimating information bits through traversing over the binary tree.
The tree consists of $n+1$ layers and $N$ leaves.
For the decoding, estimation of the information bit $u_i$, denoted as $\hat{u}_i$, is made based on its decoding \textit{a posteriori} LLR that is denoted as $\mathcal{L}\left(u_{i}\right)$.
For the frozen bits, estimations are fixed as zero.
That says
\begin{equation} \label{eq2-decision}
	{{\hat u}_i} = \left\{ {\begin{array}{*{20}{l}}
			{1,} &  { \text{if} \ {\cal L}\left( {{u_i}} \right) \le 0 \ \text{and} \ i \in \mathcal{A}}; \\
			{0,} & {\text{otherwise}.}
	\end{array}} \right.
\end{equation}

\begin{figure}[t]
	\centering 
	\includegraphics[scale=0.37]{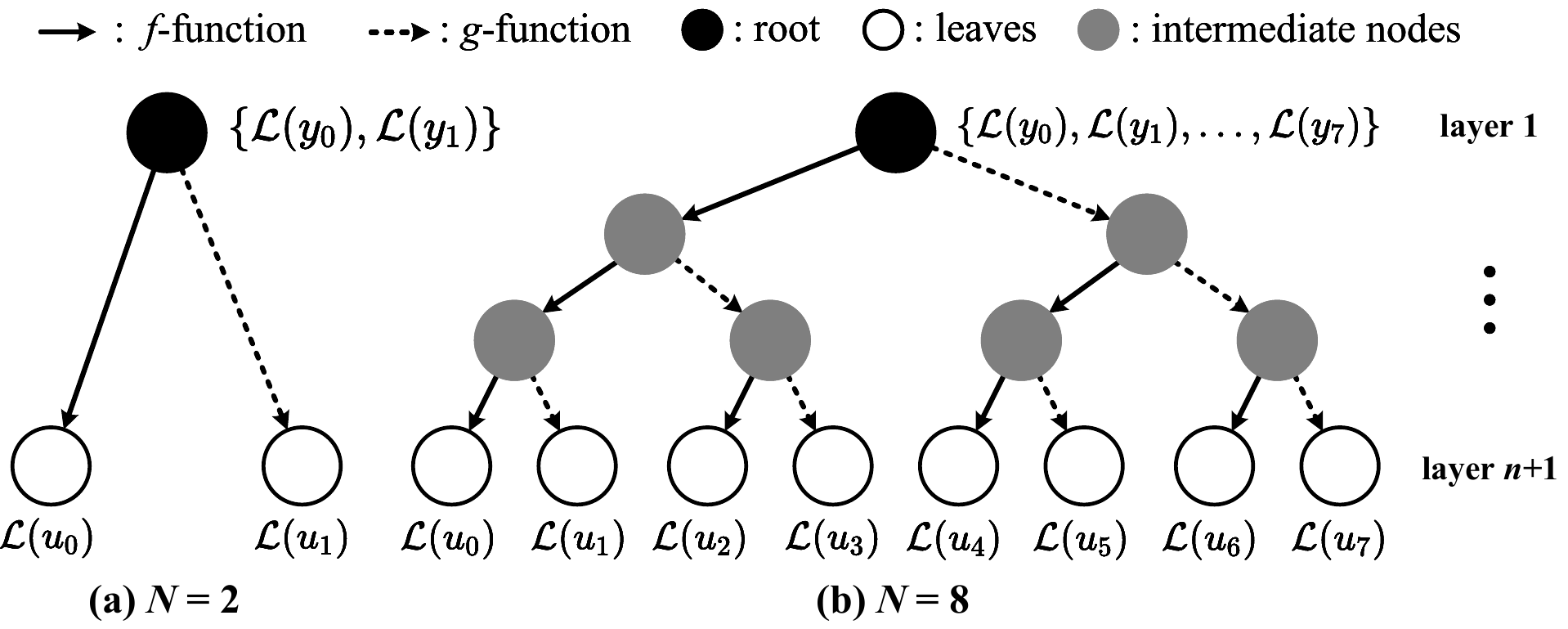}
	\caption{ 
		Binary tree for polar codes, where $N \in \left\{ {2, 8} \right\}$. 
	}  
	\label{Fig-tree}
\end{figure}

Let us demonstrate the decoding of a polar code with $N=2$, which has the simplest tree structure.
In this case, the decoding tree dissolves onto two leaves with two layers, as shown in Fig. \ref{Fig-tree}(a).
After obtaining the channel observations of $y_0^1$ and further determining $\mathcal{L}\left(y_{0}\right)$ and $\mathcal{L}\left(y_{1}\right)$ as in (\ref{eq-cho}), SC decoding performs as the follows.
Initially, the $f$-function is performed to determine ${\cal L}\left( {{u_0}} \right)$ as \cite{SCL-Stimming2015-hardware_LLR_based}
\begin{equation} \label{eq2-f}
	{\cal L}\left( {{u_0}} \right) = {\rm{sign}}\left( {{\cal L}\left( {{y_0}} \right) {\cal L}\left( {{y_1}} \right)} \right) \cdot \min \left( {\left| {{\cal L}\left( {{y_0}} \right)} \right|,\left| {{\cal L}\left( {{y_1}} \right)} \right|} \right).
\end{equation}
With $\hat{u}_0$ made based on (\ref{eq2-decision}), the decoder invokes the $g$-function to determine ${\cal L}\left( {{u_1}} \right)$ as \cite{SCL-Stimming2015-hardware_LLR_based}
\begin{equation} \label{eq2-g}
	{\cal L}\left( {{u_1}} \right) = \left( {1 - 2{{\hat u}_0}} \right) {\cal L}\left( {{y_0}} \right) + {\cal L}\left( {{y_1}} \right).
\end{equation}

With $N \ge 2$, the SC decoding performs the $f$ and $g$ functions recursively.
As shown in Fig. \ref{Fig-tree}(b), when $N=8$, computing $\mathcal{L}\left(u_{1}\right)$ invokes the $f$-function twice with an additional $g$-function.
It terminates once all information bits have been estimated through visiting all leaves.
Note that once obtaining $\hat{u}_{0}^{N-1}$, estimation of the codeword, i.e., $\hat{c}_{0}^{N-1}$, can be further determined as in (\ref{eq2-encode-1}).

\subsection{PSC Decoding and Its Theoretical Insights} \label{sec2-3-PSC}

In PSC decoding, the original SC decoding is first performed. 
If it fails, a prespecified number of extra SC decoding attempts that are triggered through soft information perturbations will be performed.
Let $T_\text{max}$ denote the maximum number of predefined decoding attempts.
During each attempt, the received symbol $y_i$ is perturbed by introducing a random perturbation $z_i^\prime  \sim {\cal N}(0,\sigma _p^2)$, where $\sigma _p^2$ denotes the perturbation power.
Let ${{y}_i^{\prime}}$ denote the perturbed version of ${y}_i$.
Based on (\ref{eq-awgn}), ${y}_i^{\prime}$ can be computed as \cite{Yang2025-HPSC, SCL-Wang2023-Perturbation}
\begin{equation} \label{eq3-1}
	{{y}_i^{\prime}} = {y_i} + {{z}^{\prime}_i}.
\end{equation}
Afterward, based on the perturbed ${y^\prime} _0^{N - 1}$, SC decoding is performed to obtain the new estimations of ${u} _0^{N - 1}$.
This decoding process terminates only when the correct estimations are obtained, or all the decoding attempts have been exhausted.

Our latest work \cite{Yang2026-IPSC&HPSC, SC-Liu2025-Performance_Analysis_PSC} have provided some theoretical insights into soft information perturbations.
To facilitate understanding, let $\hat x_i$ and $\mathcal{L}'({u_i})$ denote the bit estimation of information bit $u_i$ and its decoding \textit{a posteriori} LLR delivered by the PSC decoding, respectively.
Note that $\hat x_i$ is a decision made based on $\mathcal{L}'({u_i})$, as in (\ref{eq2-decision}).
If $\hat x_0^{N - 1} = u_0^{N - 1}$, it signifies a successful PSC decoding.
Let ${\delta _i}$ denote the number of required $g$-functions when computing the SC decoding estimation $\hat {u}_i$.
Further let $\sigma _i^2 = {2^{{\delta _i}}}\sigma _p^2 $.
Let $z_i^*$ denote the perturbation that has been directly introduced into the SC decoding \textit{a posteriori} LLR $\mathcal{L}({u_i})$.
According to \cite{Yang2025-IPSC}, it follows that $\mathcal{L}\left( {z_i^*} \right) = 2z_i^*/{\sigma ^2}$.
The following Theorem \ref{theorem-perturbation-all} characterizes the perturbed decoding \textit{a posteriori} LLR $\mathcal{L}^\prime({u_i})$.
It reveals that perturbing the received LLRs $\mathcal{L}\left( {{y_i}} \right)$ can be equivalently interpreted as perturbing the decoding \textit{a posteriori} LLRs $\mathcal{L}\left( {{u_i}} \right)$.
This implies that perturbations are not restricted to being conducted onto the received information.

\begin{theorem} [\cite{Yang2026-IPSC&HPSC, SC-Liu2025-Performance_Analysis_PSC}] \label{theorem-perturbation-all}
	
	Consider a $\mathcal{P}(N, K)$ code with the PSC decoding, the perturbed decoding \textit{a posteriori} LLR ${\cal L}' \left( {{u_i}} \right)$ can be computed by
	\begin{equation}	\label{eq4-perturbation-all}
		{\cal L}'\left( {{u_i}} \right) = {\cal L}\left( {{u_i}} \right) + {\cal L} ( {z_i^{*}} ),
	\end{equation}
	where $z_i^{*} \sim \mathcal{N}(0, \sigma _i^2)$ and $\mathcal{L}\left( {z_i^*} \right) \sim \mathcal{N}(0,4\sigma _i^2/{\sigma ^4})$.
	
\end{theorem}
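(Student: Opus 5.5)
Since both $f$ and $g$ are recursive, the plan is an induction over the layers of the decoding tree, starting from the base case $N=2$ of Fig.~\ref{Fig-tree}(a) and then iterating the two-leaf step. At the channel layer, (\ref{eq3-1}) together with $\mathcal{L}(y_i)=2y_i/\sigma^2$ gives
\begin{equation*}
\mathcal{L}(y_i')=\mathcal{L}(y_i)+2z_i'/\sigma^2 .
\end{equation*}
So each received LLR carries an additive Gaussian term $\mathcal{L}(z_i')\sim\mathcal{N}(0,4\sigma_p^2/\sigma^4)$, and these terms are independent across $i$. This is the form (\ref{eq4-perturbation-all}) with $\delta=0$, and it serves as the base case.

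The inductive step tracks how an additive Gaussian perturbation passes through each kernel operation. For the $g$-function (\ref{eq2-g}), the perturbed output is
\begin{equation*}
(1-2\hat x_0)\big(\mathcal{L}(y_0)+\mathcal{L}(z_0')\big)+\mathcal{L}(y_1)+\mathcal{L}(z_1').
\end{equation*}
If the partial-sum estimate agrees with the SC estimate ($\hat x_0=\hat u_0$), this equals $\mathcal{L}(u_1)$ plus $\pm\mathcal{L}(z_0')+\mathcal{L}(z_1')$. Because the two perturbations are independent and the Gaussian law is symmetric, their sum is Gaussian with the variance doubled. Each $g$-function therefore doubles the perturbation variance, which is exactly the factor $2^{\delta_i}$.

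For the $f$-function (\ref{eq2-f}), I would argue that the min-sum output carries, to first order, the perturbation of whichever input achieves the minimum magnitude, provided both signs are preserved. Thus $f$ passes a single Gaussian term through without changing its variance, up to a sign flip, which is harmless by symmetry. Composing these two rules along the root-to-leaf path of $u_i$ gives $z_i^*\sim\mathcal{N}(0,2^{\delta_i}\sigma_p^2)=\mathcal{N}(0,\sigma_i^2)$. The LLR form follows from the linear map $\mathcal{L}(z)=2z/\sigma^2$: it gives $\mathcal{L}(z_i^*)\sim\mathcal{N}(0,4\sigma_i^2/\sigma^4)$, consistent with \cite{Yang2025-IPSC}.

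The main obstacle is the $f$-function, because min-sum is nonlinear. A perturbation can change which input attains the minimum, or flip a sign, and then the output is no longer exactly $\mathcal{L}(u_0)$ plus a single Gaussian term. I would first try to treat these as the regime where the perturbation is small relative to the LLR magnitudes, or equivalently absorb the mismatch into the definition of $z_i^*$ as the difference $\mathcal{L}'(u_i)-\mathcal{L}(u_i)$. I would then justify the Gaussian law by conditioning on the event that the argmin and the signs are preserved. A second, milder issue is that the $g$-step assumes $\hat x_0=\hat u_0$. I would handle this by stating the result conditionally on the prior perturbed decisions matching the SC ones. That is the setting in which the theorem is later used to study the first erroneous bit, since before that bit all previous estimates agree.
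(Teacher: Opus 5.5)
The paper states this theorem without proof (it is imported from the cited works), so your argument can only be judged on its own terms. Your mechanism is the right one and gives the correct variance. The leaf perturbation is a signed sum of channel perturbations whose count doubles at each $g$ and is thinned to one branch at each $f$, giving $2^{\delta_i}$ terms.

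The gap is the step meant to make $z_i^*$ Gaussian. Conditioning on the event that the $f$-operations keep their argmins and signs, or that $\hat x_0^{i-1}=\hat u_0^{i-1}$, means conditioning on events defined by the perturbations themselves. That destroys the law you want. For $N=2$ with $\mathcal{L}(y_0)=1$, $\mathcal{L}(y_1)=3$ and $w_k:=2z_k'/\sigma^2$, the event that the argmin and sign are preserved is $\{|1+w_0|<3+w_1\}$. On it the output perturbation is $w_0$, but $w_0$ conditioned on this event is a mixture of truncated Gaussians, not $\mathcal{N}(0,4\sigma_p^2/\sigma^4)$. The same happens at a $g$-operation whose sign $1-2\hat x$ comes from an information bit, because $\hat x$ is computed from the very perturbations that $g$ then adds.

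Your other remedy, defining $z_i^*$ through the exact difference $\mathcal{L}'(u_i)-\mathcal{L}(u_i)$, makes (\ref{eq4-perturbation-all}) a tautology but loses Gaussianity. Take $N=2$, $i=0$ and fixed $y_0^1$. Then $|\mathcal{L}'(u_0)|\le\min_k|\mathcal{L}(y_k)+w_k|$, so a deviation of size $t$ forces both $|\mathcal{L}(y_k)+w_k|$ above $t-|\mathcal{L}(u_0)|$. The tail therefore decays like $e^{-t^2\sigma^4/(4\sigma_p^2)}$, not like the $e^{-t^2\sigma^4/(8\sigma_p^2)}$ of the claimed law. So the statement is false as an exact identity, and neither remedy proves it.

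What can be proved is an on-event version in which $z_i^*$ is defined without reference to perturbed decisions. This is also what the proof of Lemma~\ref{lemma-probability_correct} actually uses: Gaussianity of $\mathcal{L}(z_i^*)$ and its independence from $\mathcal{L}(u_i)$. Fix $y_0^{N-1}$ and freeze every selection at its unperturbed value.
At each $f$-operation, keep the perturbation of the input with smaller unperturbed magnitude, multiplied by the unperturbed sign of the other input. This uses $f(a,b)=\mathrm{sign}(b)\,a$ for $|a|\le|b|$, so only the argmin and the larger input's sign need to be preserved, not both signs.
At each $g$-operation, use the SC partial sums $\hat u$.

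Distinct entries of a node's LLR vector depend on disjoint blocks of channel symbols. This is what justifies your independence claim beyond the channel layer. Hence $\mathcal{L}(z_i^*)=\sum_{k\in S_i}s_k\,\mathcal{L}(z_k')$, with $|S_i|=2^{\delta_i}$ distinct indices and signs $s_k\in\{\pm1\}$, all determined by $y_0^{N-1}$. Conditionally on $y_0^{N-1}$ this sum is exactly $\mathcal{N}(0,4\sigma_i^2/\sigma^4)$. It is therefore independent of $y_0^{N-1}$, and of $\mathcal{L}(u_i)$, without any conditioning.

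Because min-sum is piecewise linear, (\ref{eq4-perturbation-all}) then holds exactly, not merely to first order, on an event $\mathcal{E}_i$. That event requires every $f$-operation in the computation of $\mathcal{L}(u_i)$ to keep its argmin and larger-input sign, and requires $\hat x_0^{i-1}=\hat u_0^{i-1}$. Moreover $\Pr(\mathcal{E}_i^c\mid y_0^{N-1})\to0$ as $\sigma_p^2\to0$. That is the sense in which the theorem can be stated and proved.
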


Despite its effectiveness, the existing PSC decoding still has several drawbacks.
The research of \cite{Yang2025-HPSC} has shown that, for the PSC decoding, little performance gain can be achieved when ${T_\text{max }} \ge 30$.
This is because with only random perturbations, characterized by $z_i' \overset{\mathrm{i.i.d.}}{\sim} \mathcal{N}(0,\sigma_p^2)$ or $z_i^* \overset{\mathrm{i.i.d.}}{\sim} \mathcal{N}(0,\sigma_i^2)$ , the PSC decoding is prone to being trapped in the same erroneous decoding trajectory, thereby producing the same erroneous bit estimations.
This limits error-correction performance of the PSC decoding.
Moreover, little theoretical understanding has been developed for the soft information perturbation, resulting in its designs remaining empirical.
In particular, only random perturbation has been considered so far. 
It remains to be understood how the PSC decoding performs and what is a better perturbation strategy.

\section{Impact of Perturbations on SC Decoding Failure}\label{sec4}

This section visits soft information perturbations theoretically. 
It reveals that the perturbation-based decoding schemes improve the polar decoding performance by correcting the erroneous bit estimations, especially for the FEB. 
Moreover, the theoretical analysis shows that, within the effective perturbation range, the correct SC decoding estimations can be maintained during the PSC decoding.
This implies that perturbing the soft information does not degrade the SC decoding performance. 
Based on this observation, a probabilistic model is further proposed to characterize the effective perturbation and evaluate its efficiency.

\subsection{Effective Perturbations}\label{sec4-1-Charact}

This subsection characterizes the effective perturbation that can enhance the error-correction performance of SC decoding. 
In this paper, Gaussian approximation (GA) \cite{GA-Peter2012} is utilized to derive the characteristics of PSC decoding.
It is assumed that an zero codeword is transmitted over the AWGN channel using BPSK.
That says $u_0^{N - 1} = {{\mathbf{0}}_N}$ and $c_0^{N - 1} = {{\mathbf{0}}_N}$.
In this case, based on (\ref{eq-awgn}) and (\ref{eq-cho}), it follows that ${y_i} \sim \mathcal{N}\left( {1,{\sigma ^2}} \right)$ and $\mathcal{L}(y_i) \sim \mathcal{N}\left( {2}/{\sigma^2}, {4}/{\sigma^2} \right)$.
The research of \cite{GA-Peter2012} has proven that under GA, the SC decoding \textit{a posteriori} LLRs of information bits follow Gaussian distribution as $\mathcal{L}({u_i}) \sim \mathcal{N}\left( {{\mu _i},2{\mu _i}} \right)$, where $\mu _i$ denotes the mean of $\mathcal{L}({u_i})$ and $i\in \mathcal{A}$.

Let us consider the case where the correct SC decoding estimations remain unchanged during the PSC decoding. 
Note that, due to the successive estimation nature, only in this case can the soft information perturbation improve the SC decoding performance \cite{SC-Liu2025-Performance_Analysis_PSC}.
Under GA, it follows that
\begin{subequations} \label{eq4-1-1}
	\begin{align}
		& \Pr \left( \mathcal{L}'\left( u_i \right) \ge 0 \mid \mathcal{L}\left( u_i \right) \ge 0 \right) = 1, \\
		& \Pr \left( \mathcal{L}'\left( u_i \right) < 0 \mid \mathcal{L}\left( u_i \right) \ge 0 \right) = 0.
	\end{align}
\end{subequations}
The above equations imply that, during the PSC decoding, if $\hat{u}_i = u_i$, there exists $\hat{x}_i = \hat{u}_i$ for all $i\in \mathcal{A}$.
This indicates that perturbing the information does not introduce additional SC decoding errors. 
Conversely, it may correct the erroneous SC decoding estimations, such as the FEB, which will be discussed in Sec. \ref{sec4-2-Evalut}.
Therefore, with the help of soft information perturbations, the error-correction capacity of SC decoding can be improved.
In this paper, perturbations that produce the condition in (\ref{eq4-1-1}) are categorized as the effective perturbation.

Inspired by the above analyses and (\ref{eq4-1-1}), the effective perturbation range can be further characterized.
Based on Theorem \ref{theorem-perturbation-all}, under GA, since $\mathcal{L}({u_i}) \sim \mathcal{N}\left( {{\mu _i},2{\mu _i}} \right)$, there exists $\mathcal{L}'\left( {{u_i}} \right) \sim \mathcal{N}({\mu _i},2{\mu _i} + 4\sigma _i^2/{\sigma ^4})$.
Further, based on the three-sigma rule of Gaussian distribution \cite{Book-Papoulis2002-Stochastic_Prcesses}, when the condition of (\ref{eq4-1-1}) holds, it follows that
\begin{equation}\label{eq4-1-2}
	{\mu _i} - 3\sqrt {2{\mu _i} + \frac{4}{{{\sigma ^4}}}\sigma _i^2}  \ge 0.
\end{equation}
Note that given ${\mu _i} $, the effective perturbation power $\sigma _i^2$, i.e., $\sigma _p^2$, can be further determined.\footnote{As mentioned in Sec. \ref{sec2-3-PSC}, it follows that $\sigma_p^2 = 2^{-\delta_i}\sigma_i^2$}
Let $Z_i$ denote the Bhattacharyya parameter of the $i$-th polarized subchannel, where $i\in \mathcal{A}$.
The following Lemma \ref{lemma-LLR} reveals the characteristics of the SC decoding \textit{a posteriori} LLR $\mathcal{L}({u_i})$ when $N \to \infty $.

\begin{lemma} [\cite{SC-Liu2025-Performance_Analysis_PSC}] \label{lemma-LLR}
	
	Consider a $\mathcal{P}(N, K)$ code with the information set $\mathcal{A}$.
	For all $i \in \mathcal{A}$, it satisfies ${Z_i} < {2^{ - {N^\beta }}}$, where $\beta  \in \left( {\tfrac{1}{3},\tfrac{1}{2}} \right)$.
	During SC decoding, there exists 
	\begin{equation}	\label{eq4-LLR-mean}
		{\mu _i} \ge {N^\beta }.
	\end{equation}
	
\end{lemma}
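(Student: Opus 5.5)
The plan is to link the Bhattacharyya parameter $Z_i$ of the $i$-th polarized subchannel to the mean $\mu_i$ of its decoding \textit{a posteriori} LLR under GA, and then use the polarization rate bound $Z_i < 2^{-N^\beta}$ assumed in the statement. Under GA, $\mathcal{L}(u_i) \sim \mathcal{N}(\mu_i, 2\mu_i)$ is a symmetric Gaussian LLR. For such a consistent LLR, the Bhattacharyya parameter has the closed form
\begin{equation*}
	Z_i = \mathbb{E}\left[ e^{-\mathcal{L}(u_i)/2} \right] = \exp\left( -\frac{\mu_i}{2} + \frac{2\mu_i}{8} \right) = e^{-\mu_i/4}.
\end{equation*}
It comes from the moment generating function of a Gaussian evaluated at $-1/2$. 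This step is routine, and I will carry it out first.

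Next I substitute this into the hypothesis. From $e^{-\mu_i/4} = Z_i < 2^{-N^\beta}$, taking logarithms gives $\mu_i/4 > N^\beta \ln 2$, that is, $\mu_i > 4\ln 2 \cdot N^\beta \approx 2.77\, N^\beta$. Since $4\ln 2 > 1$, this yields $\mu_i \ge N^\beta$ for every $i \in \mathcal{A}$, which is (\ref{eq4-LLR-mean}). The range $\beta \in (\tfrac{1}{3}, \tfrac{1}{2})$ is only needed to guarantee, via Ar{\i}kan's polarization rate result, that an information set $\mathcal{A}$ with all $Z_i < 2^{-N^\beta}$ exists for rates below capacity as $N \to \infty$. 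I will cite that fact rather than reprove it.

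The main obstacle is justifying that the GA Bhattacharyya parameter $e^{-\mu_i/4}$ coincides with, or at least upper bounds in the right direction, the true $Z_i$ used in the hypothesis. The statement's hypothesis concerns the true subchannel $W_N^{(i)}$, while $\mu_i$ is a GA quantity. I would try first to adopt the paper's standing convention that GA describes $\mathcal{L}(u_i)$ exactly, so that $Z_i$ is computed from that law as above. If a more careful treatment is wanted, I would instead use the relation $Z(W) \ge e^{-\mathbb{E}[\mathcal{L}]/4}$-type bounds from Jensen's inequality applied to the convex map $x \mapsto e^{-x/2}$. Jensen gives $Z_i = \mathbb{E}[e^{-\mathcal{L}/2}] \ge e^{-\mu_i/2}$. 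Hence $e^{-\mu_i/2} < 2^{-N^\beta}$, which gives $\mu_i > 2\ln 2 \cdot N^\beta \ge N^\beta$ without invoking GA at all. This Jensen route is in fact the cleaner one, and I would present it as the main argument, with the GA computation as a sharper remark.
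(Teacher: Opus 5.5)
The paper gives no proof of this lemma; it imports it from \cite{SC-Liu2025-Performance_Analysis_PSC}. There is therefore no in-paper argument to compare against, and your proposal has to stand on its own. It does: both of your routes are correct.

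The GA computation $Z_i=\mathbb{E}[e^{-\mathcal{L}(u_i)/2}]=e^{-\mu_i/4}$, giving $\mu_i>4\ln 2\cdot N^{\beta}$, is the one tied to the quantity the paper actually uses. That quantity is the GA parameter computed offline and fed into (\ref{eq4-1-2}) and Lemmas~\ref{lemma-probability_correct} and~\ref{lemma-probability_stable}.

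The Jensen route $Z_i\ge e^{-\mu_i/2}$ is genuinely different. It buys distribution-freeness, at the cost of the weaker constant $2\ln 2$. Be precise about its scope before making it the main proof:
\begin{itemize}
\item It bounds the true mean of the exact LLR of $W_N^{(i)}$. That mean coincides with the paper's $\mu_i$ only under the convention that GA is exact, because the GA recursion of \cite{GA-Peter2012} matches $\mathbb{E}[\tanh(\mathcal{L}/2)]$ at check nodes, not the mean.
\item The identity $Z_i=\mathbb{E}[e^{-\mathcal{L}(u_i)/2}]$ needs the exact boxplus at $f$-nodes, not the min-sum rule (\ref{eq2-f}) used by the paper's decoder.
\item It needs a standard symmetry fact that you should state explicitly. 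For a symmetric channel, the law of the genie-aided LLR at position $i$ given $u_i=0$ does not depend on the other message bits. Hence the all-zero message yields the same law as $W_N^{(i)}$ with input $0$.
\end{itemize}

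There are two minor slips. First, the phrase ``$Z(W)\ge e^{-\mathbb{E}[\mathcal{L}]/4}$-type bounds'' is false in general. For a BSC with crossover probability $0.1$, $Z=0.6<e^{-\mathbb{E}[\mathcal{L}]/4}\approx 0.644$. Only the exponent $1/2$ from Jensen is universal, and that is the one you actually use. Second, the lower limit $\beta>\tfrac13$ plays no role here. The rate-of-polarization result holds for every $\beta<\tfrac12$, and since the lemma assumes $Z_i<2^{-N^{\beta}}$, no existence statement enters the implication at all.
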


Based on Lemma \ref{lemma-LLR} and (\ref{eq4-1-2}), it can be obtained that 
\begin{equation}\label{eq4-1-3}
	\sigma _i^2 \le \frac{{{\sigma ^4}}}{{36}}\left( {{N^{2\beta }} - 18{N^\beta }} \right). 
\end{equation}
Since $0 \le {\delta _i} \le n$ and $\sigma _i^2 = {2^{{\delta _i}}}\sigma _p^2 $, we have $\sigma _p^2 \le \sigma _i^2 \le N\sigma _p^2$. 
Consequently, based on (\ref{eq4-1-3}), power of the effective perturbation $\sigma _p^2$ shall satisfy
\begin{equation}\label{eq4-1-4}
	\sigma _p^2 \le \frac{{{\sigma ^4}}}{{36}}\left( {{N^{2\beta  - 1}} - 18{N^{\beta  - 1}}} \right). 
\end{equation}
Since $N \gg {\sigma ^4}$ and $ \sigma _p^2 >0$, the above mentioned range can be further refined as
\begin{equation}\label{eq4-1-5}
	0 < \sigma _p^2 \lesssim \mathcal{O}\left( {{N^{2\beta  - 1}}} \right).
\end{equation}
When the perturbation power satisfies (\ref{eq4-1-5}), the condition of (\ref{eq4-1-1}) holds. 
In this case, the correct SC decoding estimations remain unchanged during the PSC decoding. 
Therefore, the SC decoding performance can be further enhanced.
Based on the above analysis, this paper retains the perturbation power as $\sigma _p^2 \in ( {0, \mathcal{O} ( {{N^{2\beta  - 1}}} )} )$ to achieve a better error-correction performance.

\subsection{Evaluation of PSC Decoding}\label{sec4-2-Evalut}

This subsection further looks into how the perturbation power $\sigma _p^2$ influences the probability of successfully reverting the SC decoding failure.
Note that, as mentioned in Sec. \ref{sec4-1-Charact}, the following derivations are conducted under GA.
The research of \cite{Yang2026-IPSC&HPSC, SC-Liu2025-Performance_Analysis_PSC} have shown that, for PSC decoding with appropriate perturbations, the error-correction performance improves through correcting the erroneous SC decoding estimations.
That says, for all $i \in \mathcal{A}$,
\begin{equation}\label{eq4-2-1}
	\Pr \left( {\mathcal{L}'\left( {{u_i}} \right) \ge 0 \mid \mathcal{L}\left( {{u_i}} \right) < 0} \right) \ge 0.
\end{equation}
However, due to the successive estimation nature, only when the real FEB is corrected, will the SC decoding performance improve.
Inspired by this, this paper focuses on the impact of the perturbation power on the probability for correcting the FEB.

Let $i^*$ denote the index of FEB within the SC decoding failure. 
It can be expressed as \linebreak ${i^*} := \min \left( {i \in {\cal A}:{{\hat u}_i} \ne {u_i}} \right)$. 
Further let $\Theta$ denote the probability that the FEB $u_{i^*}$ is successfully estimated during the PSC decoding, i.e., the FEB correction probability.
It follows that 
\begin{equation}\label{eq4-2-correct_FEB_prob}
	\Theta  := \Pr \left( {\mathcal{L}'\left( {{u_{{i^*}}}} \right) \ge 0\mid \mathcal{L}\left( {{u_{{i^*}}}} \right) < 0} \right).
\end{equation}
Based on Bayes' rule, since ${i^*} \in \mathcal{A}$, (\ref{eq4-2-correct_FEB_prob}) can be written as
\begin{equation}\label{eq4-2-2}
	\begin{aligned}
		\Theta
		&= \frac{\Pr \left( \mathcal{L}'\left( u_{i^*} \right) \ge 0,\,
			\mathcal{L}\left( u_{i^*} \right) < 0 \right)}
		{\Pr \left( \mathcal{L}\left( u_{i^*} \right) < 0 \right)} \\
		&= \frac{\sum\limits_{i \in \mathcal{A}}
			\Pr \left( \mathcal{L}'\left( u_i \right) \ge 0,\,
			\mathcal{L}\left( u_i \right) < 0 \right)}
		{\sum\limits_{i \in \mathcal{A}}
			\Pr \left( \mathcal{L}\left( u_i \right) < 0 \right)}.
	\end{aligned}
\end{equation}
According to \cite{GA-Peter2012, SC-Wu2014-Performance_analysis}, under GA, the SC decoding error probability can be defined as
\begin{equation}\label{eq4-2-3}
	\begin{aligned}
		P_{\text{e}}^{\text{SC}}
		&:= \sum\limits_{i \in \mathcal{A}}
		\Pr \!\left(
		\hat{u}_{i} = 1,\,
		\hat{u}_{0}^{i-1} = \mathbf{0}_{i}
		\right) \\[2pt]
		&= \sum\limits_{i \in \mathcal{A}}
		\Pr \!\left(
		\mathcal{L}\!\left(u_{i}\right) < 0
		\right) .
	\end{aligned}
\end{equation}
Therefore, based on (\ref{eq4-2-3}), (\ref{eq4-2-2}) can be refined as
\begin{equation}\label{eq4-2-4}
	\begin{aligned}
		\Theta  
		& = \frac{1}{{P_{\text{e}}^{{\text{SC}}}}}\sum\limits_{{i} \in \mathcal{A}} {\Pr \left( {\mathcal{L}'\left( {{u_{{i}}}} \right) \ge 0, \mathcal{L}\left( {{u_{{i}}}} \right) < 0} \right)}. 
    \end{aligned}	
\end{equation}
Note that, under the condition of (\ref{eq4-1-1}), a greater $\Theta$ indicates a more effective perturbation, leading to a better PSC decoding performance, and vice versa.
However, the exact value of (\ref{eq4-2-4}) depends on the codeword length $N$, rate profiling $\mathcal{A}$, and channel condition.
The following Theorem \ref{theorem-probability} characterizes ${\Pr \left( {\mathcal{L}'\left( {{u_{{i^*}}}} \right) \ge 0,\mathcal{L}\left( {{u_{{i^*}}}} \right) < 0} \right)}$ under GA, where $i^* \in \mathcal{A}$.

\begin{theorem} \label{theorem-probability}
	
	Consider a $\mathcal{P}(N, K)$ code under GA, i.e., the message vector satisfies $u_0^{N - 1} = \mathbf{0}_{N}$.
	Assume that ${u_{{i^*}}}$ is the FEB within the SC decoding estimation $\hat u_0^{N - 1}$, where $i^* \in \mathcal{A}$.
	During the PSC decoding, the probability of successfully estimating ${u}_{i^*}$ can be defined as
	\begin{equation} \label{eq4-2-theorem-probability-1}
		\begin{aligned}
			\Pr \left( \mathcal{L}'\left(u_{i^*}\right) \ge 0, \mathcal{L}\left(u_{i^*}\right) < 0 \right)
			&= \Pr \left( \mathcal{L}'\left(u_{i^*}\right) \ge 0, \mathcal{L}\left(u_{i^*}\right) < 0
			\mid \hat{x}_0^{i^*-1} = \mathbf{0}_{i^*}, \hat{u}_0^{i^*-1} = \mathbf{0}_{i^*} \right) \\
			&\quad \cdot
			\prod_{\substack{j < i^* \text{ and } j \in \mathcal{A}}}
			\Pr \left( \mathcal{L}'\left(u_j\right) \ge 0, \mathcal{L}\left(u_j\right) \ge 0 \mid \hat{x}_0^{j-1} = \mathbf{0}_{j}, \hat{u}_0^{j-1} = \mathbf{0}_{j} \right),
		\end{aligned}
	\end{equation}
	where $\hat u_0^{{i}}$ and $\hat x_0^{{i}}$ denote the estimation of message vector $u_0^{i}$ that are delivered by the original SC decoding and the PSC decoding, respectively.
	
\end{theorem}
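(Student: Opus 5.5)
The plan is to treat the event ``$u_{i^*}$ is the FEB of SC and is corrected by PSC'' as a joint event along the successive-decoding trajectory, and to factor it with the chain rule of probability. Since $u_0^{N-1}=\mathbf{0}_N$ and frozen bits are always estimated as zero, saying that $u_{i^*}$ is the FEB means exactly $\hat u_0^{i^*-1}=\mathbf{0}_{i^*}$ and $\mathcal{L}(u_{i^*})<0$. Requiring PSC to estimate $u_{i^*}$ correctly along the correct prefix means $\hat x_0^{i^*-1}=\mathbf{0}_{i^*}$ and $\mathcal{L}'(u_{i^*})\ge 0$. So I would first write the left-hand side of (\ref{eq4-2-theorem-probability-1}) as the probability of the joint event
\[
E_{i^*}\cap\bigcap_{j<i^*,\,j\in\mathcal{A}} F_j ,
\]
with $F_j=\{\mathcal{L}'(u_j)\ge 0,\ \mathcal{L}(u_j)\ge 0\}$ and $E_{i^*}=\{\mathcal{L}'(u_{i^*})\ge 0,\ \mathcal{L}(u_{i^*})<0\}$. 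The events $F_j$ need only run over $j\in\mathcal{A}$, because for $j\in\mathcal{A}^c$ both decoders output zero deterministically, by (\ref{eq2-decision}).

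Next, I would order the information indices below $i^*$ as $j_0<j_1<\dots$ and apply the chain rule. Each factor is then the probability of $F_{j_k}$ (or of $E_{i^*}$ for the last factor) given the intersection of the previous events. By the decision rule (\ref{eq2-decision}), that intersection is exactly $\{\hat x_0^{j_k-1}=\mathbf{0}_{j_k},\ \hat u_0^{j_k-1}=\mathbf{0}_{j_k}\}$, again because frozen positions contribute zero in both decoders. This turns each conditioning on the whole past history into conditioning on the prefix event appearing in the theorem. Multiplying the factors then gives the product form of (\ref{eq4-2-theorem-probability-1}), with the $E_{i^*}$ factor written separately.

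The delicate step is justifying that conditioning on the prefix events is legitimate as stated. In particular, $\mathcal{L}(u_j)$ and $\mathcal{L}'(u_j)$ depend on the past only through the fed-back decisions $\hat u_0^{j-1}$ and $\hat x_0^{j-1}$ in the $g$-functions (\ref{eq2-g}). Here I would invoke the GA setting, under which $\mathcal{L}(u_j)\sim\mathcal{N}(\mu_j,2\mu_j)$ given a correct prefix. I would also use Theorem \ref{theorem-perturbation-all}, which says $\mathcal{L}'(u_j)=\mathcal{L}(u_j)+\mathcal{L}(z_j^*)$ with $z_j^*$ independent of the channel noise. Together these imply that, once both prefixes equal $\mathbf{0}$, the pair $(\mathcal{L}(u_j),\mathcal{L}'(u_j))$ has a well-defined law that does not depend on further details of the history. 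Each conditional factor is therefore a genuine function of $j$ alone, and the chain-rule identity holds exactly; no independence across $j$ is needed.
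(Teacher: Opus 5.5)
Your proposal is correct and matches the paper's proof: you read the left-hand side as the joint event along the correct prefix, observe that frozen positions are zero in both decoders so the intersection of the earlier events is exactly the prefix event, factor with the chain rule, and translate decisions into LLR signs via (\ref{eq2-decision}). Your last paragraph is not needed, since the chain rule is exact for any conditioning events of positive probability; moreover, the claim that $\mathcal{L}(u_j)\sim\mathcal{N}(\mu_j,2\mu_j)$ given a correct prefix is only a GA approximation, which matters later in Lemmas \ref{lemma-probability_correct} and \ref{lemma-probability_stable} but not for this identity.
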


\begin{proof}
	
	Since ${u_{{i^*}}}$ is the FEB, 
	\begin{equation} \label{proof2-1}
		\begin{aligned}
			\Pr \!\left(\mathcal{L}'\!\left(u_{i^*}\right) \ge 0,\mathcal{L}\!\left(u_{i^*}\right) < 0\right)
			&= \Pr \!\left(\hat{x}_{i^*} = 0, \hat{u}_{i^*} = 1, \hat{x}_0^{i^*-1} = \mathbf{0}_{i^*}, \hat{u}_0^{i^*-1} = \mathbf{0}_{i^*} \right) \\
			&= \Pr \!\left(
			\hat{x}_{i^*} = 0,\,
			\hat{u}_{i^*} = 1
			\,\middle|\,
			\hat{x}_0^{i^*-1} = \mathbf{0}_{i^*},\,
			\hat{u}_0^{i^*-1} = \mathbf{0}_{i^*}
			\right) \\[2pt]
			&\quad \cdot
			\prod_{\substack{j < i^* \text{ and } j \in \mathcal{A}}}
			\Pr \!\left(
			\hat{x}_j = 0,\,
			\hat{u}_j = 0
			\,\middle|\,
			\hat{x}_0^{j-1} = \mathbf{0}_{j},\,
			\hat{u}_0^{j-1} = \mathbf{0}_{j}
			\right) .
		\end{aligned}
	\end{equation}	
	Under GA, there exist
	\begin{equation} \label{proof2-2}
		\Pr \left( {{{\hat x}_i} = 0,{{\hat u}_i} = 1\mid \hat x_0^{i - 1} = {{\mathbf{0}}_i},\hat u_0^{i - 1} = {{\mathbf{0}}_i}} \right) = \Pr \left( {\mathcal{L}'\left( {{u_i}} \right) \ge 0,\mathcal{L}\left( {{u_i}} \right) < 0\mid \hat x_0^{i - 1} = {{\mathbf{0}}_i},\hat u_0^{i - 1} = {{\mathbf{0}}_i}} \right),
	\end{equation}
	and
	\begin{equation} \label{proof2-3}
		\Pr \left( {{{\hat x}_i} = 0,{{\hat u}_i} = 0\mid \hat x_0^{i - 1} = {{\mathbf{0}}_i},\hat u_0^{i - 1} = {{\mathbf{0}}_i}} \right) = \Pr \left( {\mathcal{L}'\left( {{u_i}} \right) \ge 0,\mathcal{L}\left( {{u_i}} \right) \ge 0\mid \hat x_0^{i - 1} = {{\mathbf{0}}_i},\hat u_0^{i - 1} = {{\mathbf{0}}_i}} \right).
	\end{equation}
	Based on (\ref{proof2-2}) and (\ref{proof2-3}), (\ref{proof2-1}) can be rewritten as in (\ref{eq4-2-theorem-probability-1}).
\end{proof}

Once obtaining (\ref{proof2-2}) and (\ref{proof2-3}), efficiency of the soft information perturbation can be accurately evaluated according to (\ref{eq4-2-4}) and Theorem \ref{theorem-probability}. 
However, during the PSC decoding, it is more challenging to accurately compute values of the above mentioned equations. 
The following Lemma \ref{lemma-probability_correct} characterizes the probability that a single perturbation corrects an initially misdecoded bit $u_i, \ i\in \mathcal{A}$, conditioned on the correct decoding of all preceding bits.
It provides an effective method for approximating (\ref{proof2-2}).

\begin{lemma} \label{lemma-probability_correct}
	
	Consider a $\mathcal{P}(N, K)$ code under GA.	
	Given $\mathcal{L}({u_i}) \sim \mathcal{N}\left( {{\mu _i},2{\mu _i}} \right)$ and $\mathcal{L}'\left( {{u_i}} \right) \sim \mathcal{N}({\mu _i},2{\mu _i} + 4\sigma _i^2/{\sigma ^4})$, it follows that 
	\begin{equation}
		\Pr \left( {\mathcal{L}'\left( {{u_{{i}}}} \right) \ge 0,\mathcal{L}\left( {{u_{{i}}}} \right) < 0 \mid \hat x_0^{{i} - 1} = {{\mathbf{0}}_{{i}}},\hat u_0^{{i} - 1} = {{\mathbf{0}}_{{i}}}} \right) \approx \frac{1}{4}\sqrt {\frac{{2\sigma _i^2}}{{2\sigma _i^2 + {\sigma ^4}{\mu _i}}}} \exp \left( { - \frac{{{\mu _i}}}{4}} \right).
	\end{equation}

\end{lemma}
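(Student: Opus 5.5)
The plan is to reduce the conditional probability to a one-dimensional Gaussian integral and then evaluate it in closed form using the small-perturbation regime of Sec.~\ref{sec4-1-Charact}. By Theorem~\ref{theorem-perturbation-all}, conditioned on $\hat x_0^{i-1}=\mathbf{0}_i$ and $\hat u_0^{i-1}=\mathbf{0}_i$ (so that both decoders compute the same $\mathcal{L}(u_i)$ up to the added perturbation), we write $\mathcal{L}'(u_i)=X+W$. Here $X:=\mathcal{L}(u_i)\sim\mathcal{N}(\mu_i,2\mu_i)$ under GA. The term $W:=\mathcal{L}(z_i^*)\sim\mathcal{N}(0,s^2)$, with $s^2:=4\sigma_i^2/\sigma^4$, is independent of $X$ because the perturbation is drawn independently of the channel noise. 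The target probability is then
\begin{equation*}
P:=\Pr(X<0,\,X+W\ge 0)=\int_{-\infty}^{0} f_X(x)\,Q\!\left(\frac{-x}{s}\right)\mathrm{d}x ,
\end{equation*}
where $f_X$ is the $\mathcal{N}(\mu_i,2\mu_i)$ density. A consistency check guides the computation: the prefactor in the claim satisfies $\frac{2\sigma_i^2}{2\sigma_i^2+\sigma^4\mu_i}=\frac{s^2}{s^2+2\mu_i}$. This strongly suggests a Gaussian-times-Gaussian integral.

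\textbf{Step 1: Gaussian surrogate for $Q$.} Since $-x/s\ge 0$ on the integration range, replace $Q(t)$ by its Chernoff-type surrogate $\frac12\exp(-t^2/2)$. This is an upper bound that is tight at $t=0$, which is where most of the mass lies in the regime considered. The integrand then becomes
\begin{equation*}
\frac{1}{2\sqrt{4\pi\mu_i}}\exp\!\left(-\frac{(x-\mu_i)^2}{4\mu_i}-\frac{x^2}{2s^2}\right).
\end{equation*}

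\textbf{Step 2: Complete the square.} Setting $a:=\frac{1}{2\mu_i}+\frac{1}{s^2}$, the exponent equals $-\frac a2 x^2+\frac x2-\frac{\mu_i}{4}$. This is $-\frac a2\big(x-\frac1{2a}\big)^2+\frac1{8a}-\frac{\mu_i}{4}$. Integrating over the whole line would give
\begin{equation*}
\frac{1}{2}\frac{1}{\sqrt{2\mu_i a}}\,e^{1/(8a)}e^{-\mu_i/4},
\end{equation*}
and we have $2\mu_i a=(s^2+2\mu_i)/s^2$. This produces exactly the factor $\frac12\sqrt{s^2/(s^2+2\mu_i)}\,e^{-\mu_i/4}$.

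\textbf{Step 3: Remaining approximations.} Two loose ends must be justified by the effective perturbation regime (\ref{eq4-1-2})--(\ref{eq4-1-5}), in which $s^2\ll\mu_i$.
\begin{enumerate}[label=(\roman*)]
\item The integration is only over $x<0$. The combined Gaussian has mean $1/(2a)\approx s^2/2$ and standard deviation $a^{-1/2}\approx s$. The retained fraction is therefore $Q\big(1/(2\sqrt a)\big)\approx Q(s/2)\approx\frac12$, which supplies the extra factor $\frac12$ and turns $\frac12$ into $\frac14$.
\item The correction $e^{1/(8a)}\approx e^{s^2/8}$ is close to $1$ relative to $e^{-\mu_i/4}$ and is dropped.
\end{enumerate}
Combining Steps 2 and 3 yields the stated expression.

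I expect Step 3 to be the main obstacle. It requires arguing that the surrogate in Step 1 and the two simplifications in Step 3 do not compound into a sizable error. Their errors partly go in opposite directions: the surrogate overestimates, while truncating to half the mass and replacing $Q(s/2)$ by $\frac12$ slightly overestimates as well. I would therefore state the result explicitly as an approximation valid for $s^2\ll\mu_i$. I would support it by noting that each dropped factor is $1+O(s)$ or $1+O(s^2)$. Optionally, a numerical comparison against the exact bivariate-normal orthant probability $\Pr(X<0,X+W\ge0)$ would confirm the approximation.
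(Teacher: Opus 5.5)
Your Steps 1 and 2 are exactly the paper's argument in Appendix~\ref{appendix_A}: condition on $\mathcal{L}(u_i)=v$, use the independence of $\mathcal{L}(z_i^*)$, replace $Q(\iota)$ by $\frac12 e^{-\iota^2/2}$ (Property~\ref{property-Q_func}), and complete the square. The gap is in Step 3. Your two simplifications, $Q\bigl(1/(2\sqrt a)\bigr)\approx\frac12$ and $e^{1/(8a)}\approx 1$, require $s\ll 1$, not merely $s^2\ll\mu_i$. Neither condition follows from the effective-perturbation regime you cite. Condition (\ref{eq4-1-2}) says $\mu_i\ge 3\sqrt{2\mu_i+s^2}$, i.e.\ $s^2\le\mu_i^2/9-2\mu_i$, which together with $\mu_i\ge N^\beta$ permits $s^2\gg\mu_i$. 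Condition (\ref{eq4-1-5}) constrains $\sigma_p^2$, not $\sigma_i^2=2^{\delta_i}\sigma_p^2$, and the latter can be as large as $N\sigma_p^2$. So you are silently adding a hypothesis the lemma does not make. When $s$ is of order one or larger, the two dropped factors are individually far from $\frac12$ and $1$: for $s=3$ and $s^2\ll\mu_i$, $Q(3/2)\approx 0.067$ and $e^{9/8}\approx 3.1$.

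The missing step is the one the paper uses. Evaluate the half-line integral exactly, $\int_{-\infty}^0 e^{-\frac a2 (x-\frac1{2a})^2}\,\mathrm{d}x=\sqrt{2\pi/a}\,Q\bigl(1/(2\sqrt a)\bigr)$. Then apply Property~\ref{property-Q_func} a second time to this residual $Q$; its argument is nonnegative, so this is legitimate. It gives $\frac12 e^{-1/(8a)}$, which cancels your factor $e^{1/(8a)}$ exactly and leaves $\frac{1}{4\sqrt{2\mu_i a}}e^{-\mu_i/4}$. This is the claimed expression, valid for every $s>0$ and without needing $a\approx 1/s^2$. The lemma's ``$\approx$'' is precisely two applications of the same surrogate.

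Relatedly, drop the promise of $1+O(s)$ accuracy, because the Step-1 surrogate is not tight on the relevant range $t\in[0,O(1)]$. As $s\to0$, the exact probability behaves like $s\,f_X(0)\int_0^\infty Q(t)\,\mathrm{d}t=s\,f_X(0)/\sqrt{2\pi}$, whereas the surrogate yields $s\,f_X(0)\cdot\frac12\sqrt{\pi/2}$. Hence the stated formula overshoots by the factor $\pi/2$ even in your own small-$s$ regime, and the numerical comparison you propose would expose this. The result should be presented, as the paper does, as a consequence of Property~\ref{property-Q_func} rather than as an asymptotically exact approximation.
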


\begin{proof}
	
	The proof is provided in Appendix \ref{appendix_A}.	
\end{proof}

Lemma \ref{lemma-probability_stable} further quantifies the probability that an
initially correctly decoded bit $u_i$ ($i\in \mathcal{A}$) remains unchanged after the soft information perturbation, where all preceding bits have been decoded correctly.
It also provides an approximation for (\ref{proof2-3}).

\begin{lemma} \label{lemma-probability_stable}
	
	Consider a $\mathcal{P}(N, K)$ code under GA.	
	Given $\mathcal{L}({u_i}) \sim \mathcal{N}\left( {{\mu _i},2{\mu _i}} \right)$ and $\mathcal{L}'\left( {{u_i}} \right) \sim \mathcal{N}({\mu _i},2{\mu _i} + 4\sigma _i^2/{\sigma ^4})$, it follows that 
	\begin{equation}
		\begin{aligned}
			& \Pr \left( {\mathcal{L}'\left( {{u_i}} \right) \ge 0,\mathcal{L}\left( {{u_i}} \right) \ge 0\mid \hat x_0^{i - 1} = {{\mathbf{0}}_{i}},\hat u_0^{i - 1} = {{\mathbf{0}}_{i}}} \right)\\
			& \approx Q\left( { - \sqrt {\frac{{{\mu _i}}}{2}} } \right) + \frac{1}{4}\sqrt {\frac{{2\sigma _i^2}}{{2\sigma _i^2 + {\sigma ^4}{\mu _i}}}} \exp \left( { - \frac{{{\mu _i}}}{4}} \right) - \frac{1}{2}\sqrt {\frac{{2\sigma _i^2}}{{2\sigma _i^2 + {\sigma ^4}{\mu _i}}}} \exp \left( { - \frac{{{\sigma ^4}\mu _i^2}}{{8\sigma _i^2 + 4{\sigma ^4}{\mu _i}}}} \right).
		\end{aligned}
	\end{equation}

\end{lemma}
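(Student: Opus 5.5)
Let $X=\mathcal{L}(u_i)\sim\mathcal{N}(\mu_i,2\mu_i)$ and $W=\mathcal{L}(z_i^*)\sim\mathcal{N}(0,4\sigma_i^2/\sigma^4)$. By Theorem~\ref{theorem-perturbation-all}, $W$ is independent of $X$ and $\mathcal{L}'(u_i)=X+W$. Conditioned on correct previous decisions, the quantity we need is $P_{++}:=\Pr(X+W\ge 0,X\ge 0)$. The plan is to avoid computing this bivariate Gaussian orthant probability directly and instead use the decomposition
\begin{equation*}
P_{++}=\Pr(X\ge 0)-\Pr(X+W<0,\,X\ge 0).
\end{equation*}
The first term is exactly $\Pr(X\ge 0)=1-Q(\mu_i/\sqrt{2\mu_i})=Q(-\sqrt{\mu_i/2})$, which is the first term of the claim. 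The whole task then reduces to approximating $P_{-+}:=\Pr(X+W<0,X\ge 0)$ and showing it equals
\begin{equation*}
\tfrac12\sqrt{\tfrac{2\sigma_i^2}{2\sigma_i^2+\sigma^4\mu_i}}\,e^{-\sigma^4\mu_i^2/(8\sigma_i^2+4\sigma^4\mu_i)}-\tfrac14\sqrt{\tfrac{2\sigma_i^2}{2\sigma_i^2+\sigma^4\mu_i}}\,e^{-\mu_i/4}.
\end{equation*}

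To get this, I will mirror the computation of Lemma~\ref{lemma-probability_correct}. Write $P_{-+}=\int_0^\infty \phi_X(x)\,Q\!\left(x\sigma^2/(2\sigma_i)\right)dx$, where $\phi_X$ is the $\mathcal{N}(\mu_i,2\mu_i)$ density. Use the same exponential-type approximation of the $Q$-function as in Appendix~\ref{appendix_A}, presumably $Q(t)\approx \tfrac12 e^{-t^2/2}$ for $t\ge 0$; this is what produces the $\tfrac14$ and $\tfrac12$ prefactors. The integrand becomes $\tfrac12\phi_X(x)e^{-x^2\sigma^4/(8\sigma_i^2)}$. This is a Gaussian in $x$ with combined precision $\frac{1}{2\mu_i}+\frac{\sigma^4}{4\sigma_i^2}=\frac{2\sigma_i^2+\sigma^4\mu_i}{4\mu_i\sigma_i^2}$, which gives the scale factor $\sqrt{2\sigma_i^2/(2\sigma_i^2+\sigma^4\mu_i)}$. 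Completing the square leaves the residual factor
\begin{equation*}
\exp\!\left(-\frac{\mu_i^2}{4\mu_i}\cdot\frac{\sigma^4/(4\sigma_i^2)}{1/(2\mu_i)+\sigma^4/(4\sigma_i^2)}\right)=\exp\!\left(-\frac{\sigma^4\mu_i^2}{8\sigma_i^2+4\sigma^4\mu_i}\right).
\end{equation*}
The remaining integral over $[0,\infty)$ of the shifted Gaussian is $1-Q(\cdot)$ evaluated at the negative standardized lower limit.

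That $Q$-tail term is where the $-\tfrac14(\cdots)e^{-\mu_i/4}$ piece should come from. Applying the same exponential approximation to it, its exponent should recombine with the residual factor into $-\mu_i/4$. The reason is that the product of $\phi_X(0)$-type terms reproduces the $e^{-\mu_i/4}$ that also appears in Lemma~\ref{lemma-probability_correct}. An equivalent route is to write $\Pr(X+W<0,X\ge0)=\Pr(X+W<0)-\Pr(X+W<0,X<0)$ and reuse Lemma~\ref{lemma-probability_correct} through the complementary identity $\Pr(X+W\ge0,X<0)=\Pr(X<0)-\Pr(X+W<0,X<0)$. I would try this bookkeeping first, since it lets the already-derived term $\tfrac14\sqrt{\cdot}\,e^{-\mu_i/4}$ be imported directly.

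The main obstacle is making the approximations consistent, so that the cross term collapses exactly to $\tfrac14\sqrt{\cdot}\,e^{-\mu_i/4}$ rather than to a slightly different exponent. This requires using the identical $Q$-approximation, and the identical treatment of the boundary term at $x=0$, as in the proof of Lemma~\ref{lemma-probability_correct}. I would check the final expression against its limits. As $\sigma_i\to0$ both correction terms should vanish, giving $P_{++}\to Q(-\sqrt{\mu_i/2})$. For large $\sigma_i$ the result should stay within $[0,1]$.
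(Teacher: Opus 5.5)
Your main line is the paper's proof step for step. You partition off $\Pr(X\ge 0)=Q(-\sqrt{\mu_i/2})$, replace $Q(t)$ by $\tfrac{1}{2}e^{-t^2/2}$ inside $\int_0^{\infty}$, complete the square, use $\int_0^{\infty}e^{-a(v-b)^2}\,\mathrm{d}v=\sqrt{\pi/a}\,\bigl[1-Q(\sqrt{2a}\,b)\bigr]$, and apply the same approximation once more to that $Q$-term. It closes as you anticipate: the second exponent plus the residual one is exactly $-\mu_i/4$, which is the integrand's exponent at $x=0$. This gives $\Pr(X+W<0,X\ge 0)\approx\tfrac{1}{2}D_iE_i-\tfrac{1}{4}D_ie^{-\mu_i/4}$, with the paper's $D_i=\sqrt{2\sigma_i^2/(2\sigma_i^2+\sigma^4\mu_i)}$ and $E_i=\exp\bigl(-\sigma^4\mu_i^2/(8\sigma_i^2+4\sigma^4\mu_i)\bigr)$.

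Do not start with the ``equivalent route'', though. It is equivalent only as an exact identity, and it collapses to $P_{++}=\Pr(X+W\ge 0)-\Pr(X+W\ge 0,X<0)$. Importing Lemma~\ref{lemma-probability_correct} then gives $Q\bigl(-\mu_i/\sqrt{2\mu_i+4\sigma_i^2/\sigma^4}\bigr)-\tfrac{1}{4}D_ie^{-\mu_i/4}$, which has a different leading $Q$-term and the opposite sign on the $\tfrac{1}{4}$ term. To match the stated formula from there you would need $\Pr(X<0)-\Pr(X+W<0)\approx\tfrac{1}{2}D_i\bigl(e^{-\mu_i/4}-E_i\bigr)$. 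The same $Q$-approximation applied to these two marginals instead gives $\tfrac{1}{2}\bigl(e^{-\mu_i/4}-E_i\bigr)$, off by the factor $D_i<1$. The crude approximation is not decomposition-invariant, and the lemma's expression is specifically what it yields inside the half-line integral, so the direct computation is the one to use.
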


\begin{proof}
	
	The proof is provided in Appendix \ref{appendix_B}.	
\end{proof}

The following equation provides an approximate characterization of the probability that the FEB are successfully decoded during the PSC decoding.
The resulting expression yields a practical criterion for optimizing the perturbation power $\sigma_p^2$ that maximizes $\Theta$. 
For simplicity, let
\[
{D_i} := \sqrt {\frac{{{2^{{\delta _i} + 1}}\sigma _p^2}}{{{2^{{\delta _i} + 1}}\sigma _p^2 + {\sigma ^4}{\mu _i}}}} ,
\]
and
\[
{E_i} := \exp \left( { - \frac{{{\sigma ^4}\mu _i^2}}{{{2^{{\delta _i} + 3}}\sigma _p^2 + 4{\sigma ^4}{\mu _i}}}} \right),
\]
where $i \in \mathcal{A}$.
Recalling $\sigma _i^2 = {2^{{\delta _i}}}\sigma _p^2 $, based on Theorem \ref{theorem-probability}, Lemmas \ref{lemma-probability_correct} and \ref{lemma-probability_stable}, (\ref{eq4-2-4}) can be elaborated as
\begin{equation}\label{eq4-2-7}
	\Theta  = \frac{1}{{4P_{\text{e}}^{{\text{SC}}}}}\sum\limits_{i \in \mathcal{A}} 
	{\left\{ {{D_i}\exp \left( { - \frac{{{\mu _i}}}{4}} \right)  \prod\limits_{j < i{\text{ and }}j \in \mathcal{A}} {\left[ {Q\left( { - \sqrt {\frac{{{\mu _j}}}{2}} } \right) + \frac{1}{4}{D_j}\exp \left( { - \frac{{{\mu _j}}}{4}} \right) - \frac{1}{2}{D_j}{E_j}} \right]} } \right\}} .
\end{equation}

Note that given the channel SNR and the codeword length $N$, for all $i \in \mathcal{A}$, the values of ${\sigma ^4}$, $\mu _i$, and ${\delta _i}$ can be determined offline.
This indicates that (\ref{eq4-2-7}) can be viewed as a function of the perturbation power $\sigma_p^2$, i.e., 
\begin{equation}\label{eq4-2-8}
	  F\left( {\sigma _p^2} \right) = \Theta,
\end{equation}
where $\sigma _p^2 \in ( {0, \mathcal{O} ( {{N^{2\beta  - 1}}} )} )$ as characterized by (\ref{eq4-1-5}).
Based on (\ref{eq4-2-7}) and (\ref{eq4-2-8}), the following Corollary \ref{corollary-F()_function} can be further obtained.
It characterizes the probability of successfully correcting the FEB in a failed SC decoding event.

\begin{corollary} \label{corollary-F()_function}
	Given the range of $\sigma_p^2$, there exists at least one optimal perturbation power, denoted as $\sigma _{p, \text{opt}}^2$, that can maximize $\Theta$. 
	In this case, the PSC decoding can achieve the optimal error-correction performance. 
\end{corollary}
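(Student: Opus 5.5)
The plan is to treat $F(\sigma_p^2)$ in (\ref{eq4-2-8}) as a real function of one variable $s=\sigma_p^2$ and use continuity on a compact interval together with the boundary behaviour. By (\ref{eq4-2-7}), $F$ is a finite sum over $i\in\mathcal{A}$ of products of the quantities $D_i$, $E_j$, $\exp(-\mu_i/4)$ and $Q(-\sqrt{\mu_j/2})$. For fixed SNR and $N$, the parameters $\sigma^4$, $\mu_i>0$ and $\delta_i$ are constants. Each $D_i(s)=\sqrt{2^{\delta_i+1}s/(2^{\delta_i+1}s+\sigma^4\mu_i)}$ is continuous for $s\ge 0$, strictly increasing, and satisfies $D_i(0)=0$. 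Each $E_i(s)$ is continuous and positive for $s\ge 0$. Hence $F$ is continuous on $[0,s_{\max}]$, where $s_{\max}$ is the right endpoint of the effective range (\ref{eq4-1-5}), made concrete by the bound (\ref{eq4-1-4}).

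\textbf{Step 1 (existence of a maximizer).} Extend $F$ continuously to the closed interval $[0,s_{\max}]$. By the extreme value theorem it attains a maximum at some point $s^\star$ of this interval.

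\textbf{Step 2 (the maximizer is admissible).} Rule out $s^\star=0$, so that the maximum lies in the admissible set $(0,s_{\max}]$. At $s=0$ every $D_i$ vanishes, so every summand of (\ref{eq4-2-7}) carries the factor $D_i(0)=0$, giving $F(0)=0$. For $s>0$, each bracketed factor equals $\Pr(\mathcal{L}'(u_j)\ge 0,\mathcal{L}(u_j)\ge 0\mid\cdot)$ via Lemma \ref{lemma-probability_stable}. Under the effective-perturbation condition (\ref{eq4-1-1}) this probability is close to $Q(-\sqrt{\mu_j/2})\ge 1/2$, hence strictly positive. Together with $D_i(s)>0$, this gives $F(s)>0=F(0)$. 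So the maximum is attained at some $\sigma^2_{p,\text{opt}}\in(0,s_{\max}]$.

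\textbf{Step 3 (optimal performance).} Since, under (\ref{eq4-1-1}), a larger $\Theta$ means a higher FEB correction probability, the maximizer $\sigma^2_{p,\text{opt}}$ yields the best PSC performance within the effective range.

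The main obstacle is the positivity of the bracketed factors in Step 2. The approximation of Lemma \ref{lemma-probability_stable} is not obviously nonnegative, since it contains the subtraction $-\tfrac12 D_jE_j$. I would first try the direct bound $D_j\le 1$ and $E_j\le\exp(-\mu_j/4)$. This uses
\[
\frac{\sigma^4\mu_j^2}{2^{\delta_j+3}s+4\sigma^4\mu_j}\ge\frac{\mu_j}{4}\cdot\frac{\sigma^4\mu_j}{2^{\delta_j+1}s+\sigma^4\mu_j},
\]
which is valid only when the last fraction is near $1$, i.e.\ for $s$ small relative to $\sigma^4\mu_j$. This is exactly where (\ref{eq4-1-2}) and Lemma \ref{lemma-LLR} are needed.

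If that route is too loose, I would fall back on the exact probabilistic meaning of the bracket, namely a probability that stays above $\tfrac12$ under (\ref{eq4-1-1}). This bypasses the approximation, which is adequate since the corollary is a statement about the model $F$ itself.

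A further remark concerns uniqueness. The paper's later claim of non-monotonicity suggests an interior maximum, but uniqueness is not required here, hence ``at least one''. An interior rather than endpoint maximizer would additionally need $F'(s_{\max})<0$, which I would not attempt.
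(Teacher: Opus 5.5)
Your skeleton is sound: continuity of $F$ on $[0,s_{\max}]$, the extreme value theorem, and $F(0)=0$ to exclude the left endpoint. The gap is Step~2, which is not justified as written, and the route you propose for it would fail.

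The inequality you display is in fact an identity. What it actually shows is
\[
E_j=\exp\!\left(-\frac{\mu_j}{4}\cdot\frac{\sigma^4\mu_j}{2^{\delta_j+1}s+\sigma^4\mu_j}\right)>\exp\!\left(-\frac{\mu_j}{4}\right)=a_j\qquad\text{for every } s>0 .
\]
So the bound $E_j\le e^{-\mu_j/4}$ is false on the whole admissible range, however small $s$ is. Neither Lemma~\ref{lemma-LLR} nor (\ref{eq4-1-2}) can repair it.

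Your fallback does not work either. $F$ is the approximate expression (\ref{eq4-2-7}), so positivity of the exact probability under the idealized condition (\ref{eq4-1-1}) says nothing about the sign of the approximate bracket $e_j+\tfrac14 a_jD_j-\tfrac12 D_jE_j$.

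The fact you need is elementary. Since $\sigma^4\mu_j>0$, you have $D_j<1$ and $E_j<1$; since $\mu_j>0$, you have $e_j=Q(-\sqrt{\mu_j/2})>\tfrac12$. Hence
\[
e_j+\tfrac14 a_jD_j-\tfrac12 D_jE_j>e_j-\tfrac12>0\qquad\text{for all } s\ge 0 .
\]
With this in place of your Step~2 argument, $F(s)>0=F(0)$ for $s>0$, and the existence proof closes.

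The paper gives no separate proof of this corollary. It reads the statement off (\ref{eq4-2-7})--(\ref{eq4-2-8}). The rise-then-fall shape is supported only later, by the truncated Taylor cubic of Theorem~\ref{theorem-appro_prob_FEB} (Corollaries~\ref{corollary-optimal_pert1} and~\ref{corollary-optimal_pert2}) and by the numerics of Fig.~\ref{fig.7-3-1}.

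Your compactness argument gives a genuine existence proof that does not depend on that truncation. The price is that it only places a maximizer somewhere in $(0,s_{\max}]$, possibly at $s_{\max}$ itself. It therefore relies on the right endpoint being admissible, as in (\ref{eq4-1-4}) and the TAPS algorithm, whereas (\ref{eq4-2-8}) and (\ref{eq5-2-4}) write an open interval. It also does not by itself justify the interior stationary point that Corollary~\ref{corollary-optimal_pert2} solves for, as you correctly note.
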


In particular, under the condition of (\ref{eq4-1-1}), if $\Theta$ is small, e.g., $\Theta \to 0$, the PSC decoding is prone to produce the same decoding outputs $\hat{x}_0^{N-1}$ over consecutive decoding attempts, as mentioned in Sec. \ref{sec2-3-PSC} and \cite{Yang2025-HPSC}.
In this case, the decoder is trapped in an erroneous decoding trajectory, which limits the PSC decoding performance.

\subsection{Numerical Results} \label{sec4-3-Numerical}

This subsection shows numerical results of FEB correction probability $\Theta $ for different perturbation power ${\sigma _p^2}$ under various codeword lengths. 
Polar codes with $N \in \left\{ {1024,4096,16384} \right\}$ and $R = 1/2$ are considered.
The Monte-Carlo simulation is also used for validation. 
The empirical $\Theta $ is defined as the ratio of trials in which the FEB is successfully corrected. 
In this work, the total number of trials is set to $10^6$ and the target FER is set as $10^{-4}$, i.e., $P_{\text{e}}^{{\text{SC}}} = {10^{ - 4}}$.

\begin{figure}[htbp]
	\centering 
	\includegraphics[scale=0.45]{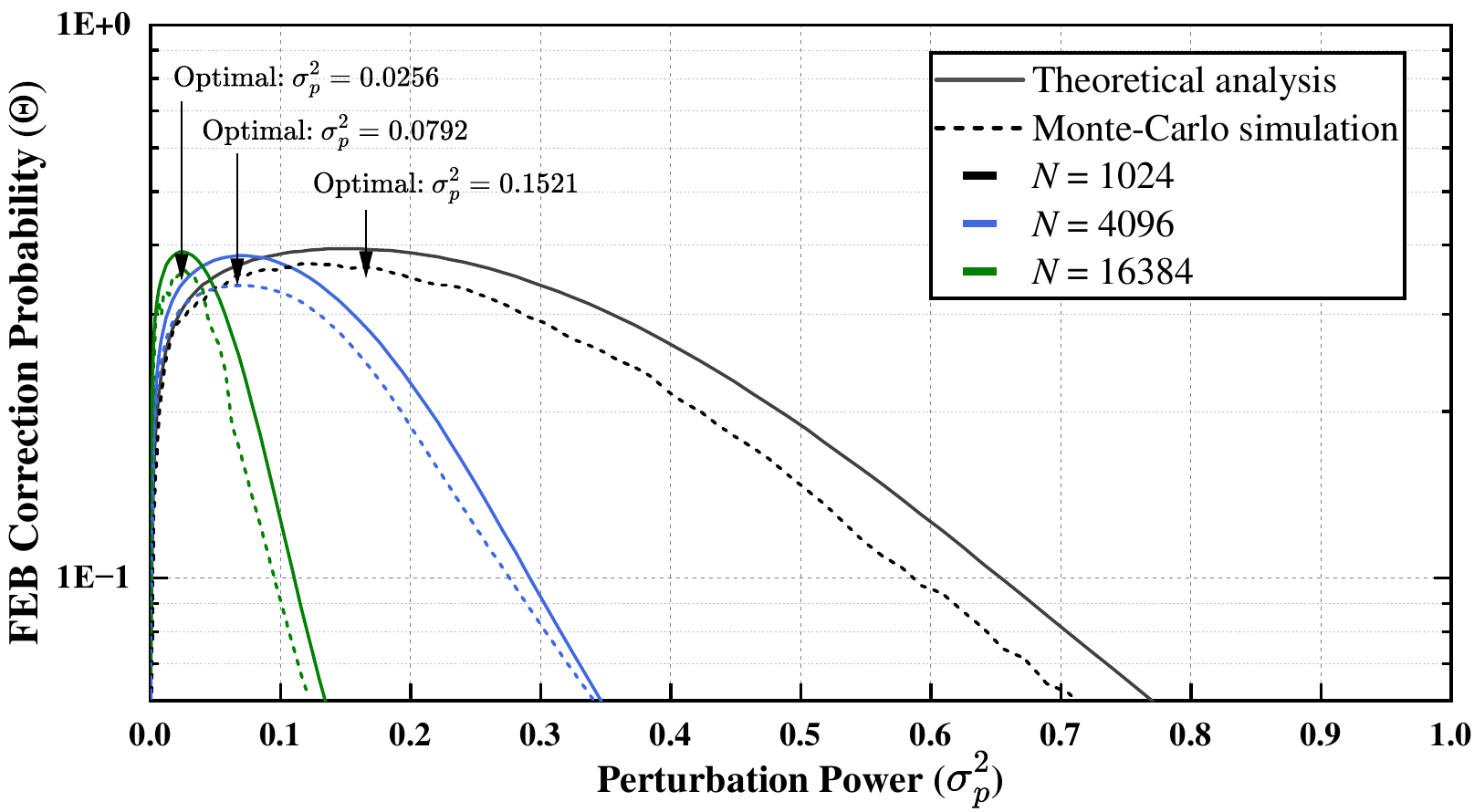}
	\caption{ 
		Numerical results of FEB correction probability for various perturbation powers, where $N \in \left\{ {1024,4096,16384} \right\}$, $R = 1/2$,	$\sigma _p^2 \in \left( {0,1} \right)$, and $P_{\text{e}}^{{\text{SC}}} = {10^{ - 4}}$.
	}  
	\label{fig.7-3-1}
\end{figure}

Fig. \ref{fig.7-3-1} shows $\Theta $ for different ${\sigma _p^2}$ under various codeword lengths, where $N \in \left\{ {1024,4096,16384} \right\}$ and $R = 1/2$. 
It can be seen that the FEB correction probability $\Theta $ exhibits a non-monotonic dependence on the perturbation power ${\sigma _p^2}$. 
Moreover, polar codes with longer codeword lengths require a smaller optimal perturbation power, while achieving a greater $\Theta $. 
This indicates that the perturbation-based decoding scheme is more effective for long polar codes, resulting in better error-correction performance.
In particular, it can be observed that both smaller or larger ${\sigma _p^2}$ lead to a lower $\Theta $. 
This is because, when the perturbation power is too small, the perturbed information remains close to the original one. 
Conversely, an excessive perturbation drives the soft information, e.g., ${\cal L}\left( {{y_i}} \right)$ and ${\cal L}'\left( {{u_i}} \right)$, into the noise-dominated regime, which reduces the FEB correction probability. 
The above conclusion is consistent with the empirical observations in our latest work \cite{Yang2026-IPSC&HPSC}. 
Moreover, the theoretical $\Theta $ always shows a similar tendency to the empirical one. 
This validates the accuracy of the proposed probabilistic model, as characterized by (\ref{eq4-2-7}).

\section{Design of Optimal Perturbations}\label{sec5}

In this section, two novel selection methods are proposed to determine the optimal perturbation. 
Based on the characteristics of PSC decoding, a Taylor approximation based method is first proposed to obtain an analytical solution of the optimal perturbation power. 
Moreover, a simulated annealing based method is introduced as an alternative approach to determine the optimal solution numerically. 
By utilizing the above mentioned selection methods, the optimal perturbation can be efficiently obtained.
Therefore, the PSC decoding performance can be further enhanced.

\subsection{Taylor Approximation Based Perturbation Selection} \label{sec5-1-TAPS}

This subsection aims to derive an analytical approximation of the optimal perturbation power $\sigma _{p,{\text{opt}}}^2$. 
To facilitate the following analysis, let ${a_i} := \exp ( - {\mu _i}/4)$, ${b_i} := \sqrt {\frac{{{2^{{\delta _i} + 1}}}}{{{\sigma ^4}{\mu _i}}}}$, ${d_i} := \frac{{{2^{{\delta _i} - 1}}}}{{{\sigma ^4}}}{a_i}$, and ${e_i} := Q ( { - \sqrt {{{{\mu _i}}} / {2}} } )$.
Again, note that given ${\mu _i}$, ${\delta _i}$ and $\sigma$, values of the above expressions can be determined offline.
The following Property \ref{property-Di&Ei} characterizes the approximate expressions for ${D_i}$ and ${E_i}$, as mentioned in Sec. \ref{sec4-2-Evalut}.

\begin{property} \label{property-Di&Ei}
	
	Given ${\delta _i}$, $\sigma$, ${\mu _i} > 0$, the following approximations hold
	\begin{equation} \label{eq-property-Di&Ei-1}
		{D_i} \approx {b_i}{\sigma _p} - \frac{1}{2}b_i^3\sigma _p^3 + o(\sigma _p^3),
	\end{equation}
	and 
	\begin{equation} \label{eq-property-Di&Ei-2}
		{E_i} \approx {a_i} + {d_i}\sigma _p^2 + o(\sigma _p^2),
	\end{equation}
	where $o( \cdot )$ denotes the higher-order infinitesimal term.
	It can be neglected as ${\sigma _p} \to 0$.

\end{property}

\begin{proof}
	
	By performing Taylor expansion around ${\sigma _p} = 0$ and neglecting the higher-order terms, the approximations in (\ref{eq-property-Di&Ei-1}) and (\ref{eq-property-Di&Ei-2}) can be obtained.	
\end{proof}

Let us define
\begin{subequations}\label{eq5-1-1}
	\begin{align}
		A_i &:=
		a_i b_i
		\prod\limits_{j < i \text{ and } j \in \mathcal{A}} e_j,
		\label{eq5-1-1a}\\
		B_i &:=
		a_i b_i
		\left(
		\prod\limits_{j < i \text{ and } j \in \mathcal{A}} e_j
		\right)
		\left(
		\sum\limits_{l < i \text{ and } l \in \mathcal{A}}
		\frac{a_l b_l}{4e_l}
		\right),
		\label{eq5-1-1b}\\
		C_i &:=
		\frac{1}{2}a_i b_i^3
		\prod\limits_{j < i \text{ and } j \in \mathcal{A}} e_j
		-
		a_i b_i
		\left(
		\prod\limits_{j < i \text{ and } j \in \mathcal{A}} e_j
		\right)
		\sum\limits_{l < w < i \text{ and } l,w \in \mathcal{A}}
		\frac{a_l b_l a_w b_w}{16e_l e_w}.
		\label{eq5-1-1c}
	\end{align}
\end{subequations}
for all $i \in \mathcal{A}$.
The following Theorem \ref{theorem-appro_prob_FEB} provides an approximate characterization of the FEB correction probability through the soft information perturbation. 
The resulting expression yields a practical criterion for optimizing the perturbation power ${\sigma _p^2}$ that maximizes the probability $\Theta$.

\begin{theorem} \label{theorem-appro_prob_FEB}
	
	Consider a $\mathcal{P}(N, K)$ code with the information set $\mathcal{A}$. 
	Given the perturbation power $\sigma_p^2$, the probability that the FEB of SC decoding failure is successfully estimated during the PSC decoding can be approximated as
	\begin{equation} \label{eq-theorem-appro_prob_FEB-1}
		\Theta = \frac{1}{4P_{\text{e}}^{\text{SC}}} \sum_{i \in \mathcal{A}} \left( A_i \sigma_p - B_i \sigma_p^2 - C_i \sigma_p^3 \right) + o(\sigma_p^3),
	\end{equation}
	where $P_{\text{e}}^{\text{SC}}$ denotes the error probability of SC decoder in decoding the $\mathcal{P}(N, K)$ code.

\end{theorem}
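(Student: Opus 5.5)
The plan is to start from the closed-form expression (\ref{eq4-2-7}), which already combines Theorem \ref{theorem-probability} with Lemmas \ref{lemma-probability_correct} and \ref{lemma-probability_stable}. Into it I substitute the small-$\sigma_p$ expansions of Property \ref{property-Di&Ei}, and then collect powers of $\sigma_p$ up to order three. With the shorthand $a_i=\exp(-\mu_i/4)$ and $e_i=Q(-\sqrt{\mu_i/2})$, the $i$-th summand of (\ref{eq4-2-7}), without the prefactor $1/(4P_{\text{e}}^{\text{SC}})$, is
$$a_i D_i \prod_{j<i,\, j\in\mathcal{A}} \Big[e_j+\tfrac14 a_j D_j-\tfrac12 D_j E_j\Big].$$
I therefore expand the outer factor $a_iD_i$ and each bracket separately, and only then multiply.

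First I handle a single bracket. Insert $D_j\approx b_j\sigma_p-\tfrac12 b_j^3\sigma_p^3$ and $E_j\approx a_j+d_j\sigma_p^2$. This gives
$$\tfrac14 a_jD_j-\tfrac12 D_jE_j=-\tfrac14 a_jb_j\sigma_p+O(\sigma_p^3).$$
The $O(\sigma_p^3)$ remainder comes from the cross terms $-\tfrac12 b_jd_j\sigma_p^3$ and $+\tfrac18 a_jb_j^3\sigma_p^3$. Hence each bracket equals $e_j\big(1-\tfrac{a_jb_j}{4e_j}\sigma_p+O(\sigma_p^3)\big)$. Note that there is no $\sigma_p^2$ term inside a bracket.

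Next I expand the product of these brackets up to second order in $\sigma_p$, since it is multiplied by $a_iD_i=a_ib_i\sigma_p+O(\sigma_p^3)$. Using the elementary symmetric expansion $\prod_j(1-x_j)=1-\sum_l x_l+\sum_{l<w}x_lx_w-\cdots$, I get
$$\prod_{j<i,\, j\in\mathcal{A}} e_j\Big(1-\sigma_p\sum_{l}\tfrac{a_lb_l}{4e_l}+\sigma_p^2\sum_{l<w}\tfrac{a_lb_la_wb_w}{16e_le_w}+O(\sigma_p^3)\Big),$$
where the bracket-level $O(\sigma_p^3)$ errors only enter at third order. Multiplying by $a_ib_i\sigma_p-\tfrac12 a_ib_i^3\sigma_p^3$ and reading off coefficients gives the three terms of the statement. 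The $\sigma_p$ coefficient is $A_i$ as in (\ref{eq5-1-1a}). The $\sigma_p^2$ coefficient is $-B_i$ as in (\ref{eq5-1-1b}). The $\sigma_p^3$ coefficient is $a_ib_i\prod e_j\sum_{l<w}(\cdot)-\tfrac12 a_ib_i^3\prod e_j=-C_i$, matching (\ref{eq5-1-1c}). Summing over $i\in\mathcal{A}$ and restoring the prefactor $1/(4P_{\text{e}}^{\text{SC}})$ yields (\ref{eq-theorem-appro_prob_FEB-1}).

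The main obstacle is controlling the remainder. The product contains up to $|\mathcal{A}|$ factors, so the accumulated higher-order terms scale with $|\mathcal{A}|$ and with the sizes of $b_j/e_j$, and the $o(\sigma_p^3)$ is not uniform in $N$. I would handle this as the paper does elsewhere: treat $N$, $\mathcal{A}$, $\mu_i$, $\delta_i$ and $\sigma$ as fixed, so the sum is finite and the remainder is genuinely $o(\sigma_p^3)$ as $\sigma_p\to0$. I would also remark that by Lemma \ref{lemma-LLR} the products $a_jb_j$ are exponentially small, which makes the truncation accurate across the effective range (\ref{eq4-1-5}).
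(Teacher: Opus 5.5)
Your proposal is correct and follows the paper's proof essentially step for step: rewrite (\ref{eq4-2-7}) with $a_i, e_i$, insert the small-$\sigma_p$ expansions of $D_i$ and $E_i$, reduce each bracket to $e_j\bigl(1-\frac{a_jb_j}{4e_j}\sigma_p+O(\sigma_p^3)\bigr)$, truncate the product at second order, and multiply by $a_iD_i$ to read off $A_i$, $-B_i$, $-C_i$. One caveat concerns your closing remark: the smallness of $a_jb_j$ controls the product truncation, but the expansion of $D_i$ itself requires $b_i\sigma_p\ll 1$, i.e., $2\sigma_i^2\ll\sigma^4\mu_i$, and (\ref{eq4-1-3}) does not guarantee this, since it allows $b_i^2\sigma_p^2$ up to about $\mu_i/18$; so the result should be read, as in your main argument, as an asymptotic statement for fixed $N$, $\mathcal{A}$, $\sigma$ as $\sigma_p\to 0$, not as accurate over the whole effective range.
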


\begin{proof}
	
	By the use of $a_i$ and $e_i$, (\ref{eq4-2-7}) can be rewritten as 
	\begin{equation} \label{proof8-1}
		\Theta = \frac{1}{4P_{\text{e}}^{\text{SC}}} \sum_{i \in \mathcal{A}} a_i D_i \prod_{j < i \text{ and } j\in\mathcal{A}} \left( e_j + \frac{1}{4} a_j D_j - \frac{1}{2} D_j E_j \right).
	\end{equation}
	Based on Property \ref{property-Di&Ei}, it can be obtained that 
	\begin{equation} \label{proof8-2}
		{a_i}{D_i} = {a_i}{b_i}{\sigma _p} - \frac{1}{2}{a_i}b_i^3\sigma _p^3 + o(\sigma _p^3),
	\end{equation}
	and 
	\begin{equation} \label{proof8-3}
		{D_i}{E_i} = {a_i}{b_i}{\sigma _p} + {b_i}{d_i}\sigma _p^3 - \frac{1}{2}{a_i}b_i^3\sigma _p^3 + o(\sigma _p^3),
	\end{equation}
	where $i \in \mathcal{A}$.
	Based on (\ref{proof8-2}) and (\ref{proof8-3}), it can be further obtained that
	\begin{equation} \label{proof8-4}
        \begin{aligned}
        	& \prod\limits_{j < i{\text{ and }}j \in \mathcal{A}} {\left( {{e_j} + \frac{1}{4}{a_j}{D_j} - \frac{1}{2}{D_j}{E_j}} \right)}  \\
        	& \quad \ = \prod\limits_{j < i{\text{ and }}j \in \mathcal{A}} { {{e_j}} } 
        	\prod\limits_{j < i{\text{ and }}j \in \mathcal{A}} {\left[ {1 - \frac{{{a_j}{b_j}}}{{4{e_j}}}{\sigma _p} + \left( {\frac{{{a_j}b_j^3}}{{8{e_j}}} - \frac{{{b_j}{d_j}}}{{2{e_j}}}} \right)\sigma _p^3 + o(\sigma _p^3)} \right]} .
        \end{aligned}
	\end{equation}
	By utilizing the second-order approximation of the product expansion, since 
	\[
	\prod\limits_{j \in \mathcal{A}} {(1 + {v_j})}  = 1 + \sum\limits_{j \in \mathcal{A}} {{v_j}}  + \sum\limits_{l < w {\text{ and }} l,w \in \mathcal{A}} {{v_l}} {v_w} + o({\sigma_p^3}),
	\]
	where ${v_j} = o\left( \sigma_p  \right)$ for all $j \in \mathcal{A}$ and $\sigma_p  \to 0$,
	(\ref{proof8-4}) can be reformulated as
	\begin{equation} \label{proof8-5}
		\begin{aligned}
			& \prod\limits_{j < i{\text{ and }}j \in \mathcal{A}}
			\left( {e_j + \frac{1}{4}a_jD_j - \frac{1}{2}D_jE_j} \right)\\
			& \quad \ =
			\prod\limits_{j < i{\text{ and }}j \in \mathcal{A}} e_j 
			\left(
			1
			-\sum\limits_{l < i{\text{ and }}l \in \mathcal{A}}
			\frac{a_lb_l}{4e_l}\sigma_p
			+\sum\limits_{l < w < i{\text{ and }} l,w \in \mathcal{A}}
			\frac{a_l b_l a_w b_w}{16e_l e_w}\sigma_p^2
			\right)
			+o(\sigma_p^3).
		\end{aligned}
	\end{equation}
	Finally, based on (\ref{eq5-1-1}), (\ref{proof8-2}), and (\ref{proof8-5}), (\ref{proof8-1}) can be written as in (\ref{eq-theorem-appro_prob_FEB-1}).
\end{proof}

Based on Corollary \ref{corollary-F()_function} and Theorem \ref{theorem-appro_prob_FEB}, the following Corollaries can be further obtained.
Corollary \ref{corollary-optimal_pert1} reveals that the FEB correction probability $\Theta$ exhibits a non-monotonic dependence on the perturbation power $\sigma _p^2$.
Moreover, Corollary \ref{corollary-optimal_pert2} further characterizes the optimal perturbation power for PSC decoding.
They both provide theoretical insights for the design of $\sigma _{p, \text{opt}}^2$.

\begin{corollary} \label{corollary-optimal_pert1}
	
	Given the optimal perturbation power $\sigma_{p,\mathrm{opt}}^2$, (\ref{eq-theorem-appro_prob_FEB-1}) indicates that $\Theta$ increases with $\sigma_p^2$ for $0 < \sigma_p^2 \leq \sigma_{p,\mathrm{opt}}^2$. 
	It then decreases with $\sigma_p^2$ for $\sigma_{p,\mathrm{opt}}^2 < \sigma_p^2 \lesssim \mathcal{O}\left(N^{2\beta-1}\right)$.

\end{corollary}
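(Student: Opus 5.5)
The plan is to treat the cubic truncation in Theorem \ref{theorem-appro_prob_FEB} as a function of the amplitude $s := \sigma_p > 0$. Since $\sigma_p^2 = s^2$ is strictly increasing in $s$, monotonicity in $\sigma_p^2$ is equivalent to monotonicity in $s$. Define
\[
G(s) := \frac{1}{4P_{\text{e}}^{\text{SC}}}\left( A s - B s^2 - C s^3 \right), \qquad A := \sum_{i\in\mathcal{A}} A_i, \quad B := \sum_{i\in\mathcal{A}} B_i, \quad C := \sum_{i\in\mathcal{A}} C_i ,
\]
so that by (\ref{eq-theorem-appro_prob_FEB-1}) we have $\Theta = G(s) + o(s^3)$. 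The corollary then reduces to showing that $G$ is unimodal on the admissible interval $0 < s^2 \lesssim \mathcal{O}(N^{2\beta-1})$ given by (\ref{eq4-1-5}). Unimodal here means $G$ has a single interior stationary point, it is a maximum, and that point is identified with $\sigma_{p,\mathrm{opt}}$ from Corollary \ref{corollary-F()_function}.

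First I will fix the signs of the coefficients from (\ref{eq5-1-1}).
\begin{itemize}
\item $a_i, b_i > 0$ by definition, and $e_j = Q(-\sqrt{\mu_j/2}) \in (1/2, 1)$. Hence $A_i > 0$ and $B_i \ge 0$, so $A > 0$ and $B \ge 0$.
\item For $C$, I would argue $C > 0$. The term $\frac12 a_i b_i^3 \prod e_j$ dominates, because the subtracted double sum carries a product $a_l a_w = e^{-(\mu_l+\mu_w)/4}$. By Lemma \ref{lemma-LLR}, this product is exponentially small in $N^\beta$ relative to the positive term.
\item If $C > 0$ cannot be secured uniformly, I would fall back on $B > 0$ alone. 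The quadratic truncation $As - Bs^2$ already has its unique maximizer at $A/(2B)$.
\end{itemize}

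Next comes the derivative step. We have
\[
G'(s) = \frac{1}{4P_{\text{e}}^{\text{SC}}}\left( A - 2Bs - 3Cs^2 \right).
\]
Here $G'(0^+) = A/(4P_{\text{e}}^{\text{SC}}) > 0$, and $A - 2Bs - 3Cs^2$ is a downward parabola when $C > 0$, strictly decreasing on $s > 0$. So it has exactly one positive root,
\[
s_\star = \frac{-B + \sqrt{B^2 + 3AC}}{3C}.
\]
From this, $G' > 0$ on $(0, s_\star)$ and $G' < 0$ for $s > s_\star$. Setting $\sigma_{p,\mathrm{opt}}^2 := s_\star^2$ gives exactly the increasing-then-decreasing behaviour claimed. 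Corollary \ref{corollary-F()_function} guarantees a maximizer exists, and uniqueness of the stationary point forces it to coincide with $s_\star$.

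The main obstacle is justifying that $s_\star^2$ lies inside the effective range (\ref{eq4-1-5}), and that the neglected $o(s^3)$ term does not create extra stationary points. The expansion of Property \ref{property-Di&Ei} is only valid as $s \to 0$, while $s_\star$ is a fixed positive number.
\begin{itemize}
\item For the first part, I would bound $s_\star \le \sqrt{A/(3C)}$ and use Lemma \ref{lemma-LLR} ($\mu_i \ge N^\beta$). This should show $A/C = \mathcal{O}(1/b_i^2) = \mathcal{O}(\sigma^4 \mu_i 2^{-\delta_i})$, which is compatible with $\mathcal{O}(N^{2\beta-1})$ up to constants.
\item For the second part, I would first try a perturbative argument. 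Since $G'(s_\star) = 0$ with $G''(s_\star) = -(2B + 6Cs_\star) < 0$, a small $\mathcal{O}(s^2)$ perturbation of $G'$ shifts the root slightly but preserves both the sign pattern and uniqueness.
\end{itemize}
I expect this last step to be only heuristically rigorous, in keeping with the Gaussian-approximation nature of the surrounding analysis.
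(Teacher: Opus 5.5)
Your central step matches the paper's own argument. The paper gives no separate proof: it reads the behaviour off the sign structure of (\ref{eq-theorem-appro_prob_FEB-1}), and Corollary \ref{corollary-optimal_pert2} is exactly your equation $A-2Bs_\star-3Cs_\star^2=0$. With $A>0$, $B\ge 0$, $C>0$, the derivative of the cubic has a single positive root, where it changes sign from positive to negative. This already proves the statement once $\sigma_{p,\mathrm{opt}}$ is defined as that root, or as the maximizer of the cubic on the admissible interval; in the latter case the ``decreasing'' part is simply vacuous if $s_\star^2$ lies beyond the endpoint.

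Your sign analysis goes beyond the paper, but making $C_i>0$ rigorous needs a \emph{lower} bound on the positive part $\tfrac12 b_i^2$, and Lemma \ref{lemma-LLR} does not give one. The GA recursion does: $f$ never increases the mean and $g$ doubles it, so $\mu_i\le 2^{\delta_i+1}/\sigma^2$ and hence $b_i^2\ge 1/\sigma^2$. Combined with at most $K^2$ pairs, $a_la_w\le e^{-N^\beta/2}$ and $b_lb_w\le 2N^{1-\beta}/\sigma^4$, the subtracted sum is then negligible for large $N$. Drop the fallback: if $C<0$, then $A-2Bs+3|C|s^2$ can have two positive roots, and the quadratic truncation is not the expression the corollary concerns.

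The steps you call the main obstacle should be removed rather than completed, because as planned they fail. First, Lemma \ref{lemma-LLR} bounds $\mu_i$ from below, so it bounds $1/b_i^2=\sigma^4\mu_i/2^{\delta_i+1}$ from below. The only general upper bound is $1/b_i^2\le\sigma^2=\mathcal{O}(1)$, so $s_\star^2=\mathcal{O}(N^{2\beta-1})$ does not follow, and as noted above it is not needed.

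More seriously, the perturbative control of the $o(\sigma_p^3)$ remainder cannot work. We have $B\le\tfrac12 A\max_l b_l\sum_l a_l$, and by Property \ref{property-Q_func} the sum $\sum_l a_l$ is about $2P_{\text{e}}^{\text{SC}}$. Hence $2Bs_\star\ll A$, and $s_\star$ is fixed by $A\approx 3Cs_\star^2$; that is, the $A_i$-weighted mean of $b_i^2s_\star^2$ is about $2/3$. This is exactly where the expansions behind Theorem \ref{theorem-appro_prob_FEB} stop being accurate:
\begin{itemize}
\item $a_iD_i=a_ib_i\sigma_p/\sqrt{1+b_i^2\sigma_p^2}$ is increasing for every $\sigma_p$, yet its truncation $a_i(b_i\sigma_p-\tfrac12 b_i^3\sigma_p^3)$ turns over precisely at $b_i^2\sigma_p^2=2/3$;
\item $E_i\approx a_i+d_i\sigma_p^2$ requires $\mu_ib_i^2\sigma_p^2\ll 1$, which is violated there by a factor of order $\mu_i\ge N^\beta$.
\end{itemize}
So the remainder is not a small perturbation of $G'$ at $s_\star$. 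For the same reason, Corollary \ref{corollary-F()_function}, which concerns the untruncated (\ref{eq4-2-7}), cannot be used to identify its maximizer with $s_\star$. Claim the increase-then-decrease behaviour only for (\ref{eq-theorem-appro_prob_FEB-1}), which is all the statement asserts.
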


\begin{corollary} \label{corollary-optimal_pert2}
	
	For PSC decoding, the optimal perturbation power $\sigma_{p,\mathrm{opt}}^2$ can be obtained by setting $\frac{{{\text{d}}F\left( {\sigma _p^2} \right)}}{{{\text{d}}\sigma _p^2}} = 0$. 
	This yields
	\begin{equation} \label{eq-corollary-optimal_pert2-1}
		\sum\limits_{i \in \mathcal{A}} {{A_i}}  - \sum\limits_{i \in \mathcal{A}} {2{B_i}} {\sigma _{p,{\text{opt}}}} - \sum\limits_{i \in \mathcal{A}} {3{C_i}} \sigma _{{p,{\text{opt}}}}^2 = 0.
	\end{equation}	
	Consequently, $\sigma_{p,\mathrm{opt}}^2$ can be determined by solving (\ref{eq-corollary-optimal_pert2-1}).

\end{corollary}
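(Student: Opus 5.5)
The plan is to treat the truncated expansion of Theorem \ref{theorem-appro_prob_FEB} as the surrogate objective $F$, and to optimize it over the feasible range (\ref{eq4-1-5}). Two facts make this convenient. First, $\sigma_p^2 \mapsto \sigma_p$ is strictly increasing on $(0,\infty)$. Second, by the chain rule,
\[
\frac{\mathrm{d}F}{\mathrm{d}\sigma_p^2} = \frac{1}{2\sigma_p}\frac{\mathrm{d}F}{\mathrm{d}\sigma_p}.
\]
Stationary points in $\sigma_p^2$ are therefore exactly the stationary points in $\sigma_p$ with $\sigma_p>0$. So I will work in the variable $\sigma_p$.

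First, drop the $o(\sigma_p^3)$ remainder in (\ref{eq-theorem-appro_prob_FEB-1}) and write
\[
F(\sigma_p^2) \approx \frac{1}{4P_{\text{e}}^{\text{SC}}}\Big(\bar A\sigma_p - \bar B\sigma_p^2 - \bar C\sigma_p^3\Big),
\]
with $\bar A=\sum_{i\in\mathcal{A}}A_i$, $\bar B=\sum_{i\in\mathcal{A}}B_i$ and $\bar C=\sum_{i\in\mathcal{A}}C_i$. The prefactor $1/(4P_{\text{e}}^{\text{SC}})$ is a positive constant independent of $\sigma_p$, so it does not affect where the derivative vanishes.

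Second, differentiate with respect to $\sigma_p$ and set the result to zero:
\[
\bar A - 2\bar B\sigma_p - 3\bar C\sigma_p^2 = 0.
\]
This is exactly (\ref{eq-corollary-optimal_pert2-1}) at $\sigma_p=\sigma_{p,\text{opt}}$. Multiplying by the positive factor $1/(2\sigma_p)$ gives the same condition for $\mathrm{d}F/\mathrm{d}\sigma_p^2=0$, which establishes the stated equivalence.

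Third, I must argue that a root of this quadratic actually exists in the admissible interval and is the maximizer promised by Corollary \ref{corollary-F()_function}. The signs of the coefficients do this. We have $a_i,b_i>0$ and $e_j\in(1/2,1)$, so $A_i>0$ and $B_i\ge 0$, which gives $\bar A>0$. The quadratic $g(\sigma_p)=\bar A-2\bar B\sigma_p-3\bar C\sigma_p^2$ therefore satisfies $g(0)=\bar A>0$: the surrogate $F$ is increasing near zero, matching the first half of Corollary \ref{corollary-optimal_pert1}. If $\bar C>0$, the quadratic has exactly one positive root,
\[
\sigma_{p,\text{opt}} = \frac{-\bar B+\sqrt{\bar B^2+3\bar A\bar C}}{3\bar C},
\]
and $g$ changes sign from positive to negative there, so this root is a maximum. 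If $\bar C\le 0$, the decrease in $F$ must come from the $\bar B$ term, giving a root near $\bar A/(2\bar B)$ when $\bar B\neq 0$.

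The main obstacle is this sign and range argument. In particular, $C_i$ contains a negative cross-sum term, so its sign is not automatic, and the resulting root must also be shown to lie below $\mathcal{O}(N^{2\beta-1})$ and to be small enough that the $o(\sigma_p^3)$ truncation is credible. I would first try to show $\bar B>0$, for example via the first nonzero summand with $l<i$. I would then argue that the relevant root scales like $\bar A/\bar B$, and invoke Corollary \ref{corollary-F()_function} together with the unimodality asserted in Corollary \ref{corollary-optimal_pert1} to identify this interior stationary point as $\sigma_{p,\text{opt}}^2$. If that fails, I would fall back on a purely first-order-condition statement: any interior maximizer of the smooth surrogate must satisfy (\ref{eq-corollary-optimal_pert2-1}). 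That weaker claim is all the corollary literally needs.
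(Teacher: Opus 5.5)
Your proposal is correct and follows essentially the paper's (implicit) argument: drop the $o(\sigma_p^3)$ remainder in Theorem~\ref{theorem-appro_prob_FEB}, differentiate the resulting cubic, and use $\sigma_p>0$ to identify stationary points in $\sigma_p^2$ with those in $\sigma_p$ (your chain-rule remark spells out a point the paper glosses over, since (\ref{eq-corollary-optimal_pert2-1}) is literally the $\sigma_p$-derivative). Your existence and sign discussion goes beyond what the paper argues, but its $\bar C\le 0$ branch is loose: for $\bar C<0$ a real root exists only if $\bar B^2\ge 3\bar A|\bar C|$, so the first-order-condition reading you fall back on is the right one.
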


Theorem \ref{theorem-appro_prob_FEB} and Corollaries \ref{corollary-optimal_pert1} and \ref{corollary-optimal_pert2} vindicate that, during the PSC decoding, $\Theta$ first increases and then decreases as $\sigma_p^2$ grows.
In particular, it is consistent with the numerical result as shown in Fig. \ref{fig.7-3-1}.
This indicates that a proper perturbation can improve the probability of successfully correcting the FEB in a failed SC decoding event.
Conversely, an excessive perturbation drives the PSC decoding into the noise-dominated regime, potentially degrading the error-correction performance. 
The above analysis provides a practical criterion for the design of soft information perturbations.
Moreover, the conclusion aligns with the existing empirical setting of the perturbation power in \cite{Yang2026-IPSC&HPSC, SCL-Wang2023-Perturbation, SC-Liu2025-Performance_Analysis_PSC}.

In this subsection, the Taylor approximation based perturbation selection (TAPS) method is further proposed as in Algorithm \ref{alg:TAA}.
It is based on the theoretical analysis of Sec. \ref{sec4} and Corollary \ref{corollary-optimal_pert2}.
In particular, given the codeword length $N$, rate profiling $\mathcal{A}$, and channel condition, $ {\mu _0^{N - 1}}$, ${\delta _0^{N - 1}}$ and the coefficients $A_i$, $B_i$, and $C_i$ can be determined offline.
In this case, Algorithm \ref{alg:TAA} only requires solving a quadratic equation of (\ref{eq-corollary-optimal_pert2-1}).

\begin{algorithm}[t]\label{alg:TAA}	
	\caption{The TAPS Algorithm} 	
	\KwIn{$\mathcal{A}$, $\sigma^2$, $\beta$, $\delta _0^{N - 1}$;	  }
	\KwOut{$\sigma _{{p,{\text{opt}}}}^2$;}		
	\textbf{Initialize:} ${\mu _i} \leftarrow 0$, $A_i \leftarrow 0$, $B_i \leftarrow 0$, $C_i \leftarrow 0$;\\
	
	Compute $\mu _0^{N - 1}$ through the GA;\\
	
	\For{$i \leftarrow 0$ \KwTo $N-1$}	
	{
		Compute $A_i$, $B_i$, and $C_i$ as in (\ref{eq5-1-1});\\
	}
	
	\tcp*[h]{Obtaining the optimal perturbation power}\\
	Determine the positive real solution of (\ref{eq-corollary-optimal_pert2-1}) with $\sigma _p^2 \in ( {0, \mathcal{O}\left( {{N^{2\beta  - 1}}} \right)} ]$; \\

	\Return{$\sigma _{{p,{\mathrm{opt}}}}^2$}; 
	
\end{algorithm}

\subsection{Simulated Annealing Based Perturbation Selection} \label{sec5-2-SAPS}

As mentioned in Sec. \ref{sec4} and \ref{sec5-1-TAPS}, based on the theoretical analysis of soft information perturbations, the TAPS algorithm obtains the optimal perturbation power by solving (\ref{eq-theorem-appro_prob_FEB-1}). 
However, it requires a large number of computation.  
Recalling Corollary \ref{corollary-F()_function} and (\ref{proof8-1}), both of them characterize the relationship between the perturbation power ${\sigma _p^2}$ and the FEB correction probability $\Theta $. 
Consequently, (\ref{eq4-2-7}) can be viewed as an objective function with respect to ${\sigma _p^2}$, as in (\ref{eq4-2-8}). 
Given the codeword length $N$, rate profiling $\mathcal{A}$, and channel condition, the FEB correction probability is uniquely determined once the perturbation power is specified. 
Therefore, determining $\sigma _{{p,{\mathrm{opt}}}}^2$ can be reformulated as an objective function maximization problem. 
It aims to obtain an optimal ${\sigma _p^2}$ that can maximize $\Theta$.

Inspired by this, a new selection method, namely the simulated annealing based perturbation selection (SAPS) algorithm, is further proposed. 
Instead of analytically solving (\ref{eq-corollary-optimal_pert2-1}), it searches for the optimal perturbation power in an iterative manner. 
Therefore, $\sigma _{p,{\text{opt}}}^2$ can be efficiently obtained without solving any quadratic equations.
Based on Corollary \ref{corollary-F()_function} and (\ref{eq4-2-8}), the optimization objective of the proposed SAPS algorithm is defined as 
\begin{equation} \label{eq5-2-4}
	\sigma _{p,{\text{opt}}}^2 = \mathop { \arg \max }\limits_{\sigma _p^2 \in ({{0,}}\mathcal{O} ( {{N^{2\beta  - 1}}} ))} F\left( {\sigma _p^2} \right),
\end{equation}
where $F\left( {\sigma _p^2} \right) = \Theta$, and $\Theta$ is determined as in (\ref{proof8-1}).

The SAPS algorithm performs at most $H$ iterations for searching the optimal perturbation power. 
Let $\sigma _{p,h}^2$ and $\Theta _{h}$ denote the generated perturbation power and its corresponding FEB correction probability in the $h$-th iterative attempt, respectively.
Further let ${\Gamma _h}$ denote the annealing temperature at the $h$-th iteration, which determines the searching capability.
It is updated as ${\Gamma _{h}} = \alpha {\Gamma _{h-1}}$ \cite{SA-Kirkpatrick1983-Simulated_Annealing}, where $\alpha \in [0,1]$ denotes the cooling factor. 
Based on ${\Gamma _{h}}$, the candidate perturbation power can be generated as \cite{SA-Geman1984-convergence_of_SA, SA-Kirkpatrick1983-Simulated_Annealing}
\begin{equation}\label{eq5-2-5}
	\sigma _{p,h}^2 = \sigma _{p,h-1}^2 + {\Gamma _{h}}{\xi _h},
\end{equation}
where ${\xi _h} \sim U( -1,1)$ denotes a uniformly distributed random variable. 
Note that, as in (\ref{eq5-2-5}), $\Gamma_h$ determines the searching range, while ${\xi _h}$ controls both the computing direction and step size. 
In particular, if the solution falls outside the feasible region, i.e., $\sigma _{p,h}^2 \notin (0, \mathcal{O} ( {{N^{2\beta  - 1}}} ))$, ${\xi _h}$ and $\sigma _{p,h}^2$ are regenerated as in (\ref{eq5-2-5}). 
Subsequently, based on the obtained $\sigma _{p,h}^2$, $\Theta_{h}$ is computed as in (\ref{eq4-2-7}). 
If $\Theta_{h} > \Theta_{h-1}$, the current $\sigma _{p,h}^2$ is accepted, since it can improve the PSC decoding performance. 
Otherwise, $\sigma _{p,h}^2$ will be accepted with a probability of $\exp ( {\frac{{{\Theta ^{(h)}} - {\Theta ^{(h - 1)}}}}{{{\Gamma _h}}}} )$. 
At the end of each iteration, $\Theta_{h}$ is compared with the maximum one, denoted as ${\Theta _{{\text{max}}}}$. 
If $\Theta_{h} \ge \Theta_\text{max}$, the optimal solutions are updated as $\sigma _{p,{\text{opt}}}^2 = \sigma _{p,h}^2$ and ${\Theta _{{\text{max}}}} = {\Theta _{h}}$. 
In particular, it is noteworthy that the SAPS algorithm can be viewed as a general searching framework for optimizing the perturbation power by solving (\ref{eq5-2-4}). 
It can be further enhanced by incorporating more advanced selection techniques, such as the reinforcement learning and the transformer-based models.

\begin{algorithm}[tbp]\label{alg:SAA}	
	\caption{The SAPS Algorithm} 	
	\KwIn{$\mathcal{A}$, $\delta _0^{N - 1}$, $\mu _0^{N - 1}$, $\sigma^2$, $\Gamma _0$, $\alpha$, $\beta$, $H_\text{max}$;	  }
	\KwOut{$\sigma _{{p,{\text{opt}}}}^2$;}		
	
	\textbf{Initialization:} ${h} \leftarrow 0$, ${\Theta _{{\text{max}}}} \leftarrow 0$, $\sigma _{p,{\text{opt}}}^2 \leftarrow 0$;\\
	
	Generate $\sigma _{p,0}^2$ randomly;\\
	Compute $\Theta_0$ as in (\ref{eq4-2-7});\\
	
	\tcp*[h]{Updating the optimal solution}\\
	$\sigma _{p,{\text{opt}}}^2 \leftarrow \sigma _{p,0}^2$;\\
	${\Theta _{{\text{max}}}} \leftarrow {\Theta_0}$;\\

	\tcp*[h]{Simulated annealing based perturbation selection}\\
	\While{$h \le H_{\max}$}
	{
	
	Compute ${\Gamma _{h}}$ as in (\ref{eq5-2-4});\\
	Compute $\sigma _{p,h}^2$ as in (\ref{eq5-2-5});\\
	
	\eIf{$\sigma _{p,h}^2 \notin ( 0, \mathcal{O}\left( {{N^{2\beta  - 1}}} \right) ]$ }
	{
		Perform Step 8 again;\\
	}
	{
		Compute $\Theta_{h}$ as in (\ref{eq4-2-7});\\
	}

	\If{$\Theta_{h} \le \Theta_{h-1}$}
	{
		$P_\text{acc} \leftarrow \exp \left( {\frac{{{\Theta_{h}} - {\Theta _{h - 1}}}}{{{\Gamma _h}}}} \right)$;\\

		Accept $\sigma _{p,h}^2$ with the probability of $P_\text{acc}$; otherwise, update $\sigma _{p,h}^2 \leftarrow \sigma _{p,h-1}^2$ and ${\Theta_h} \leftarrow {\Theta_{h-1}}$ with the probability of $1 - P_\text{acc}$;\\
		
	}

	\tcp*[h]{Updating the optimal solution}\\
	\If{$\Theta_{h} \ge \Theta_\mathrm{max}$}
	{		
		$\sigma _{p,{\text{opt}}}^2 \leftarrow \sigma _{p,h}^2$;\\
		${\Theta _{{\text{max}}}} \leftarrow {\Theta_h}$;\\	
	}

	$h \leftarrow h+1$; \hfill \tcp{New searching attempt}
	
	}

	\Return{$\sigma _{{p,{\mathrm{opt}}}}^2$}; 
	
\end{algorithm}

\subsection{Analysis of Optimal Perturbations}

This subsection further investigates the characteristic of optimal perturbations.
A polar code with $N=4096$ and $R=1/2$ is considered.
For the SAPS algorithm, it is set that $\Gamma _0 = 0.3$, $\alpha = 0.95$, and $H = 200$.
Moreover, the empirical perturbation power is included for comparison. 
It is obtained through the Monte-Carlo method as in \cite{Yang2025-HPSC}.

\begin{table}[htbp]
	\centering
	\begin{threeparttable}
		\caption{Optimal Perturbation Powers for Various Selection Algorithms, where $N=4096$ and $R=1/2$}
		\label{table:opt_perturbation}
		
		\begin{tabular}{cccc||c}
			\toprule
			SNR (dB) &
			TAPS &
			SAPS &
			Empirical $\sigma_{p}^2$ \tnote{a} &
			$\Theta$ \\ 
			\midrule
			1.0 & 0.0036 &    0.0032  &    0.0036  &    0.1528  \\
			1.5 & 0.0064 &    0.0069  &    0.0064  &    0.2155  \\
			2.0 & 0.0196 &    0.0204  &    0.0196  &    0.3149  \\
			2.5 & 0.0529 &    0.0513  &    0.0529  &    0.3710  \\
			3.0 & 0.0961 &    0.0953  &    0.0961  &  0.3951    \\
			\bottomrule
		\end{tabular}
		
		\begin{tablenotes}
			\footnotesize
			\item[a] Determined through the Monte-Carlo simulation of \cite{Yang2025-HPSC}.
		\end{tablenotes}
		
	\end{threeparttable}
\end{table}

Table \ref{table:opt_perturbation} shows the optimal perturbation powers obtained by the two proposed selection algorithms over various SNRs.\footnote{For clarity, all values in this table are rounded to four decimal places.} 
It can be seen that both the proposed TAPS algorithm and the Monte-Carlo based selection method can accurately identify the optimal perturbation power $\sigma _{p,{\text{opt}}}^2$. 
Moreover, the SAPS algorithm always determines a suboptimal perturbation power that is close to the optimal one.
However, the resulting FEB correction probability is nearly identical to that obtained with $\sigma _{p,{\text{opt}}}^2$.
This indicates that the PSC decoding performance is similar when using the optimal and suboptimal perturbations. 
This observation will be further verified by the numerical results shown in Fig. \ref{fig.7-3-2} of Sec. \ref{sec7}.
In particular, it can be observed that the optimal perturbation power grows as the channel SNR increases. 
This is because, in the high SNR regime, the SC decoding \textit{a posteriori} LLR $\mathcal{L}(u_i)$ becomes more reliable, exhibiting a greater magnitude. 
Therefore, as mentioned in Theorem \ref{theorem-perturbation-all}, a stronger perturbation ${\cal L} ( {z_i^{*}} )$ is required to correct the erroneous bit estimation, especially the FEB.
This results in a greater $\sigma _{p,{\text{opt}}}^2$.

\section{The Enhanced Perturbation-Based Decoding Schemes}\label{sec3}

As mentioned in Sec. \ref{sec4}, an appropriate perturbation can enhance the error-correction performance of PSC decoding. 
However, in the low SNR regime, the PSC decoding exhibits a limited performance gain due to the lower FEB correction probability.
E.g., as shown in Table \ref{table:opt_perturbation}, $\Theta$ is only 0.15 at an SNR of 1.0 dB, compared with 0.40 at an SNR of 3.0 dB.
In this case, according to (\ref{eq4-1-1}), PSC decoding is more likely to produce the same decoding estimation as the original SC decoding, resulting in an ineffective decoding attempt.

Addressing the above mentioned issue, this section introduces the enhanced perturbation-based decoding schemes, which can be considered particularly useful for long polar codes.
They include the biased PSC (BPSC) decoding and the HPSC decoding.
Among them, the BPSC decoding can efficiently correct the PSC decoding failure, resulting in a better error-correction performance.
For the HPSC decoding, it is a two-phase decoding that integrates both the random and biased perturbations, i.e., the PSC and BPSC decoding algorithms.

\subsection{BPSC Decoding}\label{sec3-2-BPDS}

For the suboptimal polar decoder, such as the PSC decoding, it is prone to be trapped in an erroneous decoding trajectory \cite{Yang2025-HPSC}. 
The research of \cite{BPL-Feng2021-Post_process} shows that during different decoding attempts, the trapped decoder might repeatedly output the same erroneous bit estimations. 
These estimations deviate from that of the maximum-likelihood (ML) decoding, thereby limiting the performance improvement.
In order to address the PSC decoding limit, the BPSC decoding is further proposed.
It aims to align the estimation of information bits (or the codeword) with that of the ML decoding by introducing biased perturbations.
The following Theorem \ref{lemma-ML} characterizes the ML decoding of polar codes.

\begin{theorem} \label{lemma-ML}
	
	Consider a $\mathcal{P}(N,K)$ code with the codebook $\mathbb{C}$, the ML decoder yields the codeword that satisfies
	\begin{equation}	\label{eq3-ML}
		\hat c_0^{N - 1} = \mathop {\arg \max }\limits_{c_0^{N - 1} \in \mathbb{C}} \sum\limits_{i = 0}^{N - 1} {\left( {1 - 2{c_i}} \right) y_{i} }.
	\end{equation}
	
\end{theorem}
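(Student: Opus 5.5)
The plan is to start from the definition of ML decoding as maximizing the likelihood over the codebook, and then specialize to the BPSK--AWGN model in (\ref{eq-awgn}). Since polar codewords are equiprobable (the message $u_0^{N-1}$ is uniform over information bits with frozen bits fixed), the MAP and ML decoders coincide. I would write
\[
\hat c_0^{N-1} = \mathop{\arg\max}_{c_0^{N-1}\in\mathbb{C}} \Pr\left(y_0^{N-1}\mid c_0^{N-1}\right).
\]
Because the channel is memoryless, this likelihood factorizes as $\prod_{i=0}^{N-1}\Pr(y_i\mid c_i)$. With $z_i\sim\mathcal{N}(0,\sigma^2)$ i.i.d., each factor is
\[
\frac{1}{\sqrt{2\pi}\sigma}\exp\left(-\frac{(y_i-(1-2c_i))^2}{2\sigma^2}\right).
\]

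Next I take logarithms, which is monotone, and drop the normalization constant that does not depend on $c_0^{N-1}$. This reduces the problem to minimizing the squared Euclidean distance $\sum_i (y_i-(1-2c_i))^2$. Expanding the square gives $y_i^2 - 2(1-2c_i)y_i + (1-2c_i)^2$. Here $y_i^2$ is independent of the candidate codeword, and $(1-2c_i)^2=1$ for every binary $c_i$ because BPSK symbols have constant energy. The only codeword-dependent term is therefore $-2\sum_i(1-2c_i)y_i$. Minimizing the distance is then equivalent to maximizing the correlation $\sum_{i=0}^{N-1}(1-2c_i)y_i$, which is (\ref{eq3-ML}).

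As a consistency remark, the same conclusion can be phrased through the LLRs of (\ref{eq-cho}). Since $\mathcal{L}(y_i)=2y_i/\sigma^2$ is a positive rescaling of $y_i$, the objective is equivalent to maximizing $\sum_i(1-2c_i)\mathcal{L}(y_i)$. This connects the result to the soft information that the BPSC decoding will bias.

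There is no real obstacle in this argument. The only points that need care are the following:
\begin{itemize}
\item justifying that ML equals MAP, which follows from the uniform prior on codewords;
\item noting that the constant-modulus property of BPSK is what removes the $(1-2c_i)^2$ term;
\item stating that ties in the $\arg\max$ are resolved arbitrarily.
\end{itemize}
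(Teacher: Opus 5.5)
Your argument is correct, but it takes a different route from the paper's. You specialize to the Gaussian density immediately, reduce ML to minimum Euclidean distance, and obtain the correlation metric by expanding the square and using $(1-2c_i)^2=1$.

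The paper instead stays at the level of a generic binary-input memoryless channel until the last step:
\begin{itemize}
\item it subtracts the codeword-independent constant $\sum_i \ln\Pr(y_i\mid 1)$;
\item it observes that $\ln\frac{\Pr(y_i\mid c_i)}{\Pr(y_i\mid 1)}=(1-c_i)\mathcal{L}(y_i)$;
\item it writes $1-c_i=\tfrac12+\tfrac12(1-2c_i)$ and discards $\tfrac12\sum_i\mathcal{L}(y_i)$, which gives $\arg\max_{c\in\mathbb{C}}\sum_i(1-2c_i)\mathcal{L}(y_i)$;
\item only then does it invoke $\mathcal{L}(y_i)=2y_i/\sigma^2$ with $\sigma^2>0$.
\end{itemize}

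The paper's route therefore yields the LLR-correlation rule for any binary-input memoryless channel. AWGN/BPSK enters only through the linearity of the LLR. The role of your constant-modulus observation is absorbed into the discarded $\tfrac12\sum_i\mathcal{L}(y_i)$ term. This LLR form is also the one most naturally tied to the soft information that BPSC perturbs.

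Your route is shorter and fully self-contained for the AWGN model, and it makes the geometric meaning (nearest codeword in Euclidean distance) explicit. However, your LLR remark is obtained only by rescaling $y_i$, so it does not carry over beyond AWGN. Your remarks on ML versus MAP and on ties are harmless but unnecessary, since the statement concerns the ML decoder itself.
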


\begin{proof}
	Assume that for polar codes, each codeword bit is transmitted independently through the B-DMC. 
	The research work of \cite{SCL-Stimming2015-hardware_LLR_based, SCL-Zhang2016-Split_Reduced} show that an ML decoder yields the estimation $\hat c_0^{N - 1}$ satisfying
	\begin{equation}	\label{eq-appendix-1}
		\hat c_0^{N - 1} = \mathop {\arg \max }\limits_{c_0^{N - 1} \in \mathbb{C}} \sum\limits_{i = 0}^{N - 1} {{{\ln }} \Pr \left( {{y_i} \mid {c_i}} \right)}.
	\end{equation}

	Note that, given $y_0^{N - 1}$, $\sum\nolimits_{i = 0}^{N - 1} {{{\ln }}} \Pr \left( {{y_i} \mid 1} \right)$ can be determined \cite{SCL-Zhang2016-Split_Reduced}.
	Therefore, (\ref{eq-appendix-1}) can be expressed as
	\begin{equation}	\label{eq-appendix-2}
		\begin{aligned}
			\hat c_0^{N - 1} 
			& = \mathop {\arg \max }\limits_{c_0^{N - 1} \in {\mathbb{C}}} \left( {\sum\limits_{i = 0}^{N - 1} {{{\ln }} \Pr \left( {{y_i} \mid {c_i}} \right)}  - \sum\limits_{i = 0}^{N - 1} {{{\ln }} \Pr \left( {{y_i} \mid 1} \right)} } \right)\\
			& = \mathop {\arg \max }\limits_{c_0^{N - 1} \in {\mathbb{C}}} \sum\limits_{i = 0}^{N - 1} {{{\ln }}\frac{{\Pr\left( {{y_i}|{c_i}} \right)}}{{\Pr\left( {{y_i}|1} \right)}}}. 
		\end{aligned}
	\end{equation}		
	Since $c_i \in \{0, 1\}$, if $c_i = 0$, $\ln \frac{{\Pr \left( {{y_i}|{c_i} = 0} \right)}}{{\Pr \left( {{y_i}|1} \right)}} = \mathcal{L}\left( {{y_i}} \right)$; otherwise, $\ln \frac{{\Pr \left( {{y_i}|{c_i} = 1} \right)}}{{\Pr \left( {{y_i}|1} \right)}} = 0$.
	Hence, it can be obtained that
	\begin{equation}	\label{eq-appendix-2-1}
		\ln \frac{{\Pr \left( {{y_i}|{c_i}} \right)}}{{\Pr \left( {{y_i}|1} \right)}} = \left( {1 - {c_i}} \right)\mathcal{L}\left( {{y_i}} \right).
	\end{equation}		
	Therefore, based on (\ref{eq-appendix-2-1}), (\ref{eq-appendix-2}) can be rewritten as
	\begin{equation}	\label{eq-appendix-6}
		\begin{aligned}
			\hat c_0^{N - 1} = \mathop {\arg \max }\limits_{c_0^{N - 1} \in {\mathbb{C}}} \sum\limits_{i = 0}^{N - 1} {\left( {1 - {c_i}} \right)} {\cal L}\left( {{y_i}} \right).
		\end{aligned}
	\end{equation}	
	Further, the above equation can be written as
	\begin{equation}	\label{eq-appendix-3}
		\hat c_0^{N - 1} = \mathop {\arg \max }\limits_{c_0^{N - 1} \in {\mathbb{C}}} \left( {\frac{1}{2}\sum\limits_{i = 0}^{N - 1} {{\cal L}\left( {{y_i}} \right)}  + \frac{1}{2}\sum\limits_{i = 0}^{N - 1} {\left( {1 - 2{c_i}} \right)} {\cal L}\left( {{y_i}} \right)} \right).			
	\end{equation}
	It is note worthy that given $y_0^{N - 1}$, $\sum\nolimits_{i = 0}^{N - 1} { {{\cal L}\left( {{y_i}} \right)/2} }$ can also be determined. 
	Therefore, (\ref{eq-appendix-3}) can be reformulated as
	\begin{equation}	\label{eq-appendix-4}
		\hat c_0^{N - 1} = \mathop {\arg \max }\limits_{c_0^{N - 1} \in {\mathbb{C}}} \sum\limits_{i = 0}^{N - 1} {\left( {1 - 2{c_i}} \right)} {\cal L}\left( {{y_i}} \right).
	\end{equation}
	As mentioned in Sec. \ref{sec2-1-polar}, since $\mathcal{L}\left(y_{i}\right) = 2 y_{i} / \sigma^{2}$ and $\sigma ^2 \ge 0$, based on (\ref{eq-appendix-4}), (\ref{eq3-ML}) can be led to.	
\end{proof}

Theorem \ref{lemma-ML} implies that for the ML decoding, the estimation $\hat c_0^{N - 1}$ tends to align with the hard decisions of $y_0^{N - 1}$.
In this case, it follows that $1 - 2{{\hat c}_i} = {\rm{sign}}({y_i})$, where $\rm{sign}(\cdot)$ denotes the sign function.
Let 
\begin{equation}	\label{eq3-0}
	\Lambda ( {\hat c_0^{N - 1}} ) := \sum\limits_{i = 0}^{N - 1} {\left( {1 - 2{{\hat c}_i}} \right) {y_i}}.
\end{equation}
It measures the likelihood of estimations between the ML and the SC-based decoding.
A greater $\Lambda ( {\hat c_0^{N - 1}} )$ indicates that estimation ${\hat c_0^{N - 1}}$ is more likely to be the ML decoding output, and vice versa.
Therefore, it can be conjectured that for the trapped PSC decoding failure, as mentioned in Sec. \ref{sec2-3-PSC}, the estimation ${\hat c_0^{N - 1}}$ corresponds to a small $\Lambda ( {\hat c_0^{N - 1}} )$.
Moreover, the research of \cite{BP-Zhang2021-BP_correction, BP-Sun2017-Post_Process_org} have shown that correcting the erroneous symbols $y_i$ or enhancing the reliable ones can both enhance the error-correction performance of polar decoders. 
Hence, this paper proposes the BPSC decoding to further improve the PSC decoding performance. 
It functions by introducing biased perturbations to correct the possibly erroneous received symbol.

Let us consider the estimation $\hat{c}_i$ disagrees with the hard decision of $y_i$, i.e., $1 - 2{{\hat c}_i} \ne {\rm{sign}}({y_i})$.
It is also known as the disagree bit (DB) \cite{Yang2025-HPSC}.
Let $\mathcal{D}$ denote the index set of all DBs, defined as 
\begin{equation}	\label{eq3-2}
	\mathcal{D} := \{ i \in \{0, 1, \ldots, N-1\} \mid \text{sign}(y_i) \neq (1 - 2\hat{c}_i) \}.
\end{equation}	
Note that $\mathcal{D}$ is an ordered set.
For any $0 \le i < j$, there exists $| {{y_{{\cal D}\left[ i \right]}}} | \ge | {{y_{{\cal D}\left[ j \right]}}} |$.

Recalling Theorem \ref{lemma-ML}, it implies that the DBs result in a smaller $\Lambda ( {\hat c_0^{N - 1}} )$.
This makes the estimations of PSC decoding deviate from those of the ML decoding.
To rectify this, during the BPSC decoding, only the unreliable received symbols identified by the set $ \mathcal{D} $ are introduced with biased perturbations.
For other symbols, they are performed with the random perturbation.
Similar to (\ref{eq3-1}), for the biasedly perturbed $y_{j^*}^\prime$, where $j^* \in \mathcal{D}$, its entry is defined as
\begin{equation}	\label{eq3-3}
	y_{j^*}^\prime  = \left( {1 - 2{{\hat c}_{j^*}}} \right) \cdot \varepsilon , 
\end{equation}
where ${{\hat c}_{j^*}}$ is the estimation made upon the original SC decoding, and $\varepsilon$ denotes the magnitude of the biased perturbation.

In the BPSC decoding, the set \( \mathcal{D} \) is first constructed based on the obtained ${\hat c_0^{N - 1}}$ and $y_0^{N - 1}$.
In particular, the DB with a larger magnitude is given priority to be biasedly perturbed.
This is because, as mentioned in Theorem \ref{lemma-ML}, the possibly erroneous $y_i$ that leads to a smaller $\Lambda ( {\hat c_0^{N - 1}} )$ should be prioritized for correction.
During each SC decoding attempt, the algorithm is performed on one of the positions in $\mathcal{D}$.
For instance, in the $t$-th attempt, the received symbol $y_{j^*}$ is perturbed with the value of $\left( {1 - 2{{\hat c}_{j^*}}} \right) \varepsilon$, where ${j^*} \leftarrow {\cal D}[t]$.
Other symbols indexed by $\left\{ {0,1, \ldots ,N - 1} \right\}{{\backslash }} \{j^*\}$ are randomly perturbed as in (\ref{eq3-1}).
Afterwards, the perturbed ${y^\prime} _0^{N - 1}$ is re-decoded through the underlying SC decoding.
Once all prespecified perturbation attempts have been exhausted or the correct bit estimations are obtained, the decoding process is terminated. 
Note that it is unnecessary to biasedly perturb all the DBs. 
In this case, although $\Lambda ( {\hat c_0^{N - 1}} )$ can achieve the greatest value, the resulting codeword might be invalid, i.e., $\hat{c}_0^{N-1} \notin \mathbb{C}$.

\subsection{HPSC Decoding}\label{sec3-3-PPDS}

In order to further improve the SC decoding performance, this paper proposes a novel HPSC decoding.
It contains a two-level scheme that integrates both the PSC and BPSC decoding.
At first, the first-level scheme, i.e., the PSC decoding with only random perturbations, is performed.
When the PSC decoding is trapped, the second-level scheme, i.e., the BPSC decoding, will be invoked.
By the use of the above mentioned schemes, the HPSC decoding can efficiently escape from the trapping state that containing erroneous bit estimations. 

\begin{figure}[htbp]
	\centering 
	\includegraphics[scale=0.47]{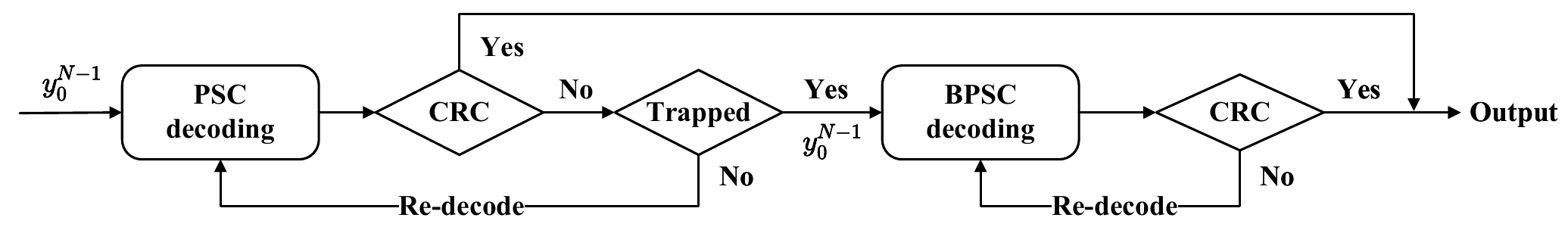}
	\caption{ 
		Block diagram of the HPSC decoding.
	}  
	\label{fig.3-3-1}
\end{figure}

For the HPSC decoding, challenges lie in accurately identifying the trapped decoding process.
For this, the CRC codeword, generated by the CRC function, is used.
Note that, in this paper, the CRC codes serve for two purposes.
The primary one is used as a detector for selecting the valid bit estimations from multiple decoding candidates. 
On the other hand, it can also help identify the PSC decoding trap.
When the CRC validation fails, the HPSC decoding can assess whether the PSC decoding estimations fall into a trapping state based on the obtained CRC codeword.
As mentioned in Sec. \ref{sec2-3-PSC}, in this case, the PSC decoding has been outputting the same bit estimations. 
If the CRC codewords between two consecutive decoding attempts are identical, the decoding trap has been successfully identified. 
Afterwards, the BPSC decoding is invoked; otherwise, the algorithm performs the next PSC decoding attempts.
Block diagram of the HPSC decoding is illustrated as in Fig. \ref{fig.3-3-1}.

Let $CRC(\cdot)$ denote the CRC function.
If the CRC validation fails, it yields a $false$, denoted as $CRC(\cdot) = false$; otherwise it outputs a $true$, denoted as $CRC(\cdot) = true$.
Further let $m$ denote the CRC codeword length, and $s_0^{m - 1}(t)$ denote the CRC codeword in the $t$-th decoding attempt.
Let $SC(\cdot)$ denote the underlying SC decoding function with inputs $\mathcal{A}$ and ${y} _0^{N - 1}$ (or ${y'}_0^{N - 1}$).
The proposed HPSC decoding is summarized as in Algorithm \ref{alg:PSC}.
It is worth mentioning that all the above mentioned perturbation-based decoding schemes can be viewed as a universal framework for polar decoding. 
It can be further enhanced by the use of other underlying decoding algorithms, such as the SCL \cite{SCL-Vardy2015, SCL-Niu2012_SCL, SCL-Niu2012-CA_SCL, SCL-Stimming2015-hardware_LLR_based, SCL-Yuan2025-Soft_Decoding} and fast SC/SCL decoding \cite{FSC-Ardakani2019, FSC-Lu2024-Sequence_Nodes, FSCL-Hashemi2016-hardware, FSCL-Hashemi2017-FSSCL_and_hardware}.
Moreover, the characteristics of soft information perturbations revealed in Sec. \ref{sec4} and the methods for selecting the optimal perturbation proposed in Sec. \ref{sec5} are also applicable to the BPSC and HPSC decoding. 
This is because both of them rely on the random perturbation. 

\begin{algorithm}[t]\label{alg:PSC}	
	\caption{The HPSC Decoding} 	
	\KwIn{$\mathcal{A}$, $T_\text{max}$,  $\sigma _{p}^2$, $\varepsilon$, $y_0^{N-1}$;	  }
	\KwOut{$\hat{u}_0^{N-1}$;}		
	\textbf{Initialization:} $t \leftarrow 0 $, $s_0^{m-1}(-1) \leftarrow \mathbf{0}_m$;\\ 

	$\left\{ {\hat u_0^{N - 1},\hat c_0^{N - 1}} \right\} \leftarrow $ $SC ( \mathcal{A}, y_0^{N-1} )$; \hfill \tcp{Original SC decoding}
	
	\eIf{${CRC}(\hat{u}_0^{N-1}) = true$}
	{
		\Return {$\hat{u}_0^{N-1}$}; \hfill \tcp{Decoding successful}			
	}		
	{
		
		\While{${CRC}(\hat{u}_0^{N-1}) = false$ \textbf{\emph{and}} $t < {T_{\max }}$}
		{
			\tcp*[h]{First-level: PSC decoding}\\
			
			Compute ${y^\prime} _0^{N - 1}$ as in (\ref{eq3-1});\\			
			
			$\hat{u}_0^{N-1} \leftarrow $ $SC$($\mathcal{A}$, ${y^\prime} _0^{N - 1}$);\hfill \tcp{Re-decoding}

			\eIf{${CRC}(\hat{u}_0^{N-1}) = true$}
			{
				\Return {$\hat{u}_0^{N-1}$}; \hfill \tcp{Decoding successful}			
			}
			{
				$s_0^{m-1}(t) \leftarrow CRC(\hat u_0^{N - 1})$;\hfill \tcp{Storing codeword}
			}

			\tcp*[h]{Second-level: BPSC decoding}\\
			\eIf{$s_0^{m-1}(t-1) = s_0^{m-1}(t)$}
			{
				
				$\hat{u}_0^{N-1} \leftarrow BPSC ( \mathcal{A}, T_{\text{max}} - t - 1,  \sigma _p^2, \varepsilon, y_0^{N-1}, \hat c_0^{N - 1} ) $;\\
				
				\Return{$\hat{u}_0^{N-1}$}; \hfill \tcp{Decoding terminated}
				
			}
			{
				$t \leftarrow t + 1$;\  \hfill \tcp{Next decoding attempts}
			}

		}
		
	}	
	\Return{$\hat{u}_0^{N-1}$}; \hfill \tcp{Decoding failed}

	\SetKwProg{Sub}{Subroutine}{:}{}
	\Sub{BPSC($\mathcal{A}$, $T_{{\max}}$,  $\sigma _p^2$, $\varepsilon$, $y_0^{N-1}$, $\hat c_0^{N - 1}$)}
	{
		\textbf{Initialization:} $t \leftarrow 0 $, $\mathcal{D} \leftarrow \emptyset $;\\ %
		
		Construct set $\mathcal{D}$ as in (\ref{eq3-2});\\
			
		\While{${CRC}(\hat{u}_0^{N-1}) = false$ \emph{\textbf{and}} $t < T_{{\max}}$}
		{
			${j^*} \leftarrow \mathcal{D}[t]$;  \hfill \tcp{ Obtaining position}
			$y^\prime_{j^*} \leftarrow \left( 1 - 2{\hat{c}_{j^*}} \right)  \varepsilon $; \hfill \tcp{Resetting value }
			
			\For{$i \in \{0, 1, \dots, N-1\} \setminus \{j^*\}$}
			{
				Compute $y^\prime_i$ as in (\ref{eq3-1}); 
			}

			$\hat{u}_0^{N-1} \leftarrow $ $SC$($\mathcal{A}$, ${y^\prime} _0^{N - 1}$); \hfill \tcp{Re-decoding}
			
			\eIf{${CRC}(\hat{u}_0^{N-1}) = true$}
			{
				\Return {$\hat{u}_0^{N-1}$}; \hfill \tcp{Decoding successful}			
			}
			{
				$t \leftarrow t+1 $; \hfill \tcp{New decoding attempt}
			}
			
		}
		
		\Return{$\hat{u}_0^{N-1}$}; \hfill \tcp{Decoding failed}
		
	}

\end{algorithm}

\subsection{Complexity and Latency Analyses}

In this subsection, both the decoding complexity and latency of the proposed HPSC decoding are analyzed.
In particular, their worst-case scenarios are also looked into.
Note that, since the HPSC decoding includes both the random and biased perturbation strategies, the complexity and latency analyses of BPSC decoding are omitted for brevity.

\subsubsection{Decoding Complexity}
Complexity of the HPSC decoding attributes to the complexity of underlying polar decoder and that of identifying the DBs, i.e., constructing the set $\mathcal{D}$.
The SC decoding complexity is ${\cal O}\left( {N{{\log }_2}N} \right)$ \cite{Arikan2009-polar_codes}.
The complexity of selecting $|\mathcal{D}|$ DBs from the $N$ received symbols $y_0^{N-1}$, as in (\ref{eq3-2}), is $|\mathcal{D}|$ \cite{Yang2025-HPSC}.
Note that perturbing the soft information, as in (\ref{eq3-1}) and (\ref{eq3-3}), requires only one addition operation.
Compared with the SC decoding complexity, it can be neglected.
Let $T_\text{avg}$ denote the average number of decoding attempts in the perturbation-based decoding.
The average complexity of HPSC decoding can be defined as
\begin{equation}	\label{eq5-1-HPSC-comp}
	{\Phi _{{\text{HPSC}}}} = {T_{{\text{avg}}}}N{\log _2}N + |\mathcal{D}|,
\end{equation}
where $|\mathcal{D}| \in \left[ {{T_\text{max}},N} \right]$.
It is noteworthy that ${T_\text{avg}}$ of $1$ and $T_\text{max}+1$ correspond to the best-case and worst-case decoding of proposed algorithm, respectively.
Moreover, the worst-case complexity of the HPSC decoding can be characterized as $\mathcal{O}\left( { T_\text{max} N\log_2 N} \right)$, i.e., $T_\text{max}$ times that of the conventional SC decoding.

\subsubsection{Decoding Latency}
The decoding latency is measured as the average number of required clock cycles (CCs) in decoding one codeword \cite{FSCL-Hashemi2017-FSSCL_and_hardware, SC-Leroux2013-hardware_Semi_Parallel_SC}.
Since perturbations ${z^\prime}_0^{N-1}$ are independent of the received symbol $y_0^{N-1}$, perturbing information can be performed in parallel \cite{Yang2026-IPSC&HPSC}.
For the HPSC decoding, each decoding attempt requires $2N-2$ CCs for performing the underlying SC decoding, which are characterized by (\ref{eq2-f}) and (\ref{eq2-g}).
It further requires one CC for computing ${y^\prime}_0^{N - 1}$ of (\ref{eq3-1}) and (\ref{eq3-3}), along with another one CC for performing the CRC validation. 
Moreover, the process of generating the DB set $\mathcal{D}$ requires an additional CC.
Consequently, average latency of the HPSC decoding can be determined by
\begin{equation}	\label{eq5-2-HPSC-late}
	{{\tau }_\text{HPSC}} = 2 {T_\text{avg}} N + 1.
\end{equation}
Note that when $T_\text{avg} = T_\text{max} + 1 $, (\ref{eq5-2-HPSC-late}) denotes the decoding latency in the worst-case. 

\section{Simulation Results}\label{sec7}

This section presents the numerical results of the proposed HPSC decoding.
Polar codes with $N \in \left\{ {1024,4096,16384} \right\}$ and $R \in \left\{ {1/4, 1/2, 3/4} \right\}$ are considered in our simulations.
The information set $\mathcal{A}$ is obtained through GA \cite{GA-Peter2012} at the \textit{design}-SNR of 2.5 dB.
The length-16 CRC code of 5G new radio is employed for the decoding, i.e., $m=16$. Its generator polynomial is $X^{16} +X^{15} +X^2 +1$ \cite{5G_Standard}.
Rate of the CRC-aided polar codes, denoted as $\mathcal{P}(N, K + m)$, is $R = K / N$.
Moreover, it is set that $\varepsilon  = 2/\sigma^2$.
Note that the optimal $\varepsilon $ depends on codeword lengths and channel SNR regimes, which can be determined through Monte-Carlo simulations \cite{SCL-Wang2023-Perturbation, Yang2025-HPSC}.
In particular, the optimal perturbation power is heuristically constructed through the methods in Sec. \ref{sec5}.
Moreover, the CA-SCL \cite{SCL-Niu2012-CA_SCL}, TSCF \cite{SCF-Ercan2019}, and PSC \cite{Yang2025-HPSC} decoding algorithms are also provided for comparison.
Let $L$ denote the list size of SCL-based decoding.

\subsection{Decoding Performance}

\begin{figure}[htbp]
	\centering 
	\includegraphics[scale=0.45]{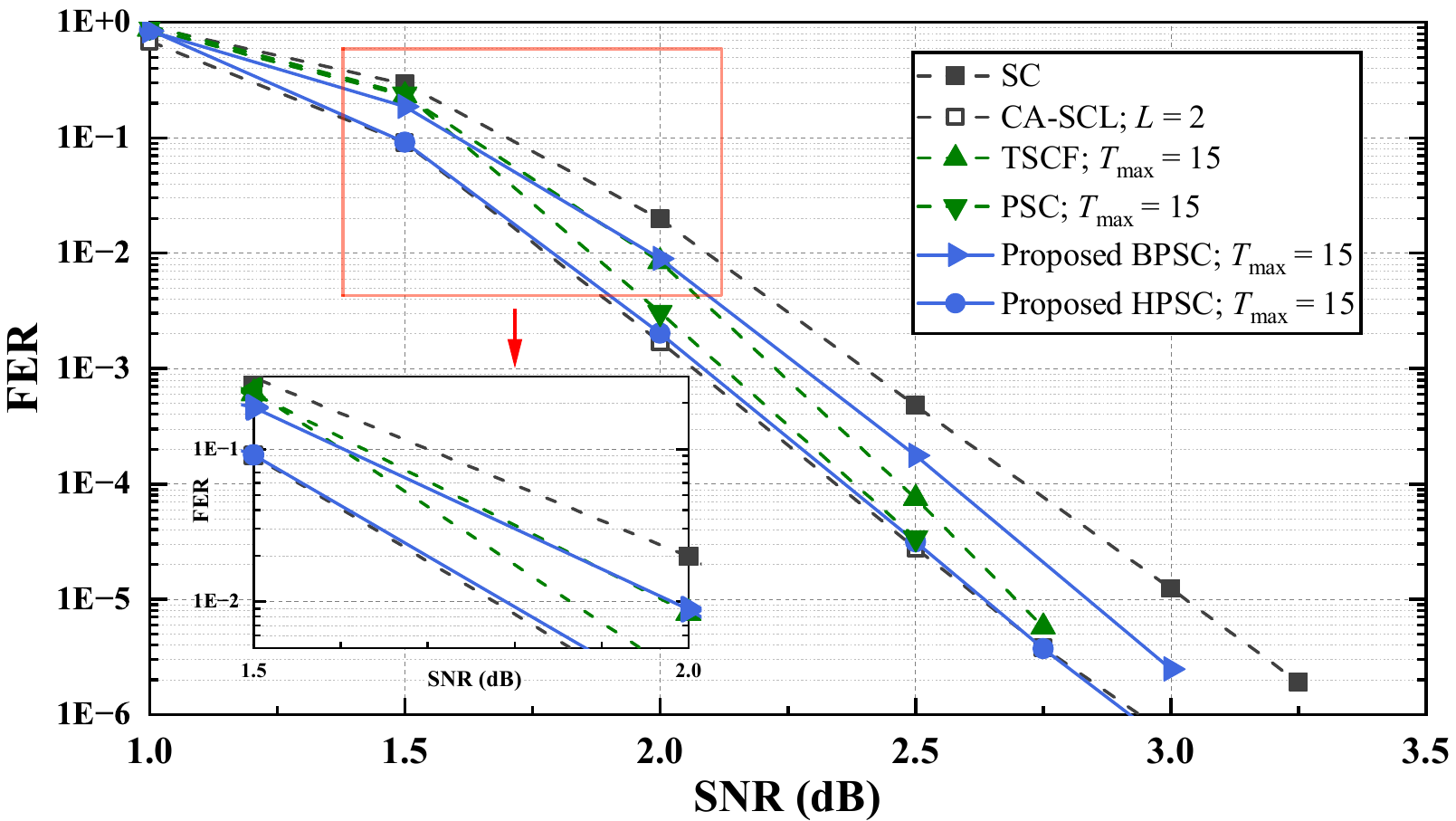}
	\caption{ 
		Decoding FER performance of the $\mathcal{P}(4096, 2048+16)$ code, where $T_\text{max} = 15$.
	}  
	\label{Fig.5-1}
\end{figure}

Fig. \ref{Fig.5-1} shows performance of the proposed perturbation-based decoding algorithms for the $\mathcal{P}(4096, 2048+16)$ code, where $T_\text{max} = 15$.
It can be observed that compared with the conventional SC decoding, both proposed algorithms can achieve significant performance gains.
In particular, performance of the HPSC decoding is superior to that of the PSC decoding and that of the BPSC decoding.
Therefore, the following analyses will focus on the HPSC decoding.
It can be seen that, with the same maximum number of decoding attempts, the proposed HPSC decoding outperforms the state-of-the-art TSCF decoding.
This is because both the HPSC and TSCF decoding improve the SC decoding performance by correcting the FEB. 
For the TSCF decoding, the effectiveness of bit flipping decreases rapidly as the codeword length increases, since accurately identifying the real FEB through the LLR magnitude remains challenging \cite{SCL-Wang2023-Perturbation}. 
This results in a limited performance improvement when the TSCF decoding is applied to long polar codes. 
However, as illustrated in Fig. \ref{fig.7-3-1}, the probability of successfully correcting the FEB through perturbing the information increases with the codeword length. 
This indicates that the HPSC decoding can facilitate the decoding of long polar codes. Therefore, the proposed decoding can achieve better error-correction performance than the TSCF decoding.
Moreover, it can be seen that the HPSC decoding exhibits a similar error-correction performance as the CA-SCL decoding with the list size of $L=2$.

\begin{figure}[htbp]
	\centering 
	\includegraphics[scale=0.45]{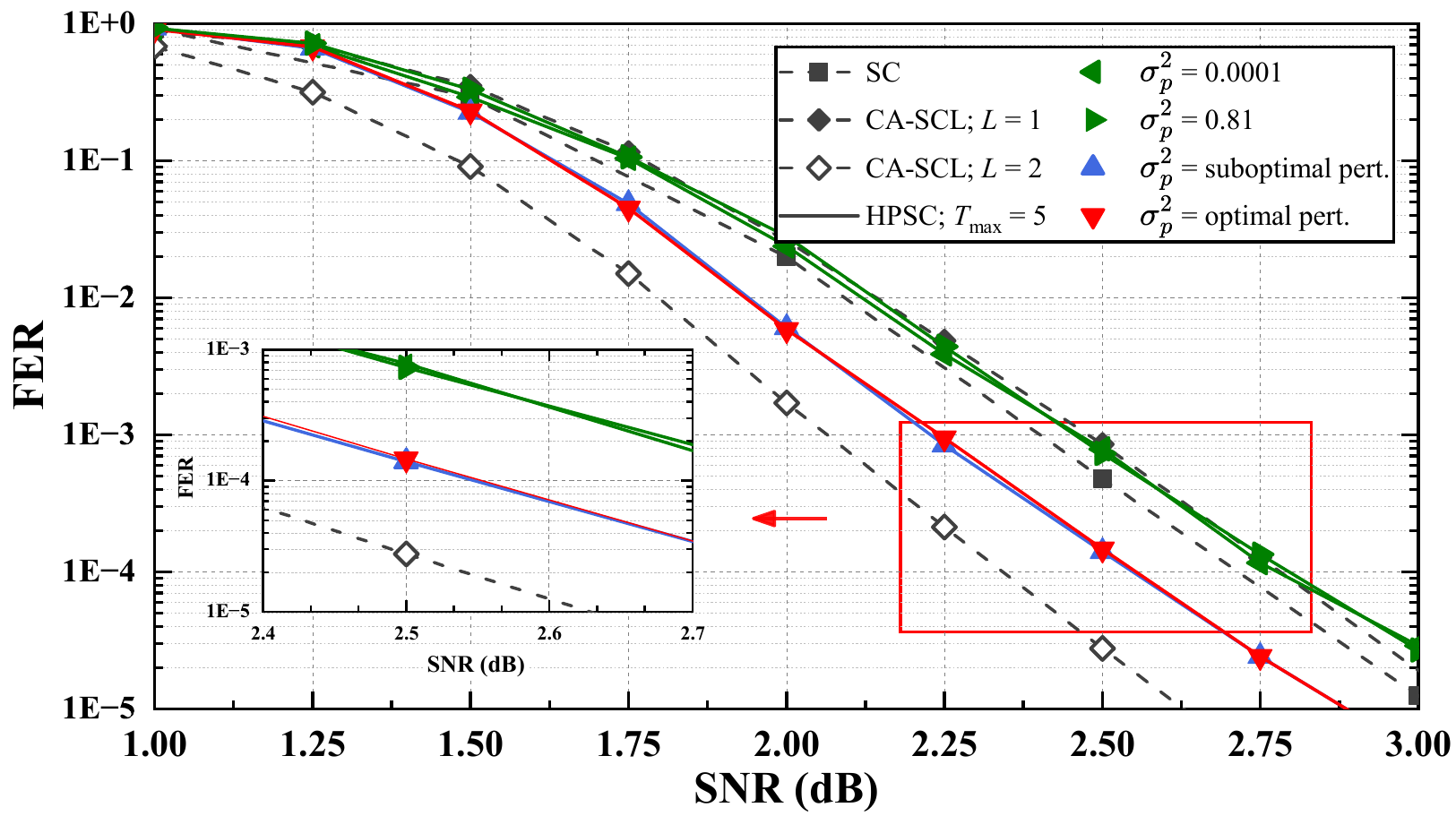}
	\caption{ 
		Performance of the HPSC decoding for the $\mathcal{P}(4096, 2048+16)$ code under various perturbation powers, where ${T_{{\text{max}}}} = 5$.
	}  
	\label{fig.7-3-2}
\end{figure}

Fig. \ref{fig.7-3-2} shows performance of the HPSC decoding for the $\mathcal{P}(4096, 2048+16)$ code under various perturbation powers, where ${T_{{\text{max}}}} = 5$. 
Note that the optimal and suboptimal perturbation powers are determined according to the numerical results in Table \ref{table:opt_perturbation}.  
It can be obtained that, with the help of optimal perturbations, the HPSC decoding performance can be significantly improved. 
Moreover, excessively small and large $\sigma _p^2$ both lead to a lower $\Theta $, resulting in the limited performance improvement. 
E.g., when $\sigma _p^2 \in \left\{ {0.0001,0.81} \right\}$, the HPSC decoding shows a similar performance as the conventional SC decoding. 
This observation indicates that the perturbation power plays a critical role in the proposed decoding.
Therefore, it should be carefully chosen. 
In particular, the HPSC decoding exhibits a comparable error-correction performance under both the optimal and suboptimal perturbation powers, e,g, $\sigma _p^2 \in \left\{ {0.0529, 0.0513} \right\}$.
This is because the corresponding FEB correction probabilities are nearly identical, with values of $0.3710$ and $0.3706$ at the target SNR of 2.5 dB, respectively. 
It further indicates that the perturbation-based decoding algorithms are not highly sensitive to the perturbation power, when it is selected close to the optimal one.
Table \ref{table:opt_perturbation} and Fig. \ref{fig.7-3-2} have validated that, in practice, the SAPS algorithm can be used to efficiently determine the optimal perturbation with a negligible error-correction performance degradation.

\begin{figure}[htbp]
	\centering 
	\includegraphics[scale=0.5]{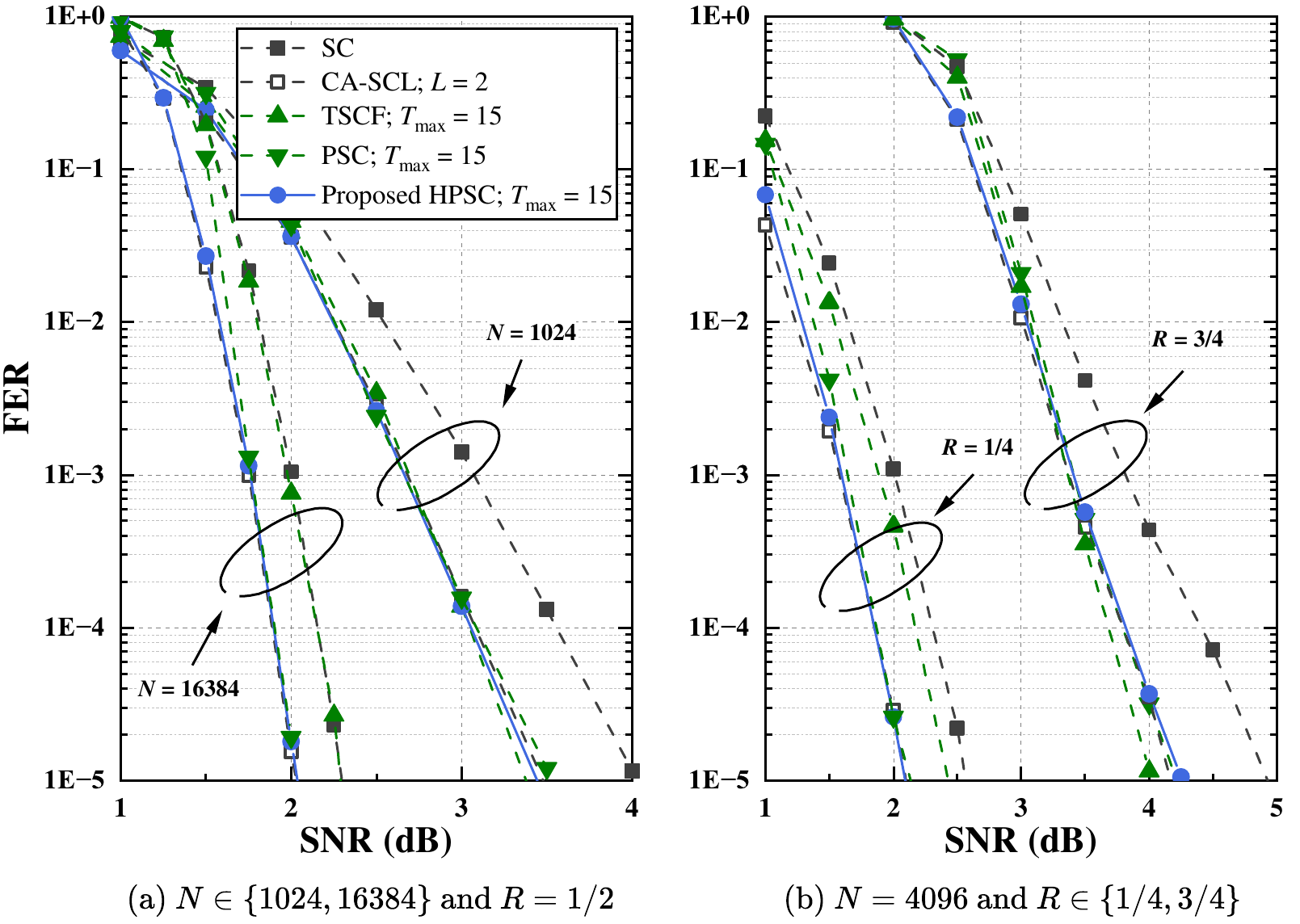}
	\caption{ 
		Decoding performance of the HPSC decoding for long polar codes, where $T_\text{max} = 15$. (a) various codeword lengths $N \in \left\{ {1024,16384} \right\}$ and $R=1/2$; (b) various rates $R \in \left\{ {1/4, 3/4} \right\}$ and $N = {4096}$.
	}  
	\label{Fig.5-2}
\end{figure}

Fig. \ref{Fig.5-2} shows the performance of HPSC decoding for various codeword lengths and rates.
Fig. \ref{Fig.5-2}(a) shows decoding performance of the proposed decoding for long polar codes, where $N \in \left\{ {1024,16384} \right\}$, $R=1/2$, and $T_\text{max} = 15$.
Compared with the TSCF decoding, the HPSC decoding still yields better performances in the longer codeword length regime, e.g., $N \in \left\{ {4096,16384} \right\}$.
This is because, for the SCF-based decoding, only when the real FEB has been flipped, the decoding performance will be improved.
However, for long polar codes, it becomes challenging to accurately correct the real FEB with a small number of flipping attempts, e.g., $T_\text{max} \le 15$. 
In contrast, the HPSC decoding perturbs the received symbols $y_0^{N-1}$ without the requirement of identifying the erroneous bit estimation $\hat{u}_i$.
Fig. \ref{Fig.5-2}(b) further shows the FER performance of HPSC decoding for various rates $R \in \left\{ {1/4, 3/4} \right\}$, where $N = 4096$ and $T_\text{max} = 15$.
It can be seen that the HPSC decoding achieves performance comparable to the CA-SCL decoding (\(L=2\)) over all rates. 
Moreover, the proposed algorithm do not require any path sorting or path expansion, resulting in lower hardware complexity.
Based on Figs. \ref{Fig.5-1} and \ref{Fig.5-2}, it can be concluded that, compared with the bit-flipping strategy, soft information perturbations, i.e., the HPSC decoding, can provide robust performance improvements under various codeword lengths and rates when decoding the long polar codes.

\subsection{Decoding Complexity and Latency}

Fig. \ref{Fig.5-4} shows the normalized average complexity in decoding the $\mathcal{P}(4096, 2048 + 16)$ code, where the normalization factor is $N{\log _2}N$ and $T_\text{max} = 15$. 
For the CA-SCL decoding, the computational complexity of path sorting is $2KL{\log _2}\left( {2L} \right)$ \cite{SCL-Stimming2015-hardware_LLR_based}.
It can be seen that the complexity of HPSC decoding is still lower than that of the TSCF decoding. 
In the high SNR regime, e.g., SNR $\ge$ 2.0 dB, complexity of the HPSC decoding reduces to that of the SC decoding. 
Compared with the CA-SCL decoding ($L=2$), the HPSC decoding can reduce complexity by more than half, but without degrading the error-correction performance.

\begin{figure}[htbp]
	\centering 
	\includegraphics[scale=0.45]{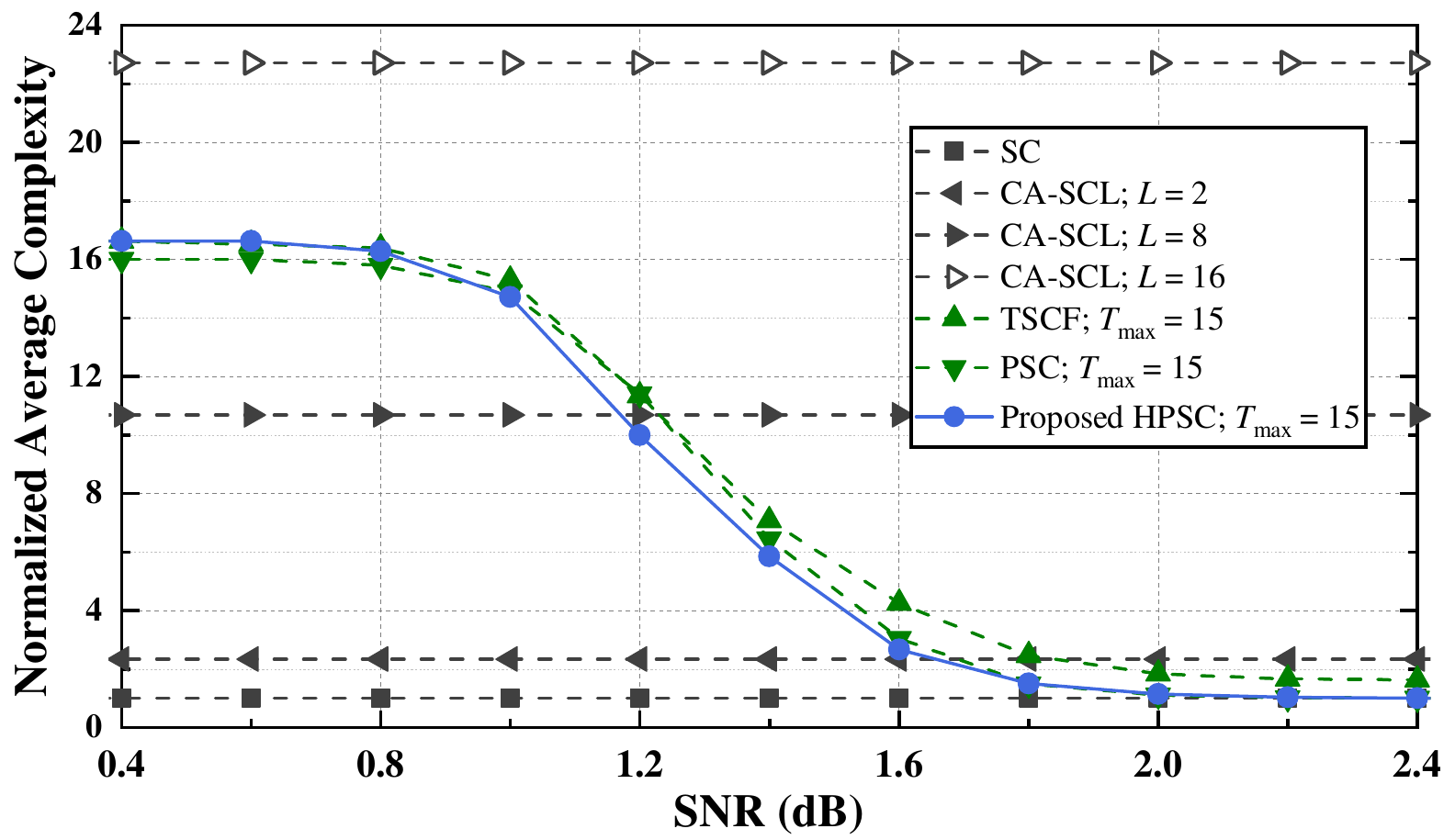}
	\caption{ 
		Normalized average complexity in decoding the $\mathcal{P}(4096, 2048+16)$ code, where $T_\text{max} = 15$.	
	}  
	\label{Fig.5-4}
\end{figure}

Finally, Table \ref{table:latency} shows the average latency in decoding the $\mathcal{P}(4096, 2048 + 16)$ code, where $T_\text{max} = 15$. 
Again, it can be noticed that compared with the TSCF decoding and the PSC decoding, the HPSC decoding requires fewer CCs, while maintaining a better FER performance.
While comparing with the CA-SCL decoding, the HPSC decoding still yields a significant decoding latency advantage.
E.g., when SNR = 2.5 dB, the CA-SCL decoding ($L=2$) requires 10,238 CCs.
In comparison, the HPSC decoding requires only 8,216 CCs, reaching a CC reduction of 19.7$\%$.

\begin{table}[htbp]
	\centering
	\caption{
		Average Latency (CCs) in Decoding the $\mathcal{P}(4096, 2048 + 16)$ Code
	}
	\label{table:latency}
	\begin{tabular}{|c|c|c|c|c|c|}
		\hline     
		\multirow{2}{*}{Polar Decoder} & \multirow{2}{*}{\( T_\text{max} / L \)} & \multicolumn{4}{c|}{SNR (dB)} \\ \cline{3-6} 
		&                                    & 1.0         & 1.5       & 2.0         & 2.5       \\ \hline
		SC \cite{SC-Leroux2013-hardware_Semi_Parallel_SC}                             & $-$                                & 8,190     & 8,190     & 8,190     & 8,190     \\ \hline
		CA-SCL \cite{SCL-Niu2012-CA_SCL}          
		& 2                             & 10,238    & 10,238    & 10,238    & 10,238    \\ \hline
		
		TSCF \cite{SCF-Ercan2019}      
		& 15                                 & 117,138   & 40,359    & 9,828     & 8,222     \\ \hline
		
		PSC \cite{Yang2025-HPSC}      
		& 15                                 & 122,619   & 37,153    & 9,307     & 8,209     \\ \hline
		
		\textbf{Proposed HPSC}                   
		& 15                                 & 114,609   & 30,187    & 9,114     & 8,216     \\ \hline
		
	\end{tabular}
\end{table}

\section{Conclusion}\label{sec8}

This paper has improved the SC decoding performance through soft information perturbations, leading to two novel perturbation-based decoding algorithms, i.e., the BPSC and HPSC decoding. Theoretical analyses have revealed that the soft information perturbation improves the polar decoding performance by correcting the erroneous bit estimations, especially the FEB. 
By maximizing the FEB correction probability, two selection methods, namely the TAPS and SAPS algorithms, have been proposed to efficiently determine the optimal perturbation. Numerical results have shown that, with optimal perturbations, the proposed HPSC decoding outperforms the TSCF and PSC decoding. Moreover, it can achieve a similar performance as the CA-SCL decoding, without any path sorting and expansion. In particular, the proposed algorithms can still yield robust performance gains over various codeword lengths and rates. Future research will explore the hardware implementation of the proposed algorithms.

\section*{Acknowledgement}
This work was supported in part by the National Natural Science Foundation of China (NSFC) with project ID 62471503; and in part by the Natural Science Foundation of Guangdong Province (NSFGP) with project ID 2024A1515010213.

\appendix

\subsection{Proof of Lemma \ref{lemma-probability_correct}} \label{appendix_A}

To facilitate the proof of Lemmas \ref{lemma-probability_correct} and \ref{lemma-probability_stable}, the following Property \ref{property-Q_func} is first introduced.
It reveals the characteristics of $Q$-function.

\begin{property} [\cite{Qfunc-LopezBenitez2011}] \label{property-Q_func}
	Given the CDF of the standard normal distribution, for any $\iota \ge 0$, there exists
	\begin{equation}
		Q(\iota ) \approx \frac{1}{2}\exp \left( { - \frac{{{\iota ^2}}}{2}} \right).
	\end{equation}
\end{property}


	
	Based on Theorem \ref{theorem-perturbation-all}, since ${\cal L}'\left( {{u_i}} \right) = {\cal L}\left( {{u_i}} \right) + {\cal L} ( {z_i^{*}} )$, 
	\begin{equation}\label{proof5-1}
		\begin{aligned}
			& \Pr \left( {{\cal L}'\left( {{u_i}} \right) \ge 0,
				{\cal L}\left( {{u_i}} \right) < 0
				\mid \hat x_0^{i - 1} = {{\bf 0}_i},
				\hat u_0^{i - 1} = {{\bf 0}_i}} \right)\\
			& =\Pr \left(
			{\cal L}\left( {{u_i}} \right)
			+ {\cal L}(z_i^*)
			\ge 0,
			{\cal L}\left( {{u_i}} \right) < 0
			\right) \\
			& =\Pr \left(
			{\cal L}(z_i^*)
			\ge -{\cal L}\left( {{u_i}} \right),
			{\cal L}\left( {{u_i}} \right) < 0
			\right).
		\end{aligned}
	\end{equation}
	Let $v = \mathcal{L}({u_i})$, where $v < 0$.
	According to the law of total probability, we have 
	\begin{equation} \label{proof5-1-1}
		\begin{aligned}
			\Pr \left( {{\cal L}(z_i^*) \ge  - {\cal L}\left( {{u_i}} \right),
				{\cal L}\left( {{u_i}} \right) < 0} \right)
			&=
			\int_{ - \infty }^0
			{\Pr \left(
				{{\cal L}(z_i^*) \ge  - v,
					{\cal L}\left( {{u_i}} \right) = v}
				\right)}
			\,{\rm d}v.
		\end{aligned}
	\end{equation}
	By utilizing the chain rule of probability, the above equation can be expressed as
	\begin{equation}
		\begin{aligned}
			\int_{-\infty}^{0} \Pr\left({\cal L}(z_i^*)\ge -v,{\cal L}(u_i)=v\right)\,{\rm d}v
			&= \int_{-\infty}^{0} \Pr\left({\cal L}(z_i^*)\ge -v\mid {\cal L}(u_i)=v\right)\Pr\left({\cal L}(u_i)=v\right)\,{\rm d}v \\
			&= \int_{-\infty}^{0} \Pr\left({\cal L}(z_i^*)\ge -v\mid {\cal L}(u_i)=v\right)f_{{\cal L}(u_i)}(v)\,{\rm d}v,
		\end{aligned}
	\end{equation}	
	where ${f_{{\cal L}\left( {{u_i}} \right)}}$ denotes the probability density function (PDF) of ${\cal L}\left( {{u_i}} \right)$.
	Since $\mathcal{L}({u_i}) \sim \mathcal{N}\left( {{\mu _i},2{\mu _i}} \right)$, it follows that
	\begin{equation} \label{proof5-1-2}
		{f_{{\cal L}\left( {{u_i}} \right)}}\left( v \right) = \frac{1}{{\sqrt {4\pi {\mu _i}} }}\exp \left( { - \frac{{{{(v - {\mu _i})}^2}}}{{4{\mu _i}}}} \right).
	\end{equation}
	Note that for the PSC decoding, perturbation ${\cal L}(z_i^*)$ is independent of the SC decoding \textit{a posteriori} LLRs ${\cal L}\left( {{u_i}} \right)$.
	Since $\mathcal{L}\left( {z_i^*} \right) \sim \mathcal{N}(0,4\sigma _i^2/{\sigma ^4})$, there exists
	\begin{equation} \label{proof5-1-3}
		\begin{aligned}
			\Pr \left( {\cal L}(z_i^*) \ge -v \mid {\cal L}\left( {{u_i}} \right) = v	\right)
			&
			=
			\Pr \left(
			{\cal L}(z_i^*) \ge -v
			\right) \\
			&=
			Q\left(
			\frac{-v\sigma^2}
			{2\sigma_i}
			\right).
		\end{aligned}
	\end{equation}

	Based on (\ref{proof5-1-1}) - (\ref{proof5-1-3}), (\ref{proof5-1}) can be expressed as
	\begin{equation} \label{proof5-1-4}
		\begin{aligned}
			& \Pr \left( {{\cal L}'\left( {{u_i}} \right) \ge 0,
				{\cal L}\left( {{u_i}} \right) < 0
				\mid \hat x_0^{i - 1} = {{\bf 0}_i},
				\hat u_0^{i - 1} = {{\bf 0}_i}} \right)\\
			& =\frac{1}{\sqrt{4\pi\mu_i}}
			\int_{-\infty}^{0}
			\exp\!\left(
			-\frac{(v-\mu_i)^2}{4\mu_i}
			\right)
			Q\!\left(
			\frac{-v \sigma^2 }{2\sigma_i}
			\right)\,\text{d}v .
		\end{aligned}
	\end{equation}
	Further, based on Property \ref{property-Q_func}, (\ref{proof5-1-4}) can be computed as
	\begin{equation}\label{proof5-2}
		\begin{aligned}
			&\Pr \left(
			{\cal L}'\left( {{u_i}} \right) \ge 0,
			{\cal L}\left( {{u_i}} \right) < 0
			\mid
			\hat x_0^{i-1}={{\bf 0}_i},
			\hat u_0^{i-1}={{\bf 0}_i}
			\right) \\
			&\approx
			\frac{1}{2\sqrt{4\pi\mu_i}}
			\int_{-\infty}^{0}
			\exp\left(
			-\frac{(v-\mu_i)^2}{4\mu_i}
			-\frac{\sigma^4 v^2}{8\sigma_i^2}
			\right)\mathrm{d}v \\
			&=
			\frac{\exp({-\mu_i/4})}
			{2\sqrt{4\pi\mu_i}}
			\int_{-\infty}^{0}
			\exp\left[
			-\left(
			\frac{1}{4\mu_i}
			+\frac{\sigma^4}{8\sigma_i^2}
			\right)v^2
			+\frac{v}{2}
			\right]
			\mathrm{d}v.
		\end{aligned}
	\end{equation}

	To obtain a tractable expression, the exponent term of (\ref{proof5-2}) can be rearranged by completing the square as
	\begin{equation} \label{proof5-3}
		- \left( {\frac{1}{{4{\mu _i}}} + \frac{{{\sigma ^4}}}{{8\sigma _i^2}}} \right){v^2} + \frac{v}{2} =  - \left( {\frac{1}{{4{\mu _i}}} + \frac{{{\sigma ^4}}}{{8\sigma _i^2}}} \right){\left[ {v - \frac{1}{{4\left( {\frac{1}{{4{\mu _i}}} + \frac{{{\sigma ^4}}}{{8\sigma _i^2}}} \right)}}} \right]^2} + \frac{1}{{16\left( {\frac{1}{{4{\mu _i}}} + \frac{{{\sigma ^4}}}{{8\sigma _i^2}}} \right)}}.
	\end{equation}
	Note that the above mentioned transformation converts the integrand into a standard Gaussian form.
	It further facilitates the evaluation of the integral.
	Based on (\ref{proof5-3}), (\ref{proof5-2}) can be computed as
	\begin{equation} \label{proof5-4}
		\begin{aligned}
			& \Pr \left( {{\cal L}'\left( {{u_i}} \right) \ge 0,{\cal L}\left( {{u_i}} \right) < 0\mid \hat x_0^{i - 1} = {{\bf{0}}_i},\hat u_0^{i - 1} = {{\bf{0}}_i}} \right) \\
			&\approx
			\frac{\exp({-\mu_i/4})}
			{2\sqrt{4\pi\mu_i}}
			\exp\!\left[
			\frac{1}
			{16\left(
				\frac{1}{4\mu_i}
				+\frac{\sigma^4}{8\sigma_i^2}
				\right)}
			\right] \\
			&\quad\times
			\int_{-\infty}^{0}
			\exp\!\left\{
			-\left(
			\frac{1}{4\mu_i}
			+\frac{\sigma^4}{8\sigma_i^2}
			\right)
			\left[
			v-
			\frac{1}
			{4\left(
				\frac{1}{4\mu_i}
				+\frac{\sigma^4}{8\sigma_i^2}
				\right)}
			\right]^2
			\right\}
			\text{d}v.
		\end{aligned}
	\end{equation}
	According to the Gaussian integral formula \cite{Book-Papoulis2002-Stochastic_Prcesses}, i.e.,
	\[
	\int_{ - \infty }^0 {{\exp[{ - a{{(v - b)}^2}}]}} {\rm{d}}v = \sqrt {\frac{\pi }{a}} Q(\sqrt {2a} b),
	\]
	(\ref{proof5-4}) can be rewritten as 
	\begin{equation} \label{proof5-5}
		\begin{aligned}
			& \Pr \left( {{\cal L}'\left( {{u_i}} \right) \ge 0,{\cal L}\left( {{u_i}} \right) < 0\mid \hat x_0^{i - 1} = {{\bf{0}}_i},\hat u_0^{i - 1} = {{\bf{0}}_i}} \right) \\		
			& \approx \frac{{{\exp({ - {\mu _i}/4})}}}{{2\sqrt {4{\mu _i}} }}\frac{{\exp \left[ {\frac{1}{{16\left( {\frac{1}{{4{\mu _i}}} + \frac{{{\sigma ^4}}}{{8\sigma _i^2}}} \right)}}} \right]}}{{\sqrt {\frac{1}{{4{\mu _i}}} + \frac{{{\sigma ^4}}}{{8{\sigma _i}}}} }}Q\left( {\frac{1}{{2\sqrt {2\left( {\frac{1}{{4{\mu _i}}} + \frac{{{\sigma ^4}}}{{8\sigma _i^2}}} \right)} }}} \right).
		\end{aligned}
	\end{equation}
	Finally, based on Property \ref{property-Q_func}, (\ref{proof5-5}) can be formulated as
	\[
	\Pr \left( {\mathcal{L}'\left( {{u_{{i}}}} \right) \ge 0,\mathcal{L}\left( {{u_{{i}}}} \right) < 0 \mid \hat x_0^{{i} - 1} = {{\mathbf{0}}_{{i}}},\hat u_0^{{i} - 1} = {{\mathbf{0}}_{{i}}}} \right) \approx \frac{1}{4}\sqrt {\frac{{2\sigma _i^2}}{{2\sigma _i^2 + {\sigma ^4}{\mu _i}}}} \exp \left( { - \frac{{{\mu _i}}}{4}} \right).
	\]

\subsection{Proof of Lemma \ref{lemma-probability_stable}} \label{appendix_B}

	
	Based on Theorem \ref{theorem-perturbation-all}, since ${\cal L}'\left( {{u_i}} \right) = {\cal L}\left( {{u_i}} \right) + {\cal L} ( {z_i^{*}} )$, 
	\begin{equation} \label{proof6-1}
		\begin{aligned}
			&\Pr \left( {{\cal L}'\left( {{u_i}} \right) \ge 0,{\cal L}\left( {{u_i}} \right) \ge 0
				\mid \hat x_0^{i - 1} = {{\bf 0}_i},\hat u_0^{i - 1} = {{\bf 0}_i} } \right) \\
			& = \Pr \left( {{\cal L}\left( {{u_i}} \right) + {\cal L}(z_i^*) \ge 0,
				{\cal L}\left( {{u_i}} \right) \ge 0} \right) \\
			& = \Pr \left( { {\cal L}(z_i^*) \ge -{\cal L}\left( {{u_i}} \right),
				{\cal L}\left( {{u_i}} \right) \ge 0} \right).
		\end{aligned}
	\end{equation}
	Since $\{{\cal L}(z_i^*) \ge -{\cal L}(u_i),{\cal L}(u_i)\ge0\}$ and $\{{\cal L}(z_i^*) < -{\cal L}(u_i),{\cal L}(u_i)\ge0\}$	form a partition of the event
	$\{{\cal L}(u_i)\ge0\}$, it follows that	
	\begin{equation} \label{proof6-2}
		\Pr\!\left(
		{\cal L}(z_i^*) \ge -{\cal L}(u_i),
		{\cal L}(u_i)\ge0
		\right)
		=
		\Pr\!\left(
		{\cal L}(u_i)\ge0
		\right)
		-
		\Pr\!\left(
		{\cal L}(z_i^*) < -{\cal L}(u_i),
		{\cal L}(u_i)\ge0
		\right).
	\end{equation}
	As mentioned in Sec. \ref{sec4-1-Charact}, under GA, since ${\mathcal {L}}\left( {{u_i}} \right)	\sim {\cal N}\left( {\mu_i,2\mu_i} \right)$, there exists
	\begin{equation} \label{proof6-3}
		\Pr \left( {{\cal L}\left( {{u_i}} \right) \ge 0} \right) = Q\left( { - \sqrt {\frac{{{\mu _i}}}{2}} } \right),
	\end{equation}
	where $Q(\cdot)$ denotes the $Q$-function, as mentioned in Sec. \ref{sec2}.

	Let $v = \mathcal{L}({u_i})$, where $v < 0$.
	Similar to the proof of Lemma \ref{lemma-probability_correct}, it can be obtained that 
	\begin{equation} \label{proof6-4}
		\Pr \left( {{\cal L}(z_i^*) <  - {\cal L}({u_i}),{\cal L}({u_i}) \ge 0} \right) = \int_0^{+\infty}  {{\rm{Pr}}} ({\cal L}(z_i^*) <  - v){f_{{\cal L}({u_i})}}(v){\mkern 1mu} {\rm{d}}v,
	\end{equation}
	where ${f_{{\cal L}\left( {{u_i}} \right)}}$ denotes the PDF of ${\cal L}\left( {{u_i}} \right)$.
	Since $\mathcal{L}\left( {z_i^*} \right) \sim \mathcal{N}(0,4\sigma _i^2/{\sigma ^4})$, it follows that
	\begin{equation} \label{proof6-5}
		{\rm{Pr}}({\cal L}(z_i^*) <  - v) = Q\left( {\frac{{{\sigma ^2}v}}{{2{\sigma _i}}}} \right).
	\end{equation}
	Combining (\ref{proof5-1-2}) and (\ref{proof6-5}) with Property \ref{property-Q_func}, (\ref{proof6-4}) can be computed as 	
	\begin{equation} \label{proof6-6}
		\begin{aligned}
			\Pr \left( {{\cal L}(z_i^*) <  - {\cal L}({u_i}),{\cal L}({u_i}) \ge 0} \right) 
			& = \frac{1}{{\sqrt {4\pi {\mu _i}} }}
			\int_0^{+\infty}
			Q\left( {\frac{{{\sigma ^2}v}}{{2{\sigma _i}}}} \right)
			\exp \left( { - \frac{{{{(v - {\mu _i})}^2}}}{{4{\mu _i}}}} \right)
			{\rm d}v \\
			&\approx \frac{1}{{2\sqrt {4\pi {\mu _i}} }}
			\int_0^{+\infty}
			\exp \left(
			{ - \frac{{{{(v - {\mu _i})}^2}}}{{4{\mu _i}}}
				- \frac{{{\sigma ^4}{v^2}}}{{8\sigma _i^2}}}
			\right)
			{\rm d}v \\
			& = \frac{{\exp \left( { - {\mu _i}/4} \right)}}{{2\sqrt {4\pi {\mu _i}} }}\int_0^{+\infty}  {\exp } \left[ { - \left( {\frac{1}{{4{\mu _i}}} + \frac{{{\sigma ^4}}}{{8\sigma _i^2}}} \right){v^2} + \frac{v}{2}} \right]{\text{d}}v.
		\end{aligned}
	\end{equation}
	Based on (\ref{proof5-3}), (\ref{proof6-6}) can be rewritten as
	\begin{equation} \label{proof6-7}
		\begin{aligned}			
			\Pr \left( {{\cal L}(z_i^*) <  - {\cal L}({u_i}),{\cal L}({u_i}) \ge 0} \right)
			\approx &
			\frac{\exp({-\mu_i/4})}
			{2\sqrt{4\pi\mu_i}}
			\exp\!\left[
			\frac{1}
			{16\left(
				\frac{1}{4\mu_i}
				+\frac{\sigma^4}{8\sigma_i^2}
				\right)}
			\right] \\
			&\times
			\int_{0}^{+\infty}
			\exp\!\left\{
			-\left(
			\frac{1}{4\mu_i}
			+\frac{\sigma^4}{8\sigma_i^2}
			\right)
			\left[
			v-
			\frac{1}
			{4\left(
				\frac{1}{4\mu_i}
				+\frac{\sigma^4}{8\sigma_i^2}
				\right)}
			\right]^2
			\right\}
			\text{d}v.
		\end{aligned}
	\end{equation}
	According to the Gaussian integral formula \cite{Book-Papoulis2002-Stochastic_Prcesses}, i.e.,
	\[
	\int_{0}^{+\infty} \exp[{-a(v-b)^2}] \text{d}v = \sqrt{\frac{\pi}{a}} \left[ 1 - Q(\sqrt{2ab}) \right],
	\]
	(\ref{proof6-7}) can be computed as  
	\begin{equation} \label{proof6-8}
		\begin{aligned}
			& \Pr \left( {\mathcal{L}(z_i^*) < -\mathcal{L}({u_i}),\mathcal{L}({u_i}) \ge 0} \right)\\
			& \approx \frac{{\exp ( - {\mu _i}/4)}}{{2\sqrt {4{\mu _i}} }}
			\frac{{\exp \left[ {\frac{1}{{16\left( {\frac{1}{{4{\mu _i}}} + \frac{{{\sigma ^4}}}{{8\sigma _i^2}}} \right)}}} \right]}}{{\sqrt {\frac{1}{{4{\mu _i}}} + \frac{{{\sigma ^4}}}{{8\sigma _i^2}}} }} 
			\left[ {1 - Q\left( {\frac{1}{{2\sqrt {2\left( {\frac{1}{{4{\mu _i}}} + \frac{{{\sigma ^4}}}{{8\sigma _i^2}}} \right)} }}} \right)} \right].
		\end{aligned}
	\end{equation}
	Further, based on Property \ref{property-Q_func}, (\ref{proof6-1}) can be written as
	\begin{equation} \label{proof6-9}
		\begin{aligned}
			& \Pr \left( {\mathcal{L}(z_i^*) <  - \mathcal{L}({u_i}),\mathcal{L}({u_i}) \ge 0} \right)\\
			& \approx \frac{1}{2}\sqrt {\frac{{2\sigma _i^2}}{{2\sigma _i^2 + {\sigma ^4}{\mu _i}}}} \exp \left( { - \frac{{{\sigma ^4}\mu _i^2}}{{8\sigma _i^2 + 4{\sigma ^4}{\mu _i}}}} \right) - \frac{1}{4}\sqrt {\frac{{2\sigma _i^2}}{{2\sigma _i^2 + {\sigma ^4}{\mu _i}}}} \exp \left( { - \frac{{{\mu _i}}}{4}} \right).
		\end{aligned}
	\end{equation}	
	Finally, combining (\ref{proof6-2}), (\ref{proof6-3}) and (\ref{proof6-9}), (\ref{proof6-1}) can be formulated as
	\[
	\begin{aligned}
		& \Pr \left( {\mathcal{L}'\left( {{u_i}} \right) \ge 0,\mathcal{L}\left( {{u_i}} \right) \ge 0\mid \hat x_0^{i - 1} = {{\mathbf{0}}_{i}},\hat u_0^{i - 1} = {{\mathbf{0}}_{i}}} \right)\\
		& \approx Q\left( { - \sqrt {\frac{{{\mu _i}}}{2}} } \right) + \frac{1}{4}\sqrt {\frac{{2\sigma _i^2}}{{2\sigma _i^2 + {\sigma ^4}{\mu _i}}}} \exp \left( { - \frac{{{\mu _i}}}{4}} \right) - \frac{1}{2}\sqrt {\frac{{2\sigma _i^2}}{{2\sigma _i^2 + {\sigma ^4}{\mu _i}}}} \exp \left( { - \frac{{{\sigma ^4}\mu _i^2}}{{8\sigma _i^2 + 4{\sigma ^4}{\mu _i}}}} \right).
	\end{aligned}
	\]

\bibliographystyle{IEEEtran}      
\bibliography{IEEEabrv,ref}

\begin{IEEEbiographynophoto}{Zhongjun Yang}
	(Graduate Student Member, IEEE) received the B.Sc. and M.Sc. degrees in information and communication engineering from North China University of Technology, Beijing, China, in 2019 and 2022, respectively, and the Ph.D. degree in information and communication engineering from Sun Yat-sen University, Guangzhou, China, in 2026.
	His research interests include channel coding, polar codes, and data communications.
\end{IEEEbiographynophoto}

\begin{IEEEbiographynophoto}{Zhicheng Liu}
	(Member, IEEE) received the B.Sc. degree in statistics, and the Ph.D. degree in probability and mathematical statistics from Beijing Jiaotong University, Beijing, China, in 2020 and 2025, respectively.	
	He is currently a Postdoctoral Researcher with the School of Mathematics, Sichuan University, Chengdu, China. 
	His research interests include mathematical statistics, probability theory, and channel coding.
	
\end{IEEEbiographynophoto}

\begin{IEEEbiographynophoto}{Li Chen}
	
	(Senior Member, IEEE) received the B.Sc. degree in applied physics from Jinan University, China, in 2003, and the M.Sc. degree in communications and signal processing and the Ph.D. degree in communications engineering from Newcastle University, U.K., in 2004 and 2008, respectively.
	From 2007 to 2010, he was a Research Associate with Newcastle University.
	In 2010, he returned China as a Lecturer of the School of Information Science and Technology, Sun Yat-sen University, Guangzhou.
    He became an Associate Professor and a Professor in 2011 and 2016, respectively.
	His research interests include information theory, error-correction codes, and data communications.
	He is the Chair of the IEEE Information Theory Society Guangzhou Chapter.		
	He has been organizing several international conferences and workshops, including the 2026 IEEE International Symposium on Information Theory (ISIT), 2018 IEEE Information Theory Workshop (ITW) in Guangzhou and the 2022 IEEE East Asian School of Information Theory (EASIT) in Shenzhen, for which he was the General Co-chair. He was also the TPC Co-chair of the 2022 IEEE / CIC International Conference on Communications in China (ICCC) in Foshan.
	He is an Associate Editor (AE) of the IEEE \textsc{Transaction on Information Theory}, and was an AE of the IEEE \textsc{Transaction on Communications} (2018 - 2023).
	He likes reading and photography.
\end{IEEEbiographynophoto}

\begin{IEEEbiographynophoto}{Xianbin Wang}
	(Member, IEEE)  received the B.Sc. and Ph.D. degrees from the Institute of Information and Communication Engineering, Zhejiang University, in 2010 and 2015, respectively. 
	He is currently a Research Engineer with Huawei Technologies Co., Ltd. 
	His	current research interests include channel coding schemes with a focus on algorithm design and hardware implementations.
\end{IEEEbiographynophoto}

\begin{IEEEbiographynophoto}{Huazi Zhang}
	
	(Senior Member, IEEE)  received his B.Sc. and Ph.D degrees from Institute of Information and Communication Engineering in 2008 and 2013, respectively, from Zhejiang University. From 2011 to 2013, he was a visiting researcher with the Department of Electrical and Computer Engineering, North Carolina State University, Raleigh, NC, USA.	
	From 2013 to 2014, he was a Research Fellow with the School of Computer Engineering, Nanyang Technological University, Singapore. From 2014 to 2015, he was a Research Scholar with the Department of Electrical and Computer Engineering, University of Florida, Gainesville, FL, USA. He has over a hundred journal, conference publications and patent applications in the areas of coding techniques, wireless communications and signal processing.	
	He has served on the Technical Program Committee of communications theory, wireless communications and signal processing symposia for IEEE Globecom and ICC since 2013. He was an Exemplary Reviewer for IEEE Transactions on Communications in the year of 2014. He joined Huawei Technologies Co., Ltd in 2015. Since then, he has engaged in research projects on advanced wireless communications involving channel coding and signal processing techniques, and engaged in multiple standardization activities. His contribution led to the adoption of many state-of-the-art research results into 5G standards, as well as the subsequent commercial rollout. 
	He is an Associate Editor (AE) of the IEEE \textsc{Transaction on Communications}.
	His current research interests are channel coding, information theory and signal processing for wireless communications, with focus on theoretical analysis, algorithm design and hardware implementations for 5G and beyond.
\end{IEEEbiographynophoto}

\end{document}